\newtheorem{theorem}{Theorem}
\newtheorem{proposition}{Proposition}
\newtheorem{lemma}{Lemma}
\newtheorem{definition}{Definition}
\newtheorem{corollary}{Corollary}
\newtheorem{remark}{Remark}
\newenvironment{proof}[1][Proof]{\noindent\textbf{#1.} }{\ \rule{0.5em}{0.5em}}
\newtheorem{example}{Example}
\begin{document}

\title{Curvature of fluctuation geometry and its implications on Riemannian fluctuation theory}
\author{L. Velazquez}
\address{Departamento de F\'\i sica, Universidad Cat\'olica del Norte, Av. Angamos 0610, Antofagasta,
Chile.}

\begin{abstract}
Fluctuation geometry was recently proposed as a counterpart approach of Riemannian geometry of inference theory (widely known as \emph{information geometry}). This theory describes the geometric features of the statistical manifold $\mathcal{M}$ of random events that are described by a family of continuous distributions $dp(x|\theta)$. A main goal of this work is to clarify the statistical relevance of Levi-Civita curvature tensor $R_{ijkl}(x|\theta)$ of the statistical manifold $\mathcal{M}$. For this purpose, the notion of \emph{irreducible statistical correlations} is introduced. Specifically, a distribution $dp(x|\theta)$ exhibits irreducible statistical correlations if every distribution $dp(\check{x}|\theta)$ obtained from $dp(x|\theta)$ by considering a coordinate change $\check{x}=\phi(x)$ cannot be factorized into independent distributions as $dp(\check{x}|\theta)=\prod_{i}dp^{(i)}(\check{x}^{i}|\theta)$. It is shown that the curvature tensor $R_{ijkl}(x|\theta)$ arises as a direct indicator about the existence of irreducible statistical correlations. Moreover, the curvature scalar $R(x|\theta)$ allows to introduce a criterium for the applicability of the \emph{gaussian approximation} of a given distribution function. This type of asymptotic result is obtained in the framework of the second-order geometric expansion of the distributions family $dp(x|\theta)$, which appears as a counterpart development of the high-order asymptotic theory of statistical estimation. 

In physics, fluctuation geometry represents the mathematical apparatus of a Riemannian extension for Einstein's fluctuation theory of statistical mechanics. Some exact results of fluctuation geometry are now employed to derive the \emph{invariant fluctuation theorems}. Moreover, the curvature scalar allows to express some asymptotic formulae that account for the system fluctuating behavior beyond the gaussian approximation, e.g.: it appears as a second-order correction of Legendre transformation between thermodynamic potentials, $P(\theta)=\theta_{i}\bar{x}^{i}-s(\bar{x}|\theta)+k^{2}R(x|\theta)/6$.
\newline
\newline
PACS numbers: 02.50.-r; 02.40.Ky; 05.45.-a; 02.50.Tt \newline
Keywords: Geometrical methods in statistics; Fluctuation theory
\end{abstract}

\section{Introduction}

While attempting to obtain more general fluctuation theorems for systems in thermodynamic equilibrium, Curilef and I have discovered a remarkable analogy between fluctuation theory and inference theory \cite{Vel.GEFT}. The analysis of this analogy and its related mathematical questions motivates, by itself, a revision of foundations of classical statistical mechanics. For example, these statistical theories exhibit certain inequalities in the form of \emph{uncertainty-like relations}, e.g.: particular forms of Cramer-Rao theorem and their counterparts in the framework of fluctuation theory \cite{Vel.Book}. Such inequalities provide a strong support to Bohr's conjecture about the existence of complementary quantities in any physical theory with a statistical apparatus \cite{Bohr}-\cite{Uffink}. This viewpoint was employed in Ref.\cite{Vel.CSM} to proposed a reformulation of principles of classical statistical mechanics starting from the notion of complementarity.

Recently, the same analogy was considered in Ref.\cite{Vel.GEO} to propose a Riemannian extension of Einstein's fluctuation theory. The mathematical apparatus of this development is the \emph{Riemannian geometry of fluctuation theory}, which is hereinafter referred to as \emph{fluctuation geometry} \cite{Vel.FG}. Roughly speaking, fluctuation geometry constitutes a counterpart approach of Riemannian geometry of inference theory in the framework of continuous distributions \cite{Amari}\footnote{Riemannian geometry of inference theory is widely known in the literature as \emph{information geometry} or \emph{Riemannian geometry on statistical manifolds} \cite{Amari}. Nevertheless, the denomination \emph{inference geometry} was previously adopted in Ref.\cite{Vel.FG} to avoid the ambiguity with \emph{fluctuation geometry} and emphasize the existing connections between these two developments. In my opinion, denominations as \emph{information geometry} or \emph{Riemannian geometry on statistical manifolds} can equivalently apply for both Riemannian geometries of fluctuation theory and inference theory.}. I understand that this form of statistical geometry is previously unknown in the literature, so that, its study was recently addressed from an axiomatic perspective in Ref.\cite{Vel.FG}. This paper represents a continuation of this previous work.

Present contribution is devoted to deepen on mathematical aspects and physical implications of fluctuation geometry, in particular, to clarify the statistical relevance of the curvature of this Riemannian geometry and discuss new implications on Riemannian extension of Einstein's fluctuation theory. Previously \cite{Vel.FG}, I conjectured that the curvature notion of fluctuation geometry should account for the existence of \emph{irreducible statistical correlations}. The validity of this conjecture will be analyzed in this work, as well as the role of curvature tensor in the second-order geometric expansion of a continuous distribution. For the sake of self-consistence, this study is preceded by an introduction to fluctuation geometry, which is devoted to discuss some key concepts and results of this statistical development.

\section{An introduction to fluctuation geometry}\label{Antecedents}

\subsection{Statistical manifolds $\mathcal{M}$ and $\mathcal{P}$ and their coordinate representations}

Let us denote by $\mathcal{M}$ a certain universe of random events, and by $\epsilon$ an elementary event of $\mathcal{M}$. The event $\epsilon$ is \emph{elementary} because of the occurrence of any event $\mathcal{A}\in\mathcal{M}$ implies either the occurrence of the event $\epsilon$ or its non-occurrence. As expected, any general event $\mathcal{A}\in\mathcal{M}$ can be regarded as a subset of elementary events. Hereinafter, only elementary events are considered, so that, any elementary event $\epsilon$ will be simply referred to as an event. Let us consider that behavior of random events depends on certain external conditions, which are denoted by $\mathfrak{E}$. The universe of all admissible external conditions $\mathfrak{E}$ constitute a second abstract space $\mathcal{P}$, the space of external conditions. Hereinafter, let us admit that the universe of random events $\mathcal{M}$ and the space of external conditions $\mathcal{P}$ also represent smooth manifolds that are endowed of a \emph{differential structure}. In other words, $\mathcal{M}$ and $\mathcal{P}$ are \emph{differentiable manifolds} (they are locally similar enough to real spaces to allow the development of differential and integral calculus). For the sake of convenience, let us assume that the manifold $\mathcal{M}$ ($\mathcal{P}$) exhibits a diffeomorphism with the real space $\mathbb{R}^{n}$ ($\mathbb{R}^{m}$).

Let us consider that the manifolds of random events $\mathcal{M}$ and the external conditions $\mathcal{P}$ are \emph{abstract mathematical objects}. From the physical viewpoint, one can perform a quantitative characterization about the occurrence of a given event $\epsilon$ throughout measuring of certain \emph{observable quantities}. Of course, any observable that is measured in this context is a \emph{random quantity}. Mathematically speaking, a random quantity is defined as a \emph{real function} $\sigma(\epsilon)$ \emph{of random events}, that is, a map of the statistical manifold $\mathcal{M}$ on the one-dimensional real space $\mathbb{R}$, $\sigma:\mathcal{M}\rightarrow \mathbb{R}$. Let us now consider a set of $n$ independent random quantities $\xi=(\sigma^{1}, \sigma^{2}, \ldots \sigma^{n} )$, and denote by $x=(x^{1},x^{2},\ldots x^{n})$ a certain set of their admissible values. It is said that the set of random quantities $\xi$ is \emph{complete} when the same one constitutes a diffeomorphism $\xi: \mathcal{M}\rightarrow \mathcal{R}_{x}$ between the statistical manifold $\mathcal{M}$ and a certain subset $\mathcal{R}_{x}\subset \mathbb{R}^{n}$\footnote{Notice that a \emph{complete set of random quantities} $\xi$ exhibits a certain analogy with the notion of \emph{complete set of commuting observables} in quantum mechanics, which allows a univocal definition for the state $\Psi$ of a certain system.}. The real subset $\mathcal{R}_{x}$ will be regarded as a \emph{coordinate representation} of the manifold $\mathcal{M}$ in the $n$-dimensional real space $\mathbb{R}^{n}$, while $x=(x^{1},x^{2},\ldots x^{n})$ denotes the coordinates of a certain event $\epsilon\in\mathcal{M}$. Analogously, let us also assume that any realization of the external conditions $\mathfrak{E}$ can be parameterized by a set of continuous real variables $\theta=(\theta^{1},\theta^{2},\ldots,\theta^{m})$ that belong to a subset $\mathcal{R}_{\theta}$ of the $m$-dimensional real space $\mathbb{R}^{m}$. Let us suppose that the correspondence $\theta:\mathcal{P}\rightarrow\mathcal{R}_{\theta}$ represents a diffeomorphism. Hereinafter, the real subset $\mathcal{R}_{\theta}$ will be regarded as a coordinate representation of the manifold $\mathcal{P}$.

One can perform an indirect but complete characterization of the abstract statistical manifolds $\mathcal{M}$ and $\mathcal{P}$ studying the behavior of a complete set of random quantities $\xi$. Specifically, the behavior of these random quantities is fully determined by the knowledge of the family of continuous distributions:
\begin{equation}\label{DF}
dp_{\xi}(x|\theta)=\rho_{\xi}(x|\theta)dx,
\end{equation}
where the nonnegative function $\rho_{\mathbf{\xi}}(x|\theta)$ is the probability density, while $dx$ denotes the ordinary volume element (Lebesgue measure of the $n$-dimensional real space $\mathbb{R}^{n}$). Denoting by $\mathcal{S}\subset\mathcal{R}_{x}$, the integral:
\begin{equation}
p_{\mathbf{\xi}}(\mathcal{S}|\theta)=\int_{x\in\mathcal{S}} dp_{\mathbf{\xi}}(x|\theta)
\end{equation}
provides the probability that the complete set of random quantities $\xi$ takes any value $x\in\mathcal{S}$ under the external conditions $\mathfrak{E}$ with coordinates $\theta=(\theta^{1},\theta^{2},\ldots,\theta^{m})$. Considering the diffeomorphisms  $\xi:\mathcal{M}\rightarrow\mathcal{R}_{x}$ and $\theta:\mathcal{P}\rightarrow\mathcal{R}_{\theta}$, it is evident that continuous distributions family (\ref{DF}) provides a coordinate representation for the abstract distributions family of random events $dp(\epsilon|\mathfrak{E})$:
\begin{equation}
dp(\epsilon|\mathfrak{E})\equiv dp_{\xi}(x|\theta).
\end{equation}
Here, each point $\theta\in\mathcal{R}_{\theta}$ is associated with only one member of the distributions family (\ref{DF}). The set of continuous variables $\theta=(\theta^{1},\theta^{2},\ldots,\theta^{m})$ arise here as \emph{control parameters} of distributions family (\ref{DF}) because of the same ones parameterize the shape of these distributions. Consequently, the statistical manifold of external conditions $\mathcal{P}$ can be also referred to as the \emph{statistical manifold of distribution functions}. The concreted mathematical form of the distributions family (\ref{DF}) can be reconstructed from the experiment using the methods of statistical inference \cite{Lehmann}. The analysis of such statistical methods is outside the interest of the present work. On the contrary, \emph{the main interest here concerns to the information about the abstract statistical manifolds $\mathcal{M}$ and $\mathcal{P}$ that is obtained from the knowledge of the family of continuous distributions} (\ref{DF}).

\begin{example}\label{csm.ex}
In statistical mechanics, Boltzmann-Gibbs distributions:
\begin{equation}\label{BG}
dp_{\xi}(U,O|\beta,w)=\frac{1}{Z(\beta,w)}\exp\left[-\beta(U+wO)\right]\Omega(U,O)dUdO
\end{equation}
are commonly employed to describe the macroscopic behavior of an open classical system in thermodynamic equilibrium \cite{Reichl}, with $\Omega(U,O)$ and $Z(\beta,w)$ being the so-called \emph{states density} and the \emph{partition function}, respectively. An elementary event $\epsilon$ here is that the open system is found in a given macroscopic state. Experimentally, the realization of a given macroscopic state $\epsilon$ can be parameterized by certain set of macroscopic observables $\xi\sim(U,O)$ with a direct mechanical interpretation, such as the internal energy $U$ and the generalized displacements $O=(V,\mathbf{M}, \mathcal{M}, \ldots)$, in particular, the volume $V$, the total angular momentum $\mathbf{M}$, the magnetization $\mathcal{M}$, etc. The external conditions $\mathfrak{E}$ are parameterized by control parameters $\theta=(\beta,w)$ with an intrinsic statistical significance, such as the environmental inverse temperature $\beta=1/kT$ ($k$ is Boltzmann's constant) and the external thermodynamic forces $w=(p,-\mathbf{\omega},-\mathbf{H},\ldots)$, in particular, the external pressure $p$, the rotation frequency $\mathbf{\omega}$, the external magnetic field $\mathbf{H}$, etc.
\end{example}

\begin{example}\label{qm.ex}
Quantum mechanic provides other examples of continuous distributions. A particular case is the spatial distribution of a $N$-body non-relativistic quantum system:
\begin{equation}\label{QM.dist}
dp_{\mathbf{\xi}}(\mathbf{x}|a)=\left|\Psi(\mathbf{x};a)\right|^{2}d^{3N}\mathbf{x}.
\end{equation}
Here, the random elementary event $\epsilon$ is that the quantum system (a set of $N$ non-relativistic microparticles) is found in the positions $\mathbf{x}=(\mathbf{x}_{1},\mathbf{x}_{2},\ldots,\mathbf{x}_{N})$ of the $N$-body configuration space $\mathfrak{P}^{N}$. Experimentally, one needs to adopt certain reference frame to provide a quantitative parameterization of the physical space $\mathfrak{P}$, as well as the consideration of a given coordinate system (cartesian coordinates, polar coordinates, spherical coordinates, etc.). Here, $\Psi(\mathbf{x};a)$ denotes the so-called \emph{wave function}:
\begin{equation}\label{Wave}
\Psi(\mathbf{x};a)=\sum_{k}a_{k}\Psi_{k}(\mathbf{x}),
\end{equation}
which is expanded using certain basis $\left\{\Psi_{k}(\mathbf{x})\right\}$ of the Hilbert space $\mathcal{H}$. The external conditions $\mathfrak{E}$ correspond to the so-called \emph{preparations} of a quantum state \cite{APeres}, whose control parameters $\theta\sim a=(a_{k})$ are the \emph{wave amplitudes}. It is worth remarking that although the wave amplitudes represent a set of complex numbers, each complex number can be represented by an array of two real numbers, so that, the abstract manifold $\mathcal{P}$ can also be represented by a subset of a certain $m$-dimensional real space $\mathbb{R}^{m}$.
\end{example}

\begin{figure}
\begin{center}
\includegraphics[width=3.5in]{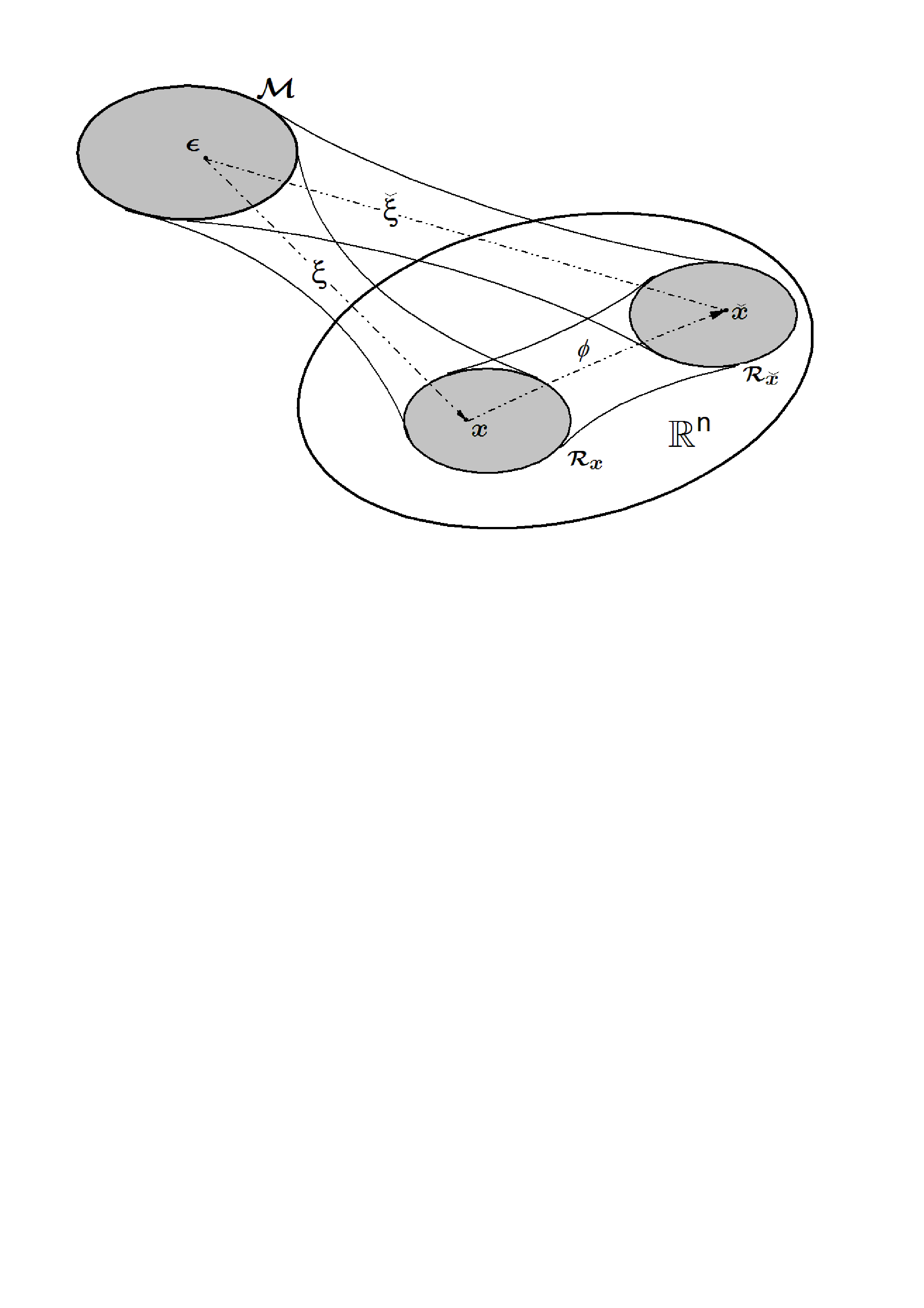}
\end{center}
\caption{The complete sets random quantities $\xi$ and $\check{\xi}$ represent two diffeomorphisms of the abstract statistical manifold $\mathcal{M}$ on the real subsets $\mathcal{R}_{x}$ and $\mathcal{R}_{\check{x}}\in\mathbb{R}^{n}$. The real subsets  $\mathcal{R}_{x}$ and $\mathcal{R}_{\check{x}}$ constitute \emph{coordinate representations} of abstract statistical manifold $\mathcal{M}$. Here, an (elementary) event $\epsilon$ is applied on the points $x=\xi(\epsilon)$ and $\check{x}=\check{\xi}(\epsilon)$. Additionally, it is illustrated the map $\phi:\mathcal{R}_{x}\rightarrow \mathcal{R}_{x}$, which represents a \emph{coordinate change} $\check{x}=\phi(x)$ between the coordinate representations $\mathcal{R}_{x}$ and $\mathcal{R}_{\check{x}}$ of the statistical manifold $\mathcal{M}$. This coordinate transformation also establishes a map among the complete sets of random quantities, $\check{\xi}=\phi(\xi)$.}\label{scheme.eps}
\end{figure}

\subsection{Coordinate changes and diffeomorphic distributions}\label{coord.tansf}

Quantitative characterization of the abstract statistical manifolds $\mathcal{M}$ and $\mathcal{P}$ demands to consider some coordinate representations of them. In particular, one needs to choose a complete set of random quantities $\xi$ to parameterize the occurrence of a given event $\epsilon$.  This choice can always be performed in multiple ways. Let us consider two different complete sets of random quantities $\xi$ and $\check{\xi}$, and let us denote by $x$ and $\check{x}$ certain admissible values of these random quantities (coordinates points of the real subsets $\mathcal{R}_{x}$ and $\mathcal{R}_{\check{x}}$, respectively). Since $\xi$ and $\check{\xi}$ represent two diffeomorphisms of the statistical manifold $\mathcal{M}$, one can introduce the map $\phi=\check{\xi} o \xi^{-1}$. This map defines a diffeomorphism $\phi:\mathcal{R}_{x}\rightarrow \mathcal{R}_{\check{x}}$ between the real subsets $\mathcal{R}_{x}$ and $\mathcal{R}_{\check{x}}$. The previous reasonings implies that two complete sets of random quantities $\check{\xi}$ and $\xi$ are related by a certain map $\phi$, $\check{\xi}\equiv \phi(\xi)$. Expressed in terms of the coordinates $x$ and $\check{x}$, the map $\check{x}=\phi(x)$ will be referred to as a \emph{coordinate change} (or re-parametrization), which is schematically illustrated in figure \ref{scheme.eps}. It is easy to realize that the identity:
\begin{equation}
dp_{\check{\xi}}(\check{x}|\theta)=dp_{\xi}(x|\theta)
\end{equation}
takes place because of the points $x$ and $\check{x}$ correspond to a same elementary event $\epsilon\in\mathcal{M}$, as well as the points $x+dx$ and $\check{x}+d\check{x}$ correspond to an elementary event $\epsilon'\in\mathcal{M}$ that is infinitely close to the event $\epsilon$. Thus, one obtains the following transformation rule for the probability density:
\begin{equation}\label{transf.rule}
\rho_{\check{\xi}}(\check{x}|\theta)=\rho_{\xi}(x|\theta)\left|\frac{\partial \check{x}}{\partial x}\right|^{-1},
\end{equation}
with $\left|\partial \check{x}/\partial x\right|$ being the Jacobian of the coordinate change $\check{x}=\phi(x)$. Analogously, it is possible to consider coordinate change $\check{\theta}=\varphi(\theta)$ for the statistical manifold $\mathcal{P}$, $\varphi:\mathcal{R}_{\theta}\rightarrow\mathcal{R}_{\check{\theta}}$. Let us consider two simple illustration examples.

\begin{example}\label{BMexample}
The family of gaussian distributions:
\begin{equation}\label{ex.twogauss}
dp_{\check{\xi}}(\check{x},\check{y}|\theta)=\frac{1}{2\pi \theta^{2}}\exp\left[-(\check{x}^{2}+\check{y}^{2})/2\theta^{2}\right]d\check{x}d\check{y},
\end{equation}
can be obtained from the distributions family:
\begin{equation}\label{ex.exp}
dp_{\xi}(x,y|\theta)=\frac{1}{2\pi \theta^{2}}\exp\left[-x^{2}/2\theta^{2}\right]xdxdy
\end{equation}
considering the coordinate change $(\check{x},\check{y})=\phi(x,y)$ defined by:
\begin{equation}\label{tt.ex1}
\check{x}=x\cos y\mbox{ and }\check{y}=x\sin y.
\end{equation}
Noteworthy that the real variables $(\check{x},\check{y})\in\mathcal{R}_{\check{x}}\equiv\mathbb{R}^{2}$ and the control parameter $\theta\in\mathcal{R}_{\theta}\equiv\mathbb{R}^{+}$ ($\mathbb{R}^{+}$ is the subset of real positive numbers). On the other hand, $(x,y)\in\mathcal{R}_{x}$ whenever $0\leq x<+\infty$ and $0\leq y\leq2\pi$. As additional restrictions, it is necessary to identify the points $(x,0)$ and $(x,2\pi)$, as well as every point on the segment $(0,y)$.
\end{example}

\begin{example}
Gaussian distribution with mean $\mu$ and variance $\sigma$:
\begin{equation}\label{gauss.introd}
dp_{\xi}(x|\mu,\sigma)=\frac{1}{\sqrt{2\pi}\sigma}\exp\left[-(x-\mu)^{2}/2\sigma^{2}\right]dx
\end{equation}
can be re-parameterized as follows:
\begin{equation}\label{ass.gauss}
dp_{\xi}(x|\check{\theta}^{1},\check{\theta}^{2})=\frac{1}{z(\check{\theta}^{1},\check{\theta}^{2})}\exp
\left[-\check{\theta}^{1}x-\check{\theta}^{2}x^{2}\right]dx
\end{equation}
considering the coordinate change $(\check{\theta}^{1},\check{\theta}^{2})=\varphi(\mu,\sigma)$ defined by:
\begin{equation}\label{tt.ex2}
\check{\theta}^{1}=-\mu/\sigma^{2}\mbox{ and }\check{\theta}^{2}=1/2\sigma^{2}.
\end{equation}
Here, the normalization factor $z(\check{\theta}^{1},\check{\theta}^{2})$ is given by:
\begin{equation}
z(\check{\theta}^{1},\check{\theta}^{2})=\sqrt{\pi/\check{\theta}^{1}}\exp\left[(\check{\theta}^{1})^{2}/4\check{\theta}^{2}\right].
\end{equation}
Moreover, $x\in\mathcal{R}_{x}\equiv \mathbb{R}$, the real subset $\mathcal{R}_{\theta}$ with coordinates $\theta=(\mu,\sigma)$ is the semi-plane of $\mathbb{R}^{2}$ with $\sigma>0$, while the subset $\mathcal{R}_{\check{\theta}}$ with coordinates $\check{\theta}=(\check{\theta}^{1},\check{\theta}^{2})$ is also a semi-plane of $\mathbb{R}^{2}$ with $\check{\theta}^{2}>0$.
\end{example}

The possibility to consider different coordinates representations for the abstract statistical manifolds $\mathcal{M}$ and $\mathcal{P}$ introduces a great flexibility into the statistical analysis. In fact, some coordinate representations are more suitable than others for some practical purposes. For example, coordinate change (\ref{tt.ex1}) is a key assumption to demonstrate the improper integral:
\begin{equation}
\int^{+\infty}_{-\infty}\exp(-x^{2})dx=\sqrt{\pi},
\end{equation}
which is employed to derive the normalization constant of Gaussian distributions. This coordinate change is the basis of Box-Muller transform to generate gaussian pseudo-random numbers \cite{Box-Muller}:
\begin{equation}
x=\mu+\sigma\sqrt{-2\log(\zeta_{1})}\cos(2\pi\zeta_{2}).
\end{equation}
Here, $\zeta_{1}$ and $\zeta_{2}$ are two independent pseudo-random numbers that are uniformly distributed in the interval $(0,1]$. On the other hand, the coordinate change (\ref{tt.ex2}) clearly evidences that the Gaussian distributions (\ref{gauss.introd}) is a member of \emph{exponential family} (notice that Boltzmann-Gibbs distributions (\ref{BG}) also belong to this family). According to Pitman-Koopman theorem \cite{Koopman}, only the exponential family guarantees the existence of \emph{sufficient estimators}. Additionally, the resulting representation (\ref{gauss.introd}) exhibits a more convenient mathematical form to calculate the so-called \emph{Fisher's information matrix} \cite{Fisher} (see Eq.(\ref{Fisher}) below), which allows to establish the Cramer-Rao lower bound of \emph{unbiased estimators} \cite{Rao}. The previous examples motivate the introduction of the notion of \emph{diffeomorphic distributions}.

\begin{definition}
\textbf{Diffeomorphic distributions} are those continuous distributions whose associated complete sets of random quantities $\xi$ and $\check{\xi}$ are related by means of a certain differentiable map $\phi$, $\check{\xi}=\phi(\xi)$; and hence, they can be regarded as two different coordinate representations of a same continuous distribution $dp(\epsilon|\mathfrak{E})$ defined on the abstract statistical manifolds $\mathcal{M}$ and $\mathcal{P}$.
\end{definition}

Two diffeomorphic distributions are fully equivalent from the viewpoint of their geometrical properties. Examples of diffeomorphic distributions are the distributions families (\ref{ex.twogauss}) and (\ref{ex.exp}), as well as distributions families (\ref{gauss.introd}) and (\ref{ass.gauss}). Interestingly, the notion of diffeomorphic distributions comprises some continuous distributions families with a very different statistical behavior.

\begin{example}\label{diffeormorphic}
At first glance, the statistical features of Gaussian distributions (\ref{gauss.introd}) significantly differ from the ones of \emph{Cauchy distributions}:
\begin{equation}\label{lorentzian}
dp_{\check{\xi}}(\check{x}|\nu,\gamma)=\frac{1}{\pi}\frac{\gamma d\check{x}}{\gamma^{2}+(\check{x}-\nu)^{2}}.
\end{equation}
For example, the mean, the variance and every positive integer $n$-th moment of a random quantity $\xi$ that obeys Gaussian distributions (\ref{gauss.introd}) do exit and they are finite, while the ones associated with a random quantity $\check{\xi}$ that obeys Cauchy distributions (\ref{lorentzian}) do not exist (or diverges). However, it is possible to verify that the coordinate change $\phi:\mathcal{R}_{x}\rightarrow \mathcal{R}_{\check{x}}$ defined by:
\begin{equation}\label{map1}
\check{x}=\phi(x|\mu,\sigma;\nu,\gamma)=\nu+\gamma\tan\left[\frac{\pi}{2} \mathrm{erf}\left(\frac{x-\mu}{\sqrt{2}\sigma}\right)\right]
\end{equation}
establishes a diffeomorphism between the distributions families (\ref{gauss.introd}) and (\ref{lorentzian}). Here, $\mathrm{erf}(s)$ denotes the error function:
\begin{equation}
\mathrm{erf}(s)=\frac{2}{\sqrt{\pi}}\int^{s}_{0}e^{-\tau^{2}}d\tau.
\end{equation}
Consequently, distributions families (\ref{gauss.introd}) and (\ref{lorentzian}) are diffeomorphic distributions.
\end{example}
\begin{remark}
Continuous distributions families whose abstract statistical manifolds $\mathcal{M}$ are diffeomorphic to the one-dimensional real space $\mathbb{R}$ are diffeomorphic distributions.
\end{remark}
\begin{proof}
Let us consider two different distributions families $dp_{\xi}(x|\theta)$ and $dp_{\check{\xi}}(\check{x}|\check{\theta})$ of this class of distributions. Let us now consider their cumulative distribution functions:
\begin{equation}\label{cdf}
p_{\xi}(y|\theta)=\int^{y}_{x_{min}}\rho_{\xi}(x|\theta)dx\mbox{ and }p_{\check{\xi}}(\check{y}|\check{\theta})=\int^{\check{y}}_{\check{x}_{min}}\rho_{\check{\xi}}(\check{x}|\check{\theta})d\check{x}
\end{equation}
with $x_{min}$ and $\check{x}_{min}$ being the minimum admissible values of the random quantities $\xi$ and $\check{\xi}$. By definition, the cumulative distribution functions (\ref{cdf}) are absolutely continuous and differentiable, so that, these functions represent diffeomorphisms of the real subsets $\mathcal{R}_{x}$ and $\mathcal{R}_{\check{x}}$ on the interval $(0,1)\subset \mathbb{R}$. The coordinate change $\check{x}=\phi(x|\theta,\check{\theta})$ defined by:
\begin{equation}\label{map2}
\check{x}=p^{-1}_{\check{\xi}}\left[p_{\xi}(x|\theta)|\check{\theta}\right],
\end{equation}
represents a diffeomorphism $\phi:\mathcal{R}_{x}\rightarrow \mathcal{R}_{\check{x}}$ between the real one-dimensional subsets $\mathcal{R}_{x}$ and $\mathcal{R}_{\check{x}}$.
\end{proof}

The coordinate change (\ref{map1}) is a particular case of the map (\ref{map2}). Noteworthy that this type of coordinate changes is much general than the coordinate change (\ref{tt.ex1}) because of it also involves the control parameters of the associated distributions families. In computational applications, the map (\ref{map2}) is the basis of the so-called \emph{inverse transformation method} for nonuniform pseudo-random number sampling \cite{Devroye}.

\subsection{Relative statistical properties}\label{relative}

In principle, family of continuous distributions (\ref{DF}) contains all the necessary information about the distribution function $dp(\epsilon|\mathfrak{E})$ defined on the abstract statistical manifolds $\mathcal{M}$ and $\mathcal{P}$. However, this family also provides information that is \emph{relative} to their concrete coordinate representations $\mathcal{R}_{x}$ and $\mathcal{R}_{\theta}$. An obvious relative property is the mathematical form of these distributions. According to transformation rule (\ref{transf.rule}), the local values of the probability density are generally modified by a coordinate change. The relative character of some properties of continuous distributions put in evidence the restricted applicability of certain statistical notions.

Strictly speaking, the probability that a continuous random quantity $\xi$ takes a given value $x$ is zero. Nevertheless, an usual question in many practical applications is to find out \emph{the most likely value} $\bar{x}$ of a random quantity $\xi$. A criterium widely employed is to identify the point $\bar{x}$ where the associated probability density $\rho_{\xi}(x|\theta)$ reaches a global maximum. As expected, the point $\bar{x}\in\mathcal{R}_{x}$ univocally parameterizes the occurrence of a random event $\bar{\epsilon}_{x}\in\mathcal{M}$, so that, one may be tempted to regard $\bar{\epsilon}_{x}$ as \emph{the most likely event}. However, a re-examination of this argument evidences its restricted applicability. It is easy to realize that the most likely elementary event associated with this criterium crucially depends on the coordinate representation $\mathcal{R}_{x}$. A simple illustration of this fact is shown in figure \ref{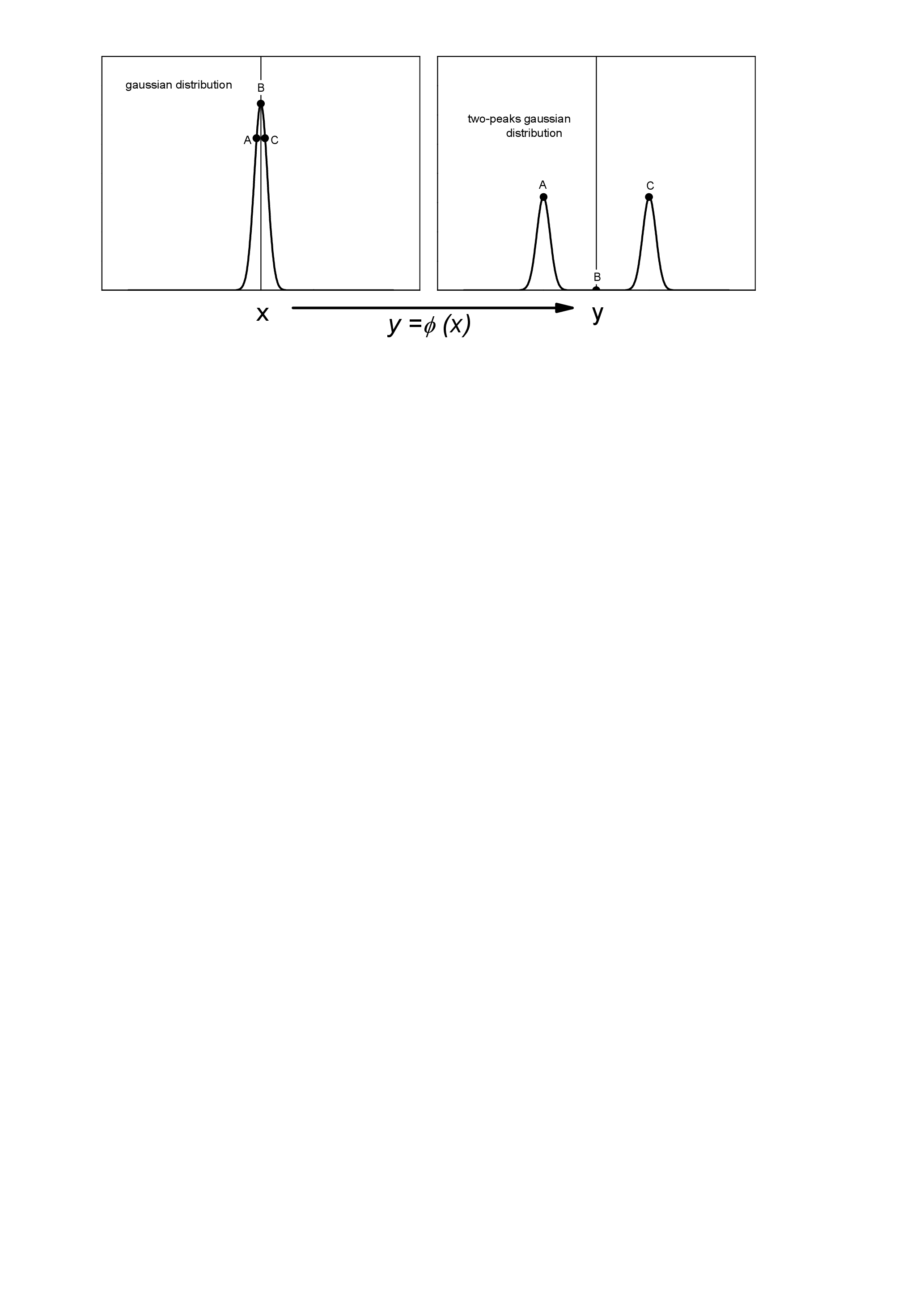}, where the point global maximum of a gaussian distribution turns a point of global minimum of a two-peaks gaussian distribution using an appropriate coordinate change of the form (\ref{map2}).

The notion of diffeomorphic distributions reveals an inconsistence associated with the concept of \emph{information entropy for continuous distributions}. Conventionally, the information entropy is introduced as \emph{a global measure of unpredictability (or uncertainty) of a random quantity} $\xi$. For random quantities $\xi$ that exhibit a discrete spectrum of admissible values $\left\{x_{k}\right\}$, the information entropy is defined as follows:
\begin{equation}\label{inf.entropy}
S\left[\xi|\theta\right]=-\sum_{k}p_{\xi}(x_{k}|\theta)\log p_{\xi}(x_{k}|\theta),
\end{equation}
with $p_{\xi}(x_{k}|\theta)$ being the probability of the $k$-th admissible values of the set of random quantities $\xi$. The usual extension of this notion in the framework of continuous distributions is given by the following integral (in Lebesgue's sense):
\begin{equation}\label{de.Lebesgue}
S_{d}(\xi|\theta)=-\int_{x\in\mathcal{R}_{x}}\log\left[\rho_{\xi}(x|\theta)\right]\rho_{\xi}(x|\theta)dx,
\end{equation}
which is referred to as \emph{differential entropy} in the literature \cite{Dekking}. According to the transformation rule (\ref{transf.rule}), the information entropy (\ref{de.Lebesgue}) provides different values for those random quantities $\xi$ and $\check{\xi}$ that are related by a diffeomorphism $\check{\xi}=\phi(\xi)$:
\begin{equation}
S_{d}(\check{\xi}|\theta)-S_{d}(\xi|\theta)=\left\langle\log\left| \partial \check{x}/\partial x\right|\right\rangle
=\int_{x\in\mathcal{R}_{x}}\log\left|\partial \check{x}/\partial x\right|\rho_{\xi}(x|\theta)dx.
\end{equation}
Consequently, differential entropy (\ref{de.Lebesgue}) is a relative statistical property. However, this fact contrast with the notion that diffeomorphic distributions actually represent different coordinate representations of a same abstract distribution. According to this interpretation, \emph{diffeomorphic distributions should exhibit the same value of information entropy}! For example, continuous distributions show in figure \ref{entropy.eps} are diffeomorphic distributions, and hence, they should exhibit the same amount of information entropy.

The lack of invariance of differential entropy (\ref{de.Lebesgue}) was emphasized by Jaynes \cite{Jaynes}. This author proposed to overcome this inconsistence introducing other positive measure defined on the statistical manifold $\mathcal{M}$:
\begin{equation}\label{mu}
d\mu(x)=\varrho(x)dx,
\end{equation}
and redefining (\ref{de.Lebesgue}) as follows:
\begin{equation}\label{de.relative}
S^{\mu}_{d}(\xi|\theta)=-\int_{x\in\mathcal{R}_{x}}\left[dp_{\xi}(x|\theta)/d\mu(x)\right]\log\left[dp_{\xi}(x|\theta)/d\mu(x)\right]d\mu(x).
\end{equation}
At first glance, the \emph{relative entropy} (\ref{de.relative}) is similar to the \emph{Kullback-Leibler divergence} \cite{Kullback}, but its meaning is different, overall, because of the measure (\ref{mu}) is not necessarily a probability distribution. However, I think that the ansatz (\ref{de.relative}) is not a suitable extension for the information entropy (\ref{inf.entropy}). For example, equation (\ref{inf.entropy}) only depends on the discrete distribution, while definition (\ref{de.relative}) involves a second independent measure $d\mu(x)$. A natural question here is how to introduce the measure (\ref{mu}) when no other information is available, except the continuous distribution (\ref{DF}).

\begin{figure}
\begin{center}
  \includegraphics[width=4.5in]{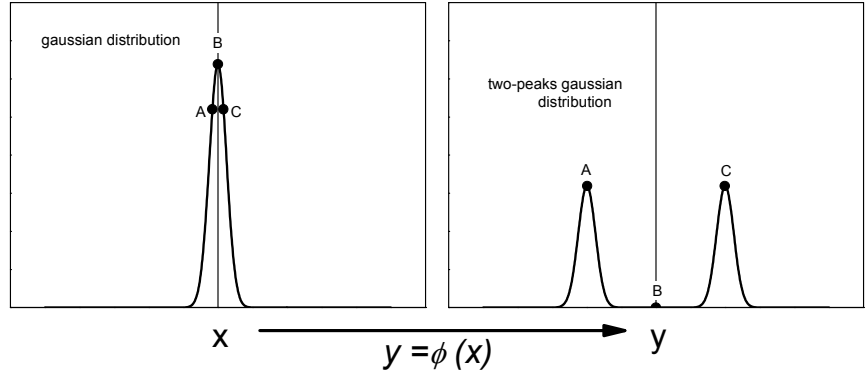}
\end{center}
\caption{According to definition (\ref{de.Lebesgue}), the random quantity $\xi$ that obeys a two-peaks gaussian distribution with well-separated peaks of width $\sigma$ exhibits an amount of information entropy $\delta S\simeq2\log 2$ larger than the random quantity $\check{\xi}$ that obeys a gaussian distribution with only one peak and the same width $\sigma$. These distributions are \emph{diffeormorphic distributions} because of they are related by a coordinate change $y=\phi(x)$ of the form (\ref{map2}). Although counterintuitive, these distributions should exhibit the same amount of information entropy. The points \textbf{A}, \textbf{B} and \textbf{C} of each distribution are related by the map $y=\phi(x)$. As expected, global maxima of a continuous distribution can be modified in a radical way under a coordinate change.}\label{entropy.eps}
\end{figure}

\subsection{Riemannian geometries of the statistical manifolds $\mathcal{M}$ and $\mathcal{P}$}\label{R.structures}

Statistical theory should enable us to characterize those \emph{absolute (or intrinsic) properties} of the abstract statistical manifolds $\mathcal{M}$ and $\mathcal{P}$ without reference to particular coordinate representations, that is, to perform a \emph{coordinate-free treatment}\footnote{A coordinate-free treatment of a scientific theory develops its concepts on any form of manifold without reference to any particular coordinate system. Coordinate-free treatments generally allow for simpler systems of equations and inherently constrain certain types of inconsistency, allowing greater mathematical elegance at the cost of some abstraction from the detailed formulae needed to evaluate these equations within a particular system of coordinates.}. This goal can be achieved using the mathematical apparatus of Riemannian geometry \cite{Berger}, in particular, introducing a \emph{Riemannian structure} for the statistical manifolds $\mathcal{M}$ and $\mathcal{P}$.

As pioneering suggested by Rao \cite{Rao}, the statistical manifold $\mathcal{P}$ can be endowed of a \emph{Riemannian structure} using the distance notion:
\begin{equation}\label{inf.dist}
ds^{2}=\mathfrak{g}_{\alpha\beta}(\theta)d\theta^{\alpha}d\theta^{\beta},
\end{equation}
where the metric tensor $\mathfrak{g}_{\alpha\beta}(\theta)$ is the \emph{Fisher's inference matrix} \cite{Fisher}:
\begin{equation}\label{Fisher}
\mathfrak{g}_{\alpha\beta}(\theta)=\int_{\mathcal{M}}\frac{\partial\log\rho_{\mathbf{\xi}}(x|\theta)}{\partial\theta^{\alpha}}
\frac{\partial\log\rho_{\mathbf{\xi}}(x|\theta)}{\partial\theta^{\beta}}
dp_{\mathbf{\xi}}(x|\theta).
\end{equation}
The distance notion of (\ref{inf.dist}) characterizes the statistical separation between two different members of the distributions family (\ref{DF}), that is, a global measure about modification of behavior of random quantities $\xi$ under two external conditions $\mathfrak{E}$ and $\mathfrak{E}'\in\mathcal{P}$ that are infinitely close. As discussed elsewhere \cite{Amari}, this distance is a measure of the \emph{distinguishing probability} of these distributions during a procedure of statistical inference. By its statistical significance, this type of statistical geometry could be referred to as Riemannian geometry of inference theory, or more briefly, \emph{inference geometry}. However, this approach is now widely known as \emph{information geometry} in the literature \cite{Amari}.

Alternatively, the statistical manifold $\mathcal{M}$ can be also endowed of a Riemannian structure using the distance notion:
\begin{equation}\label{fluct.dist}
ds^{2}=g_{ij}(x|\theta)dx^{i}dx^{j},
\end{equation}
where the metric tensor $g_{ij}=g_{ij}(x|\theta)$ should be obtained from the probability density $\rho_{\mathbf{\xi}}=\rho_{\mathbf{\xi}}(x|\theta)$ as the solution of a set of covariant partial differential equations \cite{Vel.FG}:
\begin{equation}  \label{cov.equation}
g_{ij}=-\frac{\partial^{2}\log\rho_{\mathbf{\xi}}}{\partial x^{i}\partial x^{j}}+
\Gamma^{k}_{ij}\frac{\partial\log\rho_{\mathbf{\xi}}}{\partial x^{k}} +\frac{\partial\Gamma^{k}_{jk}}{\partial x^{i}}-\Gamma^{k}_{ij}\Gamma^{l}_{kl}.
\end{equation}
Here, $\Gamma^{k}_{ij}=\Gamma^{k}_{ij}(x|\theta)$ are the \emph{Levi-Civita affine connections} \cite{Berger} (see equation (\ref{Levi-Civita}) below). Equation (\ref{cov.equation}) represents a set of covariant partial differential equations of second-order with respect to the metric tensor $g_{ij}(x|\theta)$. Its covariant character can be demonstrated starting from the transformation rule of the metric tensor:
\begin{equation}\label{metric.tr}
\check{g}_{kl}(\check{x}|\theta)=\frac{\partial x^{i}}{\partial \check{x}^{k}}\frac{\partial x^{j}}{\partial \check{x}^{j}}g_{ij}(x|\theta)
\end{equation}
and the transformation rule of the probability density (\ref{transf.rule}). The distance notion (\ref{fluct.dist}) represents a statistical separation between two infinitely close random events $\epsilon$ and $\epsilon'\in\mathcal{M}$ under the same external conditions $\mathfrak{E}$. This second distance provides a measure about the \emph{relative occurrence probability} of these events. Due to its statistical relevance, this geometry can be referred to as Riemannian geometry of fluctuation theory, or more briefly, \emph{fluctuation geometry} \cite{Vel.FG}.

The above geometries establishes a direct relationship among the statistical properties of the distributions family (\ref{DF}) and the geometric features of the abstract statistical manifolds $\mathcal{M}$ and $\mathcal{P}$. Consequently, these approaches enable us to employ the powerful tools of Riemannian geometry for proving statistical results.

\begin{table}[tbp] \centering
\begin{tabular}{cc}
\hline\hline
\textbf{Notation} & \textbf{Meaning} \\ \hline
\multicolumn{1}{l}{%
\begin{tabular}{l}
$\mathcal{M}${\small \ and }$\mathcal{P}$%
\end{tabular}%
} & \multicolumn{1}{l}{%
\begin{tabular}{c}
{\small statistical manifolds of the random elementary events }$\epsilon $
\\
\multicolumn{1}{l}{\small and the external conditions $\mathfrak{E}$ respectively.}%
\end{tabular}%
} \\
\multicolumn{1}{l}{$%
\begin{array}{l}
x{\small =}\left( x^{1},\ldots x^{i}\ldots x^{n}\right) , \\
\check{x}{\small =}\left( \check{x}^{1},\ldots \check{x}^{i}\ldots \check{x}%
^{n}\right)
\end{array}%
$} & \multicolumn{1}{l}{%
\begin{tabular}{l}
{\small general coordinates of the manifold }$\mathcal{M}$%
\end{tabular}%
} \\
\multicolumn{1}{l}{%
\begin{tabular}{l}
$\mathcal{R}_{x}${\small \ and }$\mathcal{R}_{\check{x}}$%
\end{tabular}%
} & \multicolumn{1}{l}{%
\begin{tabular}{l}
{\small coordinates representations of the manifold }$\mathcal{M}$%
\end{tabular}%
} \\
\multicolumn{1}{l}{%
\begin{tabular}{l}
$\phi :\mathcal{R}_{x}\rightarrow \mathcal{R}_{\check{x}}$%
\end{tabular}%
} & \multicolumn{1}{l}{%
\begin{tabular}{c}
{\small coordinate change of }$\mathcal{M}$%
\end{tabular}%
} \\
\multicolumn{1}{l}{%
\begin{tabular}{l}
$\left( \ell ,q\right) ${\small \ with }$q=\left( q^{1},q^{2},\ldots
q^{n-1}\right) $ \\
{\small and }$\mathcal{R}_{\rho }$%
\end{tabular}%
} & \multicolumn{1}{l}{%
\begin{tabular}{l}
{\small radial and angular coordinates associated with} \\
{\small the spherical representation of }$\mathcal{M}$%
\end{tabular}%
} \\
\multicolumn{1}{l}{%
\begin{tabular}{l}
$\theta {\small =}\left( \theta ^{1},\ldots \theta ^{\alpha },\ldots \theta
^{m}\right) $%
\end{tabular}%
} & \multicolumn{1}{l}{%
\begin{tabular}{l}
{\small coordinates (control parameters) of the manifold }$\mathcal{P}$%
\end{tabular}%
} \\
\multicolumn{1}{l}{%
\begin{tabular}{l}
$\mathcal{R}_{\theta }$%
\end{tabular}%
} & \multicolumn{1}{l}{%
\begin{tabular}{l}
{\small coordinate representation of the manifold }$\mathcal{P}${\small \ }%
\end{tabular}%
} \\
\multicolumn{1}{l}{%
\begin{tabular}{l}
$g_{ij}\left( x|\theta \right) ${\small \ and }$\mathfrak{g}_{\alpha \beta
}\left( \theta \right) $%
\end{tabular}%
} & \multicolumn{1}{l}{%
\begin{tabular}{l}
{\small metric tensors of the statistical manifolds }$\mathcal{M}$ and $%
\mathcal{P}$%
\end{tabular}%
} \\
\multicolumn{1}{l}{%
\begin{tabular}{l}
$dx${\small \ and }$d\mu \left( x|\theta \right) $%
\end{tabular}%
} & \multicolumn{1}{l}{%
\begin{tabular}{l}
{\small ordinary volume element (Lebesgue measure) and invariant } \\
{\small volume element of the manifold }$\mathcal{M}$%
\end{tabular}%
} \\
\multicolumn{1}{l}{%
\begin{tabular}{l}
$\rho \left( x|\theta \right) ${\small \ and }$\omega \left( x|\theta
\right) $%
\end{tabular}%
} & \multicolumn{1}{l}{%
\begin{tabular}{l}
{\small probability density and probability weight}%
\end{tabular}%
} \\
\multicolumn{1}{l}{%
\begin{tabular}{l}
$\mathcal{S}\left( x|\theta \right) ${\small \ and }$\mathfrak{I}\left(
x|\theta \right) $%
\end{tabular}%
} & \multicolumn{1}{l}{%
\begin{tabular}{l}
{\small information potential and local information content}%
\end{tabular}%
} \\
\multicolumn{1}{l}{%
\begin{tabular}{l}
$\ell _{\theta }\left( x,\bar{x}\right) $%
\end{tabular}%
} & \multicolumn{1}{l}{%
\begin{tabular}{l}
{\small separation distance between two points of the manifold }$\mathcal{M}$
\\
with {\small coordinates }$x$ and $\bar{x}$%
\end{tabular}%
} \\
\multicolumn{1}{l}{%
\begin{tabular}{l}
$R_{ijkl}\left( x|\theta \right) $ and $R(x|\theta )$%
\end{tabular}%
} & \multicolumn{1}{l}{%
\begin{tabular}{l}
{\small fourth-rank curvature tensor and curvature scalar of $\mathcal{M}$}
\end{tabular}%
} \\ \hline\hline
\end{tabular}%
\caption{Summary of most usual notations and symbols employed along this work. Occasionally, other symbols have been
 employed, overall, in examples and applications. Their usage should be clear from the context.}
\label{Notation}
\end{table}

\subsection{About the mathematical notations and conventions}

A summary of most usual notations and symbols employed in this work are shown in table \ref{Notation}. These notations are slightly different than the ones considered in precedent works \cite{Vel.GEFT,Vel.Book} and \cite{Vel.CSM}-\cite{Vel.FG}, but they are closer to the standard ones employed in mathematical statistics and differential geometry. Einstein summation convention of repeated indexes has been also assumed. Hereinafter, all mathematical relations are expressed in the same mathematical appearance without mattering the coordinate representation of the statistical manifolds $\mathcal{M}$ and $\mathcal{P}$, that is, a coordinate-free treatment will be adopted. This goal is achieved rephrasing the statistical description using tensorial quantities of Riemannian geometry.

Relations involving tensorial quantities can be divided into two categories: (i) \emph{tensorial relations}, which describe how are related different tensorial quantities, and (ii) the \emph{covariant transformation rules}, which express how the components of a certain tensorial quantity are modified under a coordinate change $\phi:\mathcal{R}_{x}\rightarrow \mathcal{R}_{\check{x}}$. Equation (\ref{cov.equation}) is a particular example of tensorial relation, that establishes a constraint between the metric tensor $g_{ij}(x|\theta)$ and the probability density $\rho_{\xi}(x|\theta)$. Noteworthy that these relations presuppose that all tensorial quantities are expressed using the same coordinates representations of the manifolds $\mathcal{M}$ and $\mathcal{P}$. Examples of covariant transformation rules are the ones considered in equations (\ref{transf.rule}) and (\ref{metric.tr}). Let us denote by $a^{j_{1}j_{2}\ldots j_{q}}_{i_{1}i_{2}\ldots i_{p}}(x|\theta)$ and $\check{a}^{l_{1}l_{2}\ldots l_{q}}_{k_{1}k_{2}\ldots k_{p}}(\check{x}|\theta)$ the components of a certain tensor in the coordinate representations $\mathcal{R}_{x}$ and $\mathcal{R}_{\check{x}}$ of the manifold $\mathcal{M}$, respectively. Thus, the transformation rule of a tensorial entity of weight $W$ and rank $R=(p+q)$ ($p$-times covariant and $q$-times contravariant) reads as follows:
\begin{equation}\label{gen.tr}
\check{a}^{l_{1}l_{2}\ldots l_{q}}_{k_{1}k_{2}\ldots k_{p}}(\check{x}|\theta)=a^{j_{1}j_{2}\ldots j_{q}}_{i_{1}i_{2}\ldots i_{p}}(x|\theta)\left|\frac{\partial \check{x}}{\partial x}\right|^{W}\frac{\partial x^{i_{1}}}{\partial \check{x}^{k_{1}}}\frac{\partial x^{i_{2}}}{\partial \check{x}^{k_{2}}}\ldots\frac{\partial x^{i_{p}}}{\partial \check{x}^{k_{p}}}\frac{\partial \check{x}^{l_{1}}}{\partial x^{j_{1}}}\frac{\partial \check{x}^{l_{2}}}{\partial x^{j_{2}}}\ldots \frac{\partial \check{x}^{l_{q}}}{\partial x^{j_{q}}}.
\end{equation}
Hereinafter, coordinate changes involving control parameters $\theta$ shall not be considered.

The notation of the family of continuous distributions (\ref{DF}) will be simplified as follows:
\begin{equation}\label{DFxi}
dp(x|\theta)=\rho(x|\theta)dx
\end{equation}
without specifying the complete set of random quantities $\xi$. Of course, each coordinate representation $\mathcal{R}_{x}$ of the manifold $\mathcal{M}$ is associated with a complete set of random quantities $\xi$, the map $\xi:\mathcal{M}\rightarrow \mathcal{R}_{x}$. However, this association will be omitted here to adopt the usual terminology and nomenclature employed in Riemannian geometry. Considering the general transformation rule (\ref{gen.tr}) for tensorial quantities, transformation rule for the probability density (\ref{transf.rule}) can be re-expressed as follows:
\begin{equation}
\check{\rho}(\check{x}|\theta)=\rho(x|\theta)\left|\frac{\partial \check{x}}{\partial x}\right|^{-1}.
\end{equation}
The probability density $\rho(x|\theta)$ is a tensor of rank $R=0$ and weight $W=-1$, which is usually referred to as a \emph{scalar density}.

Any function $a(\xi|\theta)$ of a random quantity $\xi$ also represents a random quantity. However, the notation $a(x|\theta)$ will be regarded as an \emph{ordinary function} defined on the manifolds $\mathcal{M}$ and $\mathcal{P}$, which is expressed using the coordinate representations $\mathcal{R}_{x}$ and $\mathcal{R}_{\theta}$. For example, the probability density $\rho(x|\theta)$, the metric tensor $g_{ij}(x|\theta)$ and the curvature tensor $R_{ijkl}(x|\theta)$ are example of functions (or tensorial quantities) defined on the abstract statistical manifolds $\mathcal{M}$ and $\mathcal{P}$. Moreover, the notation $\left\langle a(x|\theta)\right\rangle$, as usual, refers to the statistical expectation value obtained from the knowledge of the family continuous distribution:
\begin{equation}
\left\langle a(x|\theta)\right\rangle\equiv\int_{\mathcal{M}} a(x|\theta)dp(x|\theta).
\end{equation}

\subsection{Some results of fluctuation geometry}\label{review}

Riemannian structure of the statistical manifold $\mathcal{M}$ allows us to introduce the \textit{invariant volume element} $d\mu(x|\theta)$:
\begin{equation}  \label{inv.volume}
d\mu(x|\theta)=\sqrt{\left\vert g_{ij}(x|\theta)/2\pi
\right\vert }dx,
\end{equation}
which replaces the ordinary volume element $dx$ that is employed in equation (\ref{DFxi}). The notation $\left|T_{ij}\right|$ represents the determinant of a given tensor $T_{ij}$ of second-rank, while the factor $2\pi$ has been introduced for convenience. Additionally, one can define the \textit{probabilistic weight} \cite{Vel.FG}:
\begin{equation}\label{scalar.weight}
\omega(x|\theta)=\rho(x|\theta)\sqrt{|2\pi g^{ij}(x|\theta)|},
\end{equation}
which is a scalar function that arises as \emph{a local invariant measure of the probability}. Although the mathematical form of the probabilistic weight $\omega(x|\theta)$ depends on the coordinates representations of the statistical manifolds $\mathcal{M}$ and $\mathcal{P}$; the values of this function are the same in all coordinate representations. Using the above notions, the family of continuous distributions (\ref{DFxi}) can be rewritten as follows:
\begin{equation}\label{DF2}
dp(x|\theta)=\omega(x|\theta)d\mu(x|\theta),
\end{equation}
which is a form that explicitly exhibits the invariance of this family of distributions.

The notion of probability weight $\omega(x|\theta)$ allows us to overcome the inconsistencies commented in subsection \ref{relative}. For example, its scalar character enables an unambiguous definition for the most likely event $\bar{\epsilon}\in\mathcal{M}$, precisely, the event corresponding to the point $\bar{x}$ of global maximum of the probability weight $\omega(x|\theta)$. Additionally, the notion of information entropy for continuous distributions (\ref{de.Lebesgue}) can be extended as follows \cite{Vel.FG}:
\begin{equation}\label{SCDF}
\mathcal{S}_{d}\left[\omega|g,\mathcal{M}\right]=-\int_{\mathcal{M}}\omega(x|\theta)\log\omega(x|\theta)d\mu(x|\theta).
\end{equation}
The quantity (\ref{SCDF}) is a now global invariant measure that depends on the metric tensor $g_{ij}(x|\theta)$ of the manifold $\mathcal{M}$. Noteworthy that the sum over different discrete values $\sum_{k}$ in definition (\ref{inf.entropy}) turns now an integral $\int d\mu(x|\theta)$ over \emph{distinguishable events}. Here, the quantity $\mathfrak{I}(x|\theta)$:
\begin{equation}\label{information.content}
\mathfrak{I}(x|\theta)=-\log\omega(x|\theta)
\end{equation}
represents a \emph{local invariant measurement of the information content}. By definition, differential entropy (\ref{SCDF}) exhibits the same value for all diffeomorphic distributions. Readers can find further details about this measure in subsection 6.2 of Ref.\cite{Vel.FG}.

Introducing the \textit{information potential} $\mathcal{S}(x|\theta)$ as the negative of the information content (\ref{information.content}):
\begin{equation}\label{IP}
\mathcal{S}(x|\theta)=\log \omega(x|\theta)\equiv-\mathfrak{I}(x|\theta),
\end{equation}
the metric tensor (\ref{cov.equation}) can be rewritten as follows \cite{Vel.FG}:
\begin{equation}\label{metric}
g_{ij}(x|\theta)=-D_{i}D_{j}\mathcal{S}(x|\theta)=-\frac{\partial^{2}\mathcal{S}(x|\theta)}{\partial x^{i}\partial x^{j}}+\Gamma^{k}_{ij}(x|\theta)\frac{\partial\mathcal{S}(x|\theta)}{\partial x^{k}}.
\end{equation}
Here, $D_{i}$ is the \emph{covariant derivative} associated with the Levi-Civita affine connections $\Gamma^{k}_{ij}(x|\theta)$:
\begin{equation}
\Gamma _{ij}^{k}\left( x|\theta \right) =g^{km}(x|\theta)\frac{1}{2}\left[\frac{%
\partial g_{im}(x|\theta)}{\partial x^{j}}+\frac{\partial g_{jm}(x|\theta)}{\partial x^{i}}-%
\frac{\partial g_{ij}(x|\theta)}{\partial x^{m}}\right].  \label{Levi-Civita}
\end{equation}
The alternative form (\ref{metric}) of problem (\ref{cov.equation}) clearly evidences the covariant character of this set of partial differential equations. According to expression (\ref{metric}), the metric tensor $g_{ij}(x|\theta)$ defines a positive definite distance notion (\ref{fluct.dist}), while the information potential $\mathcal{S}(x|\theta)$ is locally concave everywhere. This last behavior guarantees the uniqueness of the point $\bar{x}$ where the information potential reaches a global maximum, that is, the uniqueness of the point of global maximum $\bar{x}$ of the probabilistic weight $\omega(x|\theta)$.

The main consequence derived from equation (\ref{metric}) is the possibility to rewrite the distributions family (\ref{DF2}) into the following \textit{Riemannian gaussian representation} \cite{Vel.GEO,Vel.FG}:
\begin{equation}\label{UGR}
dp(x|\theta)=\frac{1}{\mathcal{Z}(\theta)}\exp\left[-\frac{1}{2}\ell^{2}_{\theta}(x,\bar{x})\right]d\mu(x|\theta),
\end{equation}
where $\ell_{\theta}(x,\bar{x})$ denotes the \emph{separation distance} between the arbitrary point $x$ and the point $\bar{x}$ with maximum information potential $\mathcal{S}(x|\theta)$ (the arc-length $\Delta s$ of the \emph{geodesics} that connects these points). Moreover, the negative of the logarithm of gaussian partition function $\mathcal{Z}(\theta)$ defines the so-called \textit{gaussian potential}:
\begin{equation}\label{gaussianpotential}
\mathcal{P}(\theta)=-\log\mathcal{Z}(\theta),
\end{equation}
which appears as the first integral of the problem (\ref{metric}):
\begin{equation}\label{Sdecomposition}
\mathcal{P}(\theta )=\mathcal{S}(x|\theta )+\frac{1}{2}\psi^{2}(x|\theta ).
\end{equation}
Here, $\psi^{2}(x|\theta)=\psi^{i}\left( x|\theta \right)\psi _{i}\left( x|\theta \right)=g^{ij}(x|\theta)\psi_{i}\left( x|\theta \right)\psi _{j}\left( x|\theta \right)$ is the square norm of covariant vector field $\psi _{i}\left( x|\theta \right)$ defined by the gradient of the information potential $\mathcal{S}\left( x|\theta\right)$:
\begin{equation}
\psi _{i}\left( x|\theta \right) =-D_{i}\mathcal{S}\left( x|\theta
\right) \equiv-\partial \mathcal{S}\left( x|\theta \right) /\partial
x^{i}.  \label{cov.DGF}
\end{equation}
The factor $2\pi$ of definition (\ref{inv.volume}) guarantees that the gaussian partition function $\mathcal{Z}(\theta)$ drops the unity when the Riemannian structure of the manifold $\mathcal{M}$ is the same of Euclidean real space $\mathbb{R}^{n}$.

Riemannian gaussian representation (\ref{UGR}) can be obtained combining equations (\ref{DF2}) and (\ref{Sdecomposition}) with the following the identity:
\begin{equation}\label{ident.HJeq}
\psi^{2}(x|\theta)\equiv \ell^{2}_{\theta}(x,\bar{x}).
\end{equation}
This last relation is a consequence of the \emph{geodesic character} of the curves $x_{g}(s)\in\mathcal{M}$ derived from the following set of ordinary differential equations \cite{Vel.FG}:
\begin{equation}\label{hydro.equations}
\frac{dx^{i}_{g}(s)}{ds}=\upsilon^{i}\left[x_{g}(s)|\theta\right].
\end{equation}
Here, $\upsilon^{i}(x|\theta)=g^{ij}(x|\theta)\upsilon_{j}(x|\theta)$ is the contravariant form of the unitary vector field $\upsilon_{i}(x|\theta)$ associated with the vector field (\ref{cov.DGF}):
\begin{equation}
\upsilon_{i}(x|\theta)=\psi_{i}\left( x|\theta \right)/\psi\left(x|\theta \right),
\end{equation}
while the parameter $s$ is the arc-length of the curve $x_{g}(s)$. It is easy to check that this unitary vector field obeys the geodesic differential equation:
\begin{equation}
\upsilon^{j}(x|\theta)D_{j}\upsilon_{i}(x|\theta)=\upsilon^{j}(x|\theta)\left[g_{ij}(x|\theta)-\upsilon_{i}(x|\theta)\upsilon_{j}(x|\theta)\right]\equiv0.
\end{equation}
Identity (\ref{ident.HJeq}) follows from the directional derivatives:
\begin{equation}
\frac{d\mathcal{S}\left( x_{g}(s)|\theta \right)}{ds}\equiv\psi(x_{g}(s)|\theta)\mbox{ and }\frac{d^{2}\mathcal{S}\left( x_{g}(s)|\theta \right)}{ds^{2}}\equiv-1,
\end{equation}
which can be obtained from equation (\ref{hydro.equations}).

Riemannian gaussian representation (\ref{UGR}) rephrases the distributions family (\ref{DF}) in term of geometric notions of the manifold $\mathcal{M}$. According to this result, the distance $\ell_{\theta}(x,\bar{x})$ is a measure of the \emph{occurrence probability} of a deviation from the state $\bar{x}$ with maximum information potential. At first glance, \emph{gaussian distributions} exhibit a very special status within fluctuation geometry, overall, because of any continuous distribution function can be rephrased as a generalized gaussian distribution defined on a Riemannian manifold. As shown in the next section, equation (\ref{UGR}) is a key result to understand the statistical relevance of the curvature tensor of fluctuation geometry.

\section{Curvature of the statistical manifold $\mathcal{M}$}\label{CurvatureSect}

\subsection{Curvature tensor of Riemannian geometry}

The affine connections $\Gamma _{ij}^{k}=\Gamma _{ij}^{k}(x|\theta)$ are employed to introduce of the \textit{curvature tensor} $ R^{l}_{ijk}=R^{l}_{ijk}(x|\theta)$ of the manifold $\mathcal{M}$:
\begin{equation}\label{curvature}
R^{l}_{ijk}=\frac{\partial}{\partial X^{i}}\Gamma^{l}_{jk}-\frac{\partial}{%
\partial X^{j}}\Gamma^{l}_{ik}+\Gamma^{l}_{im}\Gamma^{m}_{jk}-
\Gamma^{l}_{jm}\Gamma^{m}_{ik}.
\end{equation}
In general, the affine connections $\Gamma _{ij}^{k}(x|\theta)$ and the metric tensor $g_{ij}\left( x|\theta \right)$ are independent entities of Riemannian geometry. However, the knowledge of the metric tensor allows to introduce natural affine connections: the Levi-Civita connections (\ref{Levi-Civita}). These affine connections are also referred to in the literature as the metric connections or the Christoffel symbols. The same ones follow from the consideration of a \emph{torsion-free covariant differentiation} $D_{i}$ that obeys the \emph{condition of Levi-Civita parallelism} \cite{Berger}:
\begin{equation}
D_{k}g_{ij}\left( x|\theta \right) =0.  \label{Dg}
\end{equation}
Using the Levi-Civita connections, the curvature tensor can be expressed in terms of the metric tensor $g_{ij}(x|\theta)$ and its first and second partial derivatives. For example, its fourth-rank covariant form $R_{ijkl}=g_{lm}R^{m}_{ijk}$ adopts the following form:
\begin{eqnarray}\label{curvature.2}
R_{ijkl}=\frac{1}{2}\left(\frac{\partial^{2}g_{il}}{\partial
x^{j}\partial x^{k}}+\frac{\partial^{2}g_{jk}}{\partial
x^{i}\partial x^{l}}-\frac{\partial^{2}g_{jl}}{\partial
x^{i}\partial x^{k}}-\frac{\partial^{2}g_{ik}}{\partial
x^{j}\partial x^{l}}\right)+\\+g_{mn}\left(\Gamma^{m}_{il}\Gamma^{n}_{jk}-\Gamma^{m}_{jl}\Gamma^{n}_{ik}\right).\nonumber
\end{eqnarray}
Additionally, one can introduce the \textit{Ricci curvature tensor} $R_{ij}(x|\theta)$:
\begin{equation}\label{Ricci}
R_{ij}(x|\theta)=R_{kij}^{k}(x|\theta)
\end{equation}
as well as the \textit{curvature scalar} $R(x|\theta)$:
\begin{equation} \label{scalar.curvature}
R(x|\theta)=g^{ij}(x|\theta)R^{k}_{kij}(x|\theta)=g^{ij}(x|\theta)g^{kl}(x|\theta)R_{kijl}(x|\theta).
\end{equation}
According to Riemannian geometry \cite{Berger}, the curvature scalar $R(x|\theta)$ is the only invariant derived from the first and second partial derivatives of the metric tensor $g_{ij}(x|\theta)$.

The curvature tensor characterizes the deviation of local geometric properties of a manifold $\mathcal{M}$ from the properties of the Euclidean geometry. For example, the volume of a small sphere about a point $x$ has smaller (larger) volume (area) than a sphere of the same radius defined on an Euclidean manifold $\mathbb{E}^{n}$ when the scalar curvature $R(x|\theta)$ is positive (negative) at that point. Quantitatively, this behavior is described by the following approximation formulae:
\begin{eqnarray}\label{AF1}
  \frac{\mathrm{Vol} \left[\mathbb{S}^{(n-1)}(x|\ell)\subset\mathcal{M}\right] }{\mathrm{Vol} \left[\mathbb{S}^{(n-1)}(x|\ell)\subset\mathbb{E}^{n}\right]} &=& 1-\frac{R(x|\theta)}{6(n+2)}\ell^{2} +O(\ell^{4}),\\
  \frac{\mathrm{Area} \left[\mathbb{S}^{(n-1)}(x|\ell)\subset\mathcal{M}\right]}{\mathrm{Area} \left[\mathbb{S}^{(n-1)}(x|\ell)\subset\mathbb{E}^{n}\right]} &=& 1-\frac{R(x|\theta)}{6n}\ell^{2}+O(\ell^{4}),\label{AF2}
\end{eqnarray}
where the notation $\mathbb{S}^{(m)}(x|\ell)$ represents a $m$-dimensional sphere with small radius $\ell$ centered at the point $x$. Accordingly, the local effects associated with the curvature of the manifold $\mathcal{M}$ appears as second-order (and higher) corrections of the Euclidean formulae. The best known example of Euclidean manifold is the $n$-dimensional Euclidean real space $\mathbb{R}^{n}$. The geometry defined on surface of cylinder $\mathbb{C}^{(2)}\in\mathbb{R}^{3}$ is other example of Euclidean geometry, while the geometry defined on the surface of the $n$-dimensional sphere $\mathbb{S}^{(n)}\in\mathbb{R}^{n+1}$ with $n \geq 2$ is a typical example of curved geometry (with a constant positive curvature).

Some tensorial identities can be easily demonstrated by adopting the so-called \textit{normal coordinates}. For the sake of simplicity, let us assume that the point of interest of the manifold $\mathcal{M}$ corresponds to the origin $x=(0,0,\ldots 0)$. Moreover, let us also assume that the metric tensor components and their first partial derivatives in that point satisfy the following conditions:
\begin{equation}
g_{ij}(0|\theta)=\delta_{ij}\mbox{ and }\partial g_{ij}(0|\theta)/\partial x^{k}=0,
\end{equation}
with $\delta_{ij}$ being the Kronecker delta. The coordinate representation defined by the previous conditions represent a normal coordinates centered at the origin. Since the Levi-Civita connections vanishing at that point, $\Gamma^{k}_{ij}(0|\theta)=0$, the calculation of the curvature tensor $R_{ijkl}(0|\theta)$ only involves the second derivatives of the metric tensor:
\begin{equation}\label{second.curvature}
R_{ijkl}(0|\theta)=\frac{1}{2}\left(\frac{\partial^{2}g_{il}(0|\theta)}{\partial
x^{j}\partial x^{k}}+\frac{\partial^{2}g_{jk}(0|\theta)}{\partial
x^{i}\partial x^{l}}-\frac{\partial^{2}g_{jl}(0|\theta)}{\partial
x^{i}\partial x^{k}}-\frac{\partial^{2}g_{ik}(0|\theta)}{\partial
x^{j}\partial x^{l}}\right).
\end{equation}
Using normal coordinates, the distance metric and the first covariant derivatives at the origin behaves as their Euclidean counterparts. A remarkable result (due to Riemann himself) associated with normal coordinates is the following second-order approximation for the distance notion \cite{Berger}:
\begin{equation}\label{develop}
g_{ij}(x|\theta)dx^{i}dx^{j}=dx^{i}dx^{i}+\frac{1}{12}R_{imjn}(0|\theta)dS^{im}dS^{jn}+O\left(\left|x\right|^{2}\right),
\end{equation}
where $dS^{ij}=x^{j}dx^{i}-x^{i}dx^{j}$. Accordingly, a curved Riemannian manifold locally looks-like an Euclidean manifold at zeroth and first-order approximation of the power-expansion using normal coordinates, while the local curvature of this manifold appears as a \emph{second-order effect}. Normal coordinates will be employed to develop the second-order geometric expansion of a distribution function, which is a statistic counterpart of asymptotic geometric formulae (\ref{AF1}) and (\ref{AF2}).

\subsection{Curvature tensor and the irreducible statistical correlations}
Previously, it was shown that distributions families whose manifolds $\mathcal{M}$ are diffeomorphic to the one-dimensional real space $\mathbb{R}$ are diffeomorphic distributions. However, this property cannot be extended to distributions families whose statistical manifolds $\mathcal{M}$ have a dimension $n\geq 2$.

\begin{remark}
Two distributions families $dp_{1}(x_{1}|\theta)$ and $dp_{2}(x_{2}|\theta)$ whose abstract statistical manifolds $\mathcal{M}_{1}$ and $\mathcal{M}_{2}$ have a dimension $n\geq 2$ are not necessarily diffeomorphic distributions.
\end{remark}
\begin{proof}
A \emph{diffeomorphism} is a map that \emph{preserves both the differential and Riemannian structures}. Thus, if two distributions families $dp_{1}(x_{1}|\theta)$ and $dp_{2}(x_{2}|\theta)$ have statistical manifolds $\mathcal{M}_{1}$ and $\mathcal{M}_{2}$ with different Riemannian structures, their respective complete sets of random quantities $\xi_{1}$ and $\xi_{2}$ are not related by a diffeomorphism, $\xi_{2}\neq\phi(\xi_{1})$. Precisely, two statistical manifolds $\mathcal{M}_{1}$ and $\mathcal{M}_{2}$ with dimension $n\geq 2$ can differ in regard to their \emph{curvatures}.
\end{proof}

Curvature notion plays a relevant role in fluctuation geometry. Besides the question about whether or not two continuous distributions are diffeomorphic distributions, curvature tensor appears as indicator about the existence or nonexistence of irreducible statistical correlations.

\begin{definition}\label{Defin.StatInd}
A continuous distribution $dp(x|\theta)$ exhibits a \textbf{reducible statistical dependence} if it possesses a diffeomorphic distribution $dp(\check{x}|\theta)$ that admits to be decomposed into independent distribution functions $dp^{(i)}(\check{x}^{i}|\theta)$ for each coordinate as follows:
\begin{equation}\label{factorization}
dp(\check{x}|\theta)=\prod^{n}_{i=1}dp^{(i)}(\check{x}^{i}|\theta).
\end{equation}
Otherwise, the distribution function $dp(x|\theta)$ exhibits an \textbf{irreducible statistical dependence}.
\end{definition}
\begin{example}\label{example.XY}
The following continuous distribution:
\begin{equation}\label{example.ind}
dp(x,y)=A\exp\left[-x^{2}-y^{2}-xy\right]dxdy
\end{equation}
describes a statistical dependence between the coordinates $x$ and $y$. However, this distribution can be rewritten into independent distributions:
\begin{equation}
dp(\check{x},\check{y})=\sqrt{\frac{3}{2\pi}}\exp\left[-\frac{3}{2}
\check{x}^{2}\right]d\check{x}\frac{1}{\sqrt{2\pi}}\exp\left[-\frac{1}{2}\check{y}^{2}\right]
d\check{y}
\end{equation}
considering the coordinate change $(\check{x},\check{y})=\phi(x,y)$:
\begin{equation}
\check{x}=\frac{1}{\sqrt{2}}(x+y),\:\check{y}=\frac{1}{\sqrt{2}}(x-y).
\end{equation}
Therefore, distribution (\ref{example.ind}) exhibits a reducible statistical dependence.
\end{example}

\begin{example}\label{GDF}
Distribution (\ref{example.ind}) is a particular case of the gaussian family:
\begin{equation}\label{GaussianFamily}
dp_{G}(x|\theta)=\exp\left[-\frac{1}{2}\sigma_{ij}(x^{i}-\mu^{i})(x^{j}-\mu^{j})
\right]\sqrt{\left|\frac{\sigma_{ij}}{2\pi}\right|}dx,
\end{equation}
where the control parameters $\theta=(\mu^{i},\sigma_{ij})$ are the means $\mu^{i}=\left\langle x^{i}\right\rangle$ and the inverse matrix $\sigma_{ij}$ of the self-correlations $\sigma^{ij}=\left\langle\delta x^{i}\delta x^{j}\right\rangle$. It is easy to realize that the metric tensor for this distributions family is $g_{ij}(x|\theta)\equiv \sigma_{ij}=const$. The coordinates $x=(x^{1}, x^{2}, \ldots x^{n})$ can be subjected to a translation-rotation coordinate change $x^{i}=\mu^{i}+T^{i}_{j}\check{x}^{i}$ that ensures the diagonal character of the new self-correlation matrix $\tilde{\sigma}^{ij}=\left\langle\delta \check{x}^{i}\delta \check{x}^{j}\right\rangle=(\sigma^{i})^{2}\delta^{ij}$. Thus, distribution function resulting from this coordinate change can be decomposed into independent distributions:
\begin{equation}
dp(\check{x}|\theta)=\prod^{n}_{i=1}\exp\left[-\frac{1}{2}\left(\frac{\check{x}^{i}}
{\sigma^{i}}\right)^{2}\right]\frac{d\check{x}^{i}}{\sigma^{i}\sqrt{2\pi}}.
\end{equation}
Gaussian family (\ref{GaussianFamily}) exhibits a reducible statistical dependence. By definition, any diffeomorphic distribution of the gaussian family (\ref{GaussianFamily}) will also exhibit a reducible statistical dependence.
\end{example}

As already evidenced in the previous examples, the \textit{self-correlation matrix} $\sigma^{ij}$:
\begin{equation}\label{correlation}
\sigma^{ij}=\mathrm{cov}(x^{i},x^{j})=\left\langle(x^{i}-\left\langle x^{i}\right\rangle)(x^{j}-\left\langle
x^{j}\right\rangle)\right\rangle
\end{equation}
among the coordinates $x=(x^{1}, x^{2}, \ldots x^{n})$ of a given coordinate representation $\mathcal{R}_{x}$ of the manifold $\mathcal{M}$ cannot be employed to indicate the existence of irreducible statistical dependence. For an arbitrary family of distributions (\ref{DFxi}), the self-correlation matrix (\ref{correlation}) does not represent a tensorial quantity of any kind. Therefore, these quantities are unsuitable to predict the existence (or nonexistence) of a reducible statistical dependence. On the contrary, the statistical manifold $\mathcal{M}$ associated with the gaussian family (\ref{GaussianFamily}) exhibits a vanishing curvature tensor $R^{l}_{ijk}(x|\theta)=0$, a property that is protected by the covariant transformation rules of the curvature tensor:
\begin{equation}
\check{R}^{p}_{mno}(\check{x}|\theta)=R^{l}_{ijk}(x|\theta)\frac{\partial x^{i}}{\partial \check{x}^{m}}\frac{\partial x^{j}}{\partial \check{x}^{n}}\frac{\partial x^{k}}{\partial \check{x}^{o}}\frac{\partial \check{x}^{p}}{\partial x^{l}}.
\end{equation}
The statistical manifold $\mathcal{M}$ associated with the gaussian family (\ref{GaussianFamily}) is flat, that is, it exhibits the same Riemannian structure of Euclidean $n$-dimensional real space $\mathbb{R}^{n}$. This example strongly suggests the existence of a direct connection between the existence of reducible statistical dependence and the curvature tensor $R^{l}_{ijk}(x|\theta)$ of the statistical manifold $\mathcal{M}$. Remarkably, such a connection is almost a trivial question from the viewpoint of  Riemannian geometry.

\begin{proposition}\label{Prop.Descomp}
The existence (or nonexistence) of a reducible statistical dependence for a given distributions family (\ref{DFxi}) is reduced to the existence (or nonexistence) of a \textbf{Cartesian decomposition} of its associated statistical manifold $\mathcal{M}$ into two (or more) independent statistical manifolds $\left\{\mathcal{A}^{(i)}_{\theta}\right\}$:
\begin{equation}\label{decompositionM}
\mathcal{M}=\mathcal{A}^{(1)}\otimes\mathcal{A}^{(2)}\ldots\otimes\mathcal{A}^{(l)}.
\end{equation}
\end{proposition}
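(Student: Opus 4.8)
The plan is to establish the equivalence in two directions, exploiting the fact that diffeomorphic distributions correspond precisely to different coordinate representations of the same Riemannian manifold $\mathcal{M}$, so that the statement is really a claim about when $\mathcal{M}$ admits a product structure compatible with a product of one-dimensional factors. First I would observe that a product distribution of the form (\ref{factorization}) has a probability density that factorizes, $\rho(\check{x}|\theta)=\prod_{i}\rho^{(i)}(\check{x}^{i}|\theta)$, and therefore, via equation (\ref{metric}) (or directly via (\ref{cov.equation})), its information potential $\mathcal{S}(\check{x}|\theta)=\sum_{i}\mathcal{S}^{(i)}(\check{x}^{i}|\theta)$ is additive; since the metric tensor is $g_{ij}=-D_{i}D_{j}\mathcal{S}$, the resulting metric is block-diagonal with one-dimensional blocks, $g_{ij}(\check{x}|\theta)=\mathrm{diag}\{g_{(1)}(\check{x}^{1}|\theta),\ldots,g_{(n)}(\check{x}^{n}|\theta)\}$, each $g_{(i)}$ depending on its own coordinate alone. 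This is exactly the statement that $\mathcal{M}$ is a Riemannian product $\mathcal{A}^{(1)}\otimes\cdots\otimes\mathcal{A}^{(n)}$ of one-dimensional manifolds. Conversely, if $\mathcal{M}$ admits such a Cartesian decomposition, one can choose product coordinates adapted to the factorization; in those coordinates the metric is block-diagonal, hence $\mathcal{S}$ is (up to the covariant PDE) a sum of one-variable functions, and reading (\ref{UGR}) or (\ref{DF2}) back in these coordinates shows $dp(\check{x}|\theta)$ factorizes as a product over the factors. Combining with the remark that diffeomorphic distributions are coordinate changes of one another, ``reducible statistical dependence'' of $dp(x|\theta)$ is then equivalent to ``$\mathcal{M}$ admits a Cartesian decomposition into lower-dimensional factors.''

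The key steps, in order, are: (1) show density factorization implies additivity of $\mathcal{S}$ and block-diagonality of $g_{ij}$ via (\ref{metric}); (2) recall that a Riemannian manifold with globally block-diagonal metric in some chart, with each block depending only on its own coordinates, is (locally, and under the global diffeomorphism-to-$\mathbb{R}^{n}$ assumption made earlier, globally) isometric to a Cartesian product of the factor manifolds $\mathcal{A}^{(i)}$; (3) run the argument backwards using the Riemannian gaussian representation (\ref{UGR}), noting that $\ell^{2}_{\theta}(x,\bar x)$ on a product manifold splits as a sum of the factor distances and $\mathcal{Z}(\theta)$ factorizes, so $dp$ splits as a product; (4) invoke the notion of diffeomorphic distributions to conclude that the existence question is coordinate-independent and hence a purely geometric property of $\mathcal{M}$. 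Note that the decomposition (\ref{decompositionM}) need not go all the way down to one-dimensional factors --- a partial product structure already yields a partial factorization of $dp$ --- and the proposition is stated for two or more factors of possibly higher dimension, which is the natural level of generality.

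The main obstacle I expect is step (2): passing from ``the metric is block-diagonal in some coordinate system with the appropriate coordinate-dependence'' to ``$\mathcal{M}$ is genuinely a Riemannian product.'' Local block-diagonality is automatic, but the global product statement requires that the relevant distributions (the integral manifolds of the coordinate foliations) are complete and that $\mathcal{M}$ is simply connected --- which is covered here by the standing assumption that $\mathcal{M}$ is diffeomorphic to $\mathbb{R}^{n}$, so one should cite that hypothesis explicitly. A secondary subtlety is the dependence on the control parameters $\theta$: the decomposition $\mathcal{M}=\mathcal{A}^{(1)}\otimes\cdots\otimes\mathcal{A}^{(l)}$ and the adapted coordinates may vary with $\theta$, so the factor manifolds should be understood as $\mathcal{A}^{(i)}_{\theta}$ and the statement read pointwise in $\theta$. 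Beyond these points, the proof is, as the preceding discussion anticipates, essentially a translation between the statistical language of factorizable densities and the geometric language of product metrics, and I would keep it short, leaning on equations (\ref{metric}) and (\ref{UGR}) to do the work.
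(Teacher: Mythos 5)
Your proposal is correct, and for the direction the paper actually writes out it follows the same route: adapted (product) coordinates give a block-diagonal metric, hence additivity of $\ell^{2}_{\theta}(x,\bar{x})$ and factorization of $d\mu(x|\theta)$, and the Riemannian gaussian representation (\ref{UGR}) then splits $dp(x|\theta)$ into independent factors. The substantive difference is that the paper's proof establishes only this implication (Cartesian decomposition $\Rightarrow$ factorization) and leaves the converse implicit, whereas you also argue the converse: a factorized density forces, through the covariant problem (\ref{cov.equation})/(\ref{metric}), a block-diagonal metric with each block depending on its own coordinate, i.e.\ a genuine Riemannian product. That converse is exactly what is later needed in the proof of Theorem \ref{Main} (``reducible statistical dependence demands a Cartesian decomposition into one-dimensional manifolds''), so your version is the more complete one. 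Two caveats you should make explicit. First, your forward step as phrased is mildly circular: additivity of $\mathcal{S}=\log\rho+\frac{1}{2}\log|2\pi g^{ij}|$ presupposes block-diagonality of $g_{ij}$, which you then deduce from additivity of $\mathcal{S}$; the clean statement is that the decoupled block-diagonal ansatz solves (\ref{cov.equation}) when $\rho$ factorizes, and one must invoke the uniqueness of the solution of that boundary-value problem (from the axiomatic construction of the metric in the cited earlier work) to conclude it is \emph{the} metric. Second, your worry about passing from local to global product structure is legitimate and is precisely the point the paper defers to the de Rham decomposition theorem (mentioned only later, in the proof of Theorem \ref{Main}); citing the standing hypothesis that $\mathcal{M}$ is diffeomorphic to $\mathbb{R}^{n}$, as you do, is the right way to discharge it. Your observation that the decomposition and adapted coordinates may depend on $\theta$ is also a useful precision absent from the paper.
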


\begin{proof}
\textbf{Cartesian product of Riemannian manifolds} is a generalization of Cartesian product of spaces that includes the differential and the Riemannian structures. In particular, the distance notion (\ref{fluct.dist}) of the statistical manifold  $\mathcal{M}$ is determined from the distance notions $ds^{2}_{(k)}=g^{(k)}_{i_{k}j_{k}}(a_{k}|\theta)da_{k}^{i_{k}}da_{k}^{j_{k}}$ of each manifold $\mathcal{A}^{(k)}$ via \emph{Pythagorean theorem} as follows:
\begin{equation}
ds^{2}=ds^{2}_{(1)}\bigoplus ds^{2}_{(2)}\ldots\bigoplus ds^{2}_{(l)}\equiv \sum^{l}_{k=1}ds^{2}_{(k)}.
\end{equation}
Let us denote by $\mathcal{R}_{a_{k}}$ a certain coordinate representation of the manifold $\mathcal{A}^{(k)}$. Given a statistical manifold $\mathcal{M}$ and its Riemannian structure, the essential property allowing Cartesian decomposition as (\ref{decompositionM}) is that the metric tensor $g_{ij}(x|\theta)$ exhibits the following matrix form:
\begin{equation}\label{g.matrix}
g_{ij}(x|\theta)=\left(
                   \begin{array}{ccccc}
                     g^{(1)}_{i_{1}j_{1}}(a_{1}|\theta) & 0 & \ldots & \ldots & 0 \\
                     0 & g^{(2)}_{i_{2}j_{2}}(a_{2}|\theta) & 0 & \ldots & 0 \\
                     \vdots & 0 & \ddots &  &\vdots \\
                     \vdots & \vdots &  & g^{(l-1)}_{i_{l-1}j_{l-1}}(a_{l-1}|\theta) & 0 \\
                     0 & 0 & \ldots & 0 & g^{(l)}_{i_{l}j_{l}}(a_{l}|\theta) \\
                   \end{array}
                 \right)
\end{equation}
for a certain coordinate representation $\mathcal{R}_{x}=\mathcal{R}_{a_{1}}\otimes \mathcal{R}_{a_{2}}\ldots\otimes \mathcal{R}_{a_{l}}$ of the manifold $\mathcal{M}$. As expected, the underlying Cartesian decomposition imposes some \emph{composition rules} for tensorial quantities defined on the manifold $\mathcal{M}$ in term of corresponding tensorial entities for each statistical manifold $\mathcal{A}^{(k)}$. For example, equation (\ref{g.matrix}) implies the additive character of the statistical distance $\ell^{2}_{\theta}(x,\bar{x})$ and the factorization of the invariant volume element $d\mu(x|\theta)$ as follows:
\begin{equation}\label{desc.elle}
\ell^{2}_{\theta}(x,\bar{x})=\sum^{l}_{k=1}\ell^{2}_{\theta}(a_{k},\bar{a}_{k})\mbox{ and }d\mu(x|\theta)=\prod^{l}_{k=1}d\mu^{(k)}(a_{k}|\theta),
\end{equation}
where $(\bar{a}_{1},\ldots,\bar{a}_{l})$ are the coordinates of the point $\bar{x}$ with maximum information potential, and $d\mu^{(k)}(a_{k}|\theta)$ the invariant volume element of the manifold $\mathcal{A}^{(k)}$:
\begin{equation}
d\mu^{(k)}(a_{k}|\theta)=\sqrt{\left|g^{(k)}_{i_{k}j_{k}}(a_{k}|\theta)/2\pi\right|}da_{k}.
\end{equation}
Using the Riemannian gaussian representation (\ref{UGR}) and the relations (\ref{desc.elle}), one immediately obtains the composition rule of the probability distribution (\ref{DF2}) into independent distributions:
\begin{equation}\label{factor.dp}
dp(x|\theta)=\prod^{l}_{k=1}dp^{(k)}(a_{k}|\theta),
\end{equation}
where $dp^{(k)}(a_{k}|\theta)$ is the probability distribution:
\begin{equation}
dp^{(k)}(a_{k}|\theta)=\frac{1}{\mathcal{Z}^{(k)}(\theta)}\exp\left[-\frac{1}{2}\ell^{2}_{\theta}(a_{k},\bar{a}_{k})\right]d\mu^{(k)}(a_{k}|\theta).
\end{equation}
The composition rule (\ref{factor.dp}) imposes the factorization of the gaussian partition function $\mathcal{Z}(\theta)$:
\begin{equation}
\mathcal{Z}(\theta)=\prod^{l}_{k=1}\mathcal{Z}^{(k)}(\theta),
\end{equation}
and hence, the additive character of the gaussian potential $\mathcal{P}(\theta)$ and the information potential $\mathcal{S}(x|\theta)$:
\begin{equation}
\mathcal{P}(\theta)=\sum^{l}_{k=1}\mathcal{P}^{(k)}(\theta)\Rightarrow\mathcal{S}(x|\theta)=\sum^{l}_{k=1}\mathcal{S}^{(k)}(a_{k}|\theta),
\end{equation}
where $\mathcal{P}^{(k)}(\theta)$ and $\mathcal{S}^{(k)}(a_{k}|\theta)$ are given by:
\begin{equation}
\mathcal{P}^{(k)}(\theta)=-\log \mathcal{Z}^{(k)}(\theta)\mbox{ and }\mathcal{S}^{(k)}(a_{k}|\theta)=\mathcal{P}^{(k)}(\theta)-\frac{1}{2}\ell^{2}_{\theta}(a_{k},\bar{a}_{k}).
\end{equation}
Thus, the existence of a Cartesian decomposition (\ref{decompositionM}) for the statistical manifold $\mathcal{M}$ implies the decomposition of the distributions family (\ref{DF}) into a set of independent distribution functions.
\end{proof}

\begin{definition}
A given manifold $\mathcal{A}$ is said to be an \textbf{irreducible manifold} when the same one does not admit the Cartesian decomposition (\ref{decompositionM}). Moreover, a given Cartesian decomposition (\ref{decompositionM}) is said to be an \textbf{irreducible Cartesian decomposition} if each independent manifold  $\mathcal{A}^{(k)}$ is an irreducible manifold.
\end{definition}

\begin{theorem}\label{Main}
The flat character of the statistical manifold $\mathcal{M}$ implies the existence of a reducible statistical dependence for the family of distributions (\ref{DFxi}), while its curved character implies the existence of an irreducible statistical dependence.
\end{theorem}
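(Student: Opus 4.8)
The strategy is to reduce the statement to Proposition~\ref{Prop.Descomp} together with two elementary facts of Riemannian geometry: (i) a topologically trivial, geodesically complete manifold with identically vanishing curvature tensor is globally isometric to the Euclidean space $\mathbb{E}^{n}$, hence admits global Cartesian (normal) coordinates in which $g_{ij}=\delta_{ij}$; and (ii) for a Cartesian product of Riemannian manifolds carrying the block metric (\ref{g.matrix}), the Levi--Civita connections and the curvature tensor decompose blockwise, so that the curvature tensor of the product is the juxtaposition of those of the factors. In particular every one-dimensional factor has identically vanishing curvature (the antisymmetry of $R_{ijkl}$ in $i\leftrightarrow j$ kills every component when there is a single coordinate), and a Cartesian product of flat factors is flat.

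I would treat the flat case first: assume $R^{l}_{ijk}(x|\theta)=0$ everywhere. Take normal coordinates $\check{x}$ centered at the point $\bar{x}$ of maximal information potential; the construction underlying the Riemannian gaussian representation (\ref{UGR}) guarantees that this chart covers all of $\mathcal{M}\cong\mathbb{R}^{n}$. Since the curvature and all its covariant derivatives vanish, the Taylor expansion of the metric in normal coordinates — of which (\ref{develop}) is the leading correction — collapses to $g_{ij}(\check{x}|\theta)\equiv\delta_{ij}$. This metric has the block form (\ref{g.matrix}) with $l=n$ trivial one-dimensional blocks, so Proposition~\ref{Prop.Descomp} yields the Cartesian decomposition $\mathcal{M}=\mathbb{R}^{(1)}\otimes\cdots\otimes\mathbb{R}^{(n)}$ and the factorization (\ref{factor.dp}) of $dp(\check{x}|\theta)$ into $n$ independent one-dimensional distributions; explicitly, since $\ell^{2}_{\theta}(\check{x},\bar{x})=\sum_{i}(\check{x}^{i}-\bar{x}^{i})^{2}$ and $d\mu(\check{x}|\theta)=(2\pi)^{-n/2}d\check{x}$ in these coordinates, (\ref{UGR}) becomes a product of standard one-dimensional gaussians with $\mathcal{Z}(\theta)=1$. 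By Definition~\ref{Defin.StatInd} the statistical dependence is therefore reducible.

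I would then treat the curved case by contraposition. By Proposition~\ref{Prop.Descomp} (whose reverse implication rests on the uniqueness of the metric solving (\ref{cov.equation}) once $\rho$ is prescribed), a reducible statistical dependence for $dp(x|\theta)$ is equivalent to a Cartesian decomposition $\mathcal{M}=\mathcal{A}^{(1)}\otimes\cdots\otimes\mathcal{A}^{(n)}$ of the statistical manifold into $n$ one-dimensional factors. By fact (ii) above the curvature tensor of such a product vanishes identically, so $\mathcal{M}$ would be flat, contradicting its curved character. Hence a curved $\mathcal{M}$ forces an irreducible statistical dependence; combined with the first part, this also delivers the converse implications (reducible $\Rightarrow$ flat, irreducible $\Rightarrow$ curved).

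The main obstacle is the passage from local to global in the flat case: flatness is a pointwise condition, whereas the conclusion $\mathcal{M}=\mathbb{R}^{(1)}\otimes\cdots\otimes\mathbb{R}^{(n)}$ is global (a flat but non-simply-connected manifold such as a cylinder is not isometric to $\mathbb{E}^{n}$). I would dispose of it by invoking the hypothesis $\mathcal{M}\cong\mathbb{R}^{n}$, which supplies simple connectedness and topological triviality, together with the geodesic completeness implicit in the existence, for every $x\in\mathcal{M}$, of the separation distance $\ell_{\theta}(x,\bar{x})$ appearing in (\ref{UGR}) and in the global concavity of the information potential; these ensure that the exponential map at $\bar{x}$ is a global diffeomorphism and that the flat normal chart above is genuinely defined on all of $\mathcal{M}$, so the Cartesian decomposition holds on the whole manifold rather than merely near $\bar{x}$.
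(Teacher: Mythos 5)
Your proposal is correct and follows essentially the same route as the paper: both directions reduce to Proposition~\ref{Prop.Descomp} together with the facts that a product of one-dimensional factors is flat and that a flat $\mathcal{M}$ carries the Euclidean Riemannian structure admitting a Cartesian decomposition into one-dimensional factors. Your only addition is to make explicit the local-to-global step (simple connectedness from $\mathcal{M}\cong\mathbb{R}^{n}$ plus geodesic completeness, so that flatness yields a global isometry with $\mathbb{E}^{n}$), which the paper asserts without comment; this is a welcome tightening rather than a different argument.
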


\begin{proof}
The existence of a reducible statistical dependence (\ref{factorization}) is a very strong restriction. This property demands that the associated manifold $\mathcal{M}$ admits an irreducible Cartesian decomposition (\ref{decompositionM}) into a set of one-dimensional manifolds $\left\{\mathcal{A}^{(k)}\right\}$, $k=(1,2,\ldots n)$. The matrix representation of the metric tensor (\ref{g.matrix}) imposes the vanishing of those the components of curvature tensor $R_{ijkl}(x|\theta)$ involving indexes belonging to different manifolds in Cartesian decomposition (\ref{decompositionM}). Consequently, the existence of a reducible statistical dependence implies the vanishing of all components of the curvature tensor  $R_{ijkl}(x|\theta)$. Alternatively, if the statistical manifold $\mathcal{M}$ exhibits a vanishing curvature tensor, its Riemannian structure is the same of an Euclidean $n$-dimensional manifold $\mathbb{E}^{n}$. Since any Euclidean manifold $\mathbb{E}^{n}$ admits an irreducible Cartesian decomposition into a set of one-dimensional manifolds, the existence of a vanishing curvature tensor also implies the existence of a reducible statistical dependence for the distributions family (\ref{DFxi}) (in accordance with \textbf{Proposition \ref{Prop.Descomp}}). Let us now consider the case where the manifold $\mathcal{M}$ exhibits a non-vanishing curvature tensor. Since Cartesian product $\mathcal{A}\otimes\mathcal{B}$ of two arbitrary one-dimensional manifolds $\mathcal{A}$ and $\mathcal{B}$ has always a vanishing curvature tensor, any curved two-dimensional manifold is irreducible. Consequently, if $\mathcal{M}$ is a curved two-dimensional manifold, its associated distributions family (\ref{DFxi}) exhibits an irreducible statistical dependence. If the manifold $\mathcal{M}$ has a dimension $n\geq 2$, the irreducible statistical dependence of the distributions family (\ref{DF}) implies that the irreducible Cartesian decomposition (\ref{decompositionM}) must contain, at least, an irreducible statistical manifold $\mathcal{A}^{(k)}$ with dimension $d=\mathrm{dim}\left[\mathcal{A}^{(k)}\right]\geq 2$ with non-vanishing curvature. In general, the question about the Cartesian decomposition of a Riemannian manifold into independent manifolds with arbitrary dimensions is better phrased and understood in the language of \emph{holonomy groups}. The relation of holonomy of a connection with the curvature tensor is the main content of \emph{Ambrose-Singer theorem}, while \emph{de Rham theorem} states the conditions for a global Cartesian decomposition \cite{Berger}.
\end{proof}

\subsection{Second-order geometric expansion}
Gaussian family (\ref{GaussianFamily}) plays a relevant role in statistical and physical applications. In particular, this family of distributions (\ref{GaussianFamily}) represents \emph{asymptotic distributions} in some appropriate limits, such as the case of \emph{central limit theorem} in statistics \cite{Dekking} or the fluctuating behavior of large thermodynamic systems in Einstein's fluctuation theory \cite{Reichl}. The statistical manifold $\mathcal{M}$ associated with the gaussian family (\ref{GaussianFamily}) exhibits the same Riemannian structure of Euclidean $n$-dimensional real space $\mathbb{R}^{n}$. The asymptotic convergence of an arbitrary distributions family (\ref{DFxi}) towards a gaussian family is a consequence of the weakening of curvature at a small neighborhood of the point $\bar{x}$ with maximum information potential $\mathcal{S}(x|\theta)$. In general, the geometric properties of a small region of a curved manifold $\mathcal{M}$ are approximately Euclidean if the linear dimension $\ell$ of this region is sufficiently small. This asymptotic behavior is expressed in the approximation formulae (\ref{AF1}) and (\ref{AF2}). Gaussian family (\ref{GaussianFamily}) always arises as the \emph{Euclidean} or \emph{zeroth-order approximation} of any distributions family (\ref{DF}). The effects of the curved character of statistical manifold $\mathcal{M}$ are manifested as \emph{second-order corrections} of the gaussian approximation. The study of such a geometric power-expansion is the main goal of the present subsection. In inference theory, the counterpart approach of this geometric expansion is referred to as \emph{higher-order asymptotic theory of statistical estimation} \cite{Amari}.

\begin{lemma}\label{lemma2}
Riemannian gaussian representation (\ref{UGR}) can be expressed into the following \textbf{spherical coordinate representation}:
\begin{equation}\label{polar}
dp(\ell,q|\theta)=\frac{1}{\mathcal{Z}_{g}(\theta)}\frac{1}{\sqrt{2\pi}}\exp\left(-\frac{1}{2}\ell^{2}\right)
d\ell d\Sigma_{g}(q|\ell,\theta).
\end{equation}
Here, $d\Sigma(q|\ell,\theta)$ is the hyper-surface element:
\begin{equation}
d\Sigma_{g}(q|\ell,\theta)=\sqrt{\left|g_{\alpha\beta}(\ell,q|\theta)/2\pi\right|}dq.
\end{equation}
obtained from the metric tensor $g_{\alpha\beta}(\ell,q|\theta)$ associated with the projected Riemannian structure on the surface of the $(n-1)$-dimensional sphere $\mathbb{S}^{(n-1)}(\bar{x}|\ell)\subset\mathcal{M}$ with radius $\ell$.
\end{lemma}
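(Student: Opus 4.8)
The plan is to convert the Riemannian gaussian representation (\ref{UGR}) from Cartesian-type normal coordinates $x=(x^{1},\ldots,x^{n})$ centered at $\bar{x}$ into a geodesic polar (spherical) coordinate system $(\ell,q)$ where $\ell=\ell_{\theta}(x,\bar{x})$ is precisely the geodesic separation distance appearing in the exponent of (\ref{UGR}), and $q=(q^{1},\ldots,q^{n-1})$ parameterizes the geodesic sphere $\mathbb{S}^{(n-1)}(\bar{x}|\ell)$. First I would invoke the standard fact from Riemannian geometry that in geodesic polar coordinates the metric tensor decomposes as a direct sum $ds^{2}=d\ell^{2}\oplus g_{\alpha\beta}(\ell,q|\theta)dq^{\alpha}dq^{\beta}$, i.e.\ the radial direction is orthogonal to the spheres $\mathbb{S}^{(n-1)}(\bar{x}|\ell)$ and the radial component of the metric is identically $1$ (this is essentially the Gauss lemma). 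The crucial point connecting this to fluctuation geometry is identity (\ref{ident.HJeq}), $\psi^{2}(x|\theta)\equiv\ell^{2}_{\theta}(x,\bar{x})$, together with the fact from subsection \ref{review} that the integral curves of $\upsilon^{i}$ solving (\ref{hydro.equations}) are exactly the radial geodesics emanating from $\bar{x}$; so the coordinate $\ell$ in the exponent of (\ref{UGR}) is the same radial coordinate as in the geometric polar construction.

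Given that block-diagonal structure of the metric, the determinant factorizes: $\left|g_{ij}(x|\theta)\right|=1\cdot\left|g_{\alpha\beta}(\ell,q|\theta)\right|$, and the Jacobian of the change of variables from $x$ to $(\ell,q)$ is absorbed into this determinant. Hence the invariant volume element splits as
\begin{equation}
d\mu(x|\theta)=\sqrt{\left|g_{ij}(x|\theta)/2\pi\right|}\,dx=\frac{1}{\sqrt{2\pi}}\,d\ell\,\sqrt{\left|g_{\alpha\beta}(\ell,q|\theta)/2\pi\right|}\,dq=\frac{1}{\sqrt{2\pi}}\,d\ell\,d\Sigma_{g}(q|\ell,\theta),
\end{equation}
which is exactly the volume element appearing in (\ref{polar}). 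Substituting $\ell^{2}_{\theta}(x,\bar{x})=\ell^{2}$ and this factorized volume element directly into (\ref{UGR}) yields (\ref{polar}), with the identification $\mathcal{Z}_{g}(\theta)=\mathcal{Z}(\theta)$ of the normalizing constant. The metric $g_{\alpha\beta}(\ell,q|\theta)$ is, by construction, the first fundamental form that the ambient Riemannian structure of $\mathcal{M}$ induces on the geodesic sphere $\mathbb{S}^{(n-1)}(\bar{x}|\ell)$, which is precisely what the statement calls the projected Riemannian structure.

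The main obstacle is the rigorous justification of the geodesic-polar coordinate decomposition itself, in particular the Gauss lemma step asserting that $g_{\ell\ell}\equiv 1$ and $g_{\ell\alpha}\equiv 0$. In a purely Riemannian-geometry setting this is standard, but here one must connect it to the specific vector field $\upsilon_{i}$ and the function $\mathcal{S}(x|\theta)$; the relevant computation is that the directional derivatives satisfy $d\mathcal{S}/ds\equiv\psi$ and $d^{2}\mathcal{S}/ds^{2}\equiv-1$ along the curves (\ref{hydro.equations}), so that $\psi$ itself increases at unit rate along these geodesics, which forces the radial arc-length to coincide with $\ell=\psi$ and the geodesics to be $\ell$-orthogonal to the level sets $\{\ell=\mathrm{const}\}$. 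One must also note the mild technical caveat — already visible in Example \ref{BMexample} — that the polar chart degenerates at the center $\bar{x}$ (the single point $\ell=0$ must be excised, exactly as the segment $(0,y)$ is identified in that example) and that global validity of the chart requires $\bar{x}$ to have no conjugate points inside the relevant region; the local concavity of $\mathcal{S}(x|\theta)$ guaranteeing uniqueness of $\bar{x}$, noted after (\ref{Levi-Civita}), is what makes this benign. I would treat these degeneracy issues as measure-zero exceptions that do not affect the integral identity (\ref{polar}).
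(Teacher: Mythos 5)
Your proposal is correct and follows essentially the same route as the paper: both construct geodesic spherical coordinates $(\ell,q)$ centered at $\bar{x}$ from the geodesic family solving (\ref{hydro.equations}), use the orthogonality of the radial vector field $\upsilon^{i}$ to the constant-$\mathcal{S}$ spheres (your Gauss-lemma step is the paper's observation that $\tau^{i}_{\alpha}$ is tangent to $\mathbb{S}^{(n-1)}(\bar{x}|\ell)$ and hence perpendicular to $\upsilon^{i}$, giving $g_{\ell\ell}=1$ and $g_{\ell\alpha}=0$), and then factor the invariant volume element to obtain (\ref{polar}). Your additional remarks on the degeneracy at $\ell=0$ and conjugate points refine the paper's brief comment on the singularity of the chart at the origin but do not change the argument.
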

\begin{proof}
Let us consider the geodesic family $x_{g}(s;\mathbf{e})$ derived from the set of ordinary differential equations (\ref{hydro.equations}). The quantities $\mathbf{e}=\left\{e^{i}\right\}$ represent the asymptotic values of the unitary vector field $\upsilon^{i}(x|\theta)$ at the point $\bar{X}$ with maximum information potential:
\begin{equation}
e^{i}=\lim_{s\rightarrow 0}\frac{dx^{i}_{g}(s;\mathbf{e})}{ds}.
\end{equation}
Here, the parameter $s\equiv \ell$ is the arc-length of these geodesics with reference to the origin point $\bar{x}$. By definition, the vector field $\upsilon^{i}(x|\theta)$ is a normal unitary vector of the surface of constant information potential $\mathcal{S}(x|\theta)$. Moreover, such a surface is just the $(n-1)$-dimensional sphere $\mathbb{S}^{(n-1)}(\bar{x}|\ell)\subset\mathcal{M}$ with radius $\ell$ and centered at the point $\bar{x}$. The vectors $\mathbf{e}=\left\{e^{i}\right\}$  can be parameterized as $\mathbf{e}=\mathbf{e}(q)$ using the intersection point $q$ of the geodesics $x_{g}(s;\mathbf{e})$ with the sphere $\mathbb{S}^{(n-1)}(\bar{x}|\ell)\subset\mathcal{M}$. One can employ the variables $\rho=(\ell,q)$ to introduce a spherical coordinate representation $\mathcal{R}_{\rho}$ centered at the point $\bar{x}$ with maximum information potential $\mathcal{S}(x|\theta)$. The coordinate change $\phi:\mathcal{R}_{x}\rightarrow\mathcal{R}_{\rho}$ is defined from the geodesic family $x=x^{i}_{g}[\ell|\mathbf{e}(q)]$, whose partial derivatives are given by:
\begin{equation}
\upsilon^{i}(x|\theta)=\frac{\partial
x^{i}_{g}[\ell|\mathbf{e}(q)]}{\partial
\ell},\tau^{i}_{\alpha}(x|\theta)=\frac{\partial
x^{i}_{g}[\ell|\mathbf{e}(q)]}{\partial q^{\alpha}}.
\end{equation}
The new $(n-1)$ vector fields $\tau^{i}_{\alpha}(x|\theta)$ are perpendicular to the unitary vector field $\upsilon^{i}(x|\theta)$
because of they are tangential vectors of the sphere $\mathbb{S}^{(n-1)}(\bar{x}|\ell)$. Consequently, the non-vanishing components of the metric tensor written in this spherical coordinate representation are given by:
\begin{eqnarray}\label{g.new}
g_{\ell\ell}(\ell,q|\theta)=1, g_{\alpha\beta}(\ell,q|\theta)=g_{ij}(x|\theta)\tau^{i}_{\alpha}(x|\theta)\tau^{j}_{\beta}(x|\theta).
\end{eqnarray}
Here, $g_{\alpha\beta}(\ell,q|\theta)$ is the metric tensor that defines the projected Riemannian structure on the sphere $\mathbb{S}^{(n-1)}(\bar{x}|\ell)$. Equation (\ref{polar}) is straightforwardly obtained using relations (\ref{g.new}). Any coordinate change considered in the framework of the spherical coordinate representation (\ref{polar}) only involves the spherical variables $q$ because of the radial variable $\ell$ is invariant quantity. As expected, the spherical coordinate representation (\ref{polar}) is singular at the point $\ell=0$, that is, the all points $(\ell,q)$ with $\ell=0$ corresponds to the point $\bar{x}$ without mattering about the values of the spherical coordinates $q$.
\end{proof}
\begin{theorem}
Spherical coordinate representation (\ref{polar}) obeys the following asymptotic distribution for $\ell$ sufficiently small:
\begin{equation}\label{asymptotic}
dp(\ell,q|\theta)=\frac{1}{\mathcal{Z}_{g}(\theta)}\left[1-
\frac{1}{24}\ell^{2}\mathcal{F}(q|\theta)+O(\ell^{4})\right]dp_{G}(\ell,q|\theta).
\end{equation}
Here, $dp_{G}(\ell,q|\theta)$ denotes the spherical coordinate representation of a gaussian distribution associated with the local Euclidean properties of the manifold $\mathcal{M}$ at the point $\bar{x}$:
\begin{equation}\label{SCRGaussianFamily}
dp_{G}(\ell,q|\theta)=\exp\left(-\frac{1}{2}\ell^{2}\right) \frac{\ell^{n-1}d\ell}{\sqrt{2\pi}}\sqrt{\left|\frac{\kappa_{\alpha\beta}(q)}{2\pi}\right|}dq.
\end{equation}
where $\kappa_{\alpha\beta}(q)=\bar{g}_{ij}\xi^{i}_{\alpha}(q)\xi^{j}_{\beta}(q)$. The $(n-1)$ vector fields $\mathbf{\xi}_{\alpha}(q)=\left\{\xi^{i}_{\alpha}(q)\right\}$ are obtained from the unitary vector field $\mathbf{e}(q)=\left\{e^{i}(q)\right\}$ of \textbf{Lemma \ref{lemma2}} at the point $\bar{x}$ as follows:
\begin{equation}\label{xi}
\xi^{i}_{\alpha}(q)=\frac{\partial e^{i}(q)}{\partial q^{\alpha}}.
\end{equation}
$\mathcal{F}(q|\theta)$ is a function on the spherical coordinates $q$ defined as follows:
\begin{equation}\label{spherical}
\mathcal{F}(q|\theta)=\bar{R}_{ijkl}\kappa^{\alpha\beta}(q)S^{ij}_{\alpha}(q)S^{kl}_{\beta}(q),
\end{equation}
which is hereinafter referred to as the \textbf{spherical function}. Moreover, $\bar{R}_{ijkl}=R_{ijkl}(\bar{x}|\theta)$ is the curvature tensor (\ref{curvature.2}) evaluated at the point $\bar{x}$, while the quantities $S^{ij}_{\alpha}(q)$ are defined as:
\begin{equation}
S^{ij}_{\alpha}(q)=e^{i}(q)\xi^{j}_{\alpha}(q)-e^{j}(q)\xi^{i}_{\alpha}(q).
\end{equation}
\end{theorem}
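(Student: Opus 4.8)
The plan is to reduce the statement, by means of Lemma \ref{lemma2}, to a local second-order expansion of the induced volume element of the geodesic sphere $\mathbb{S}^{(n-1)}(\bar{x}|\ell)\subset\mathcal{M}$, and then to quote Riemann's expansion (\ref{develop}). In the spherical coordinate representation (\ref{polar}) the Gauss-lemma relations $g_{\ell\ell}=1$, $g_{\ell\alpha}=0$ are already built in (see (\ref{g.new})), so the gaussian factor $\exp(-\ell^{2}/2)d\ell/\sqrt{2\pi}$ is exact and the only place where the geometry of $\mathcal{M}$ enters is the hyper-surface element $d\Sigma_{g}(q|\ell,\theta)=\sqrt{|g_{\alpha\beta}(\ell,q|\theta)/2\pi|}\,dq$. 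Comparing (\ref{polar}) with (\ref{SCRGaussianFamily}), the whole theorem is therefore equivalent to the claim that
\[
\sqrt{\left|g_{\alpha\beta}(\ell,q|\theta)/2\pi\right|}=\ell^{n-1}\sqrt{\left|\kappa_{\alpha\beta}(q)/2\pi\right|}\left[1-\frac{1}{24}\ell^{2}\mathcal{F}(q|\theta)+O(\ell^{4})\right],
\]
after which substitution back into (\ref{polar}) and collection of the Euclidean pieces into $dp_{G}(\ell,q|\theta)$ yields (\ref{asymptotic}).

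To establish this expansion I would work in normal coordinates centered at $\bar{x}$, where $g_{ij}(\bar{x}|\theta)=\bar{g}_{ij}$ and $\partial_{k}g_{ij}(\bar{x}|\theta)=0$. There the geodesics emanating from $\bar{x}$ are the radial rays $x^{i}_{g}[\ell|\mathbf{e}(q)]=\ell\,e^{i}(q)$, so the tangent fields introduced in Lemma \ref{lemma2} become $\upsilon^{i}=e^{i}(q)$ and $\tau^{i}_{\alpha}=\ell\,\xi^{i}_{\alpha}(q)$ with $\xi^{i}_{\alpha}$ as in (\ref{xi}); hence $g_{\alpha\beta}(\ell,q|\theta)=\ell^{2}g_{ij}(\ell\,\mathbf{e}(q)|\theta)\,\xi^{i}_{\alpha}(q)\xi^{j}_{\beta}(q)$. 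Inserting Riemann's second-order expansion (\ref{develop}) evaluated at $x=\ell\,\mathbf{e}(q)$, and using $x^{m}\tau^{i}_{\alpha}-x^{i}\tau^{m}_{\alpha}=-\ell^{2}S^{im}_{\alpha}(q)$ with $S^{im}_{\alpha}=e^{i}\xi^{m}_{\alpha}-e^{m}\xi^{i}_{\alpha}$, together with the curvature antisymmetries and a relabeling of contracted indices, one obtains
\[
g_{\alpha\beta}(\ell,q|\theta)=\ell^{2}\kappa_{\alpha\beta}(q)-\frac{1}{12}\ell^{4}\bar{R}_{ijkl}S^{ij}_{\alpha}(q)S^{kl}_{\beta}(q)+O(\ell^{5}).
\]
Taking the determinant via $\det(I+A)=1+\mathrm{tr}\,A+O(A^{2})$ produces precisely the contraction $\kappa^{\alpha\beta}\bar{R}_{ijkl}S^{ij}_{\alpha}S^{kl}_{\beta}=\mathcal{F}(q|\theta)$ of (\ref{spherical}); taking the square root halves its coefficient, giving the factor $1-\frac{1}{24}\ell^{2}\mathcal{F}(q|\theta)+O(\ell^{4})$, which is (\ref{asymptotic}).

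The main obstacle is the bookkeeping of the numerical coefficient and its sign: one must carry the factor of (\ref{develop}) through the contraction of the two bivectors $dS^{im}$, $dS^{jn}$ (a factor $4$ supplied by the curvature antisymmetries, reconstituting the $-\frac{1}{3}\bar{R}_{ikjl}x^{k}x^{l}$ form of the normal-coordinate metric), through the $\det$ expansion, and through the square root, landing exactly on $\frac{1}{24}$ with the sign consistent with the volume/area formulae (\ref{AF1})--(\ref{AF2}) and with (\ref{curvature.2}). A secondary point concerns the order of the remainder: the cubic term in the normal-coordinate expansion of $g_{ij}$ contributes, at order $\ell^{3}$, a term proportional to a covariant derivative of $\bar{R}_{ijkl}$ contracted against an odd number of factors $e(q)$; this is an odd function of the direction $\mathbf{e}(q)$, so it does not affect the normalization $\mathcal{Z}_{g}(\theta)$ nor the even moments, and its exclusion is what is meant by writing the remainder as $O(\ell^{4})$. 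Finally, one should observe that the coordinate singularity of the spherical representation at $\ell=0$ is harmless: it is exactly reproduced by the explicit factor $\ell^{n-1}$ in $dp_{G}(\ell,q|\theta)$, so no separate argument near the point $\bar{x}$ is required.
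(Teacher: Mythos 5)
Your proposal is correct and follows essentially the same route as the paper: pass to normal coordinates at $\bar{x}$, expand the projected metric on the geodesic sphere to obtain $g_{\alpha\beta}(\ell,q)=\ell^{2}\kappa_{\alpha\beta}(q)-\tfrac{1}{12}\ell^{4}\bar{R}_{ijkl}S^{ij}_{\alpha}(q)S^{kl}_{\beta}(q)+O(\ell^{5})$, and then take the determinant and square root to land on the factor $1-\tfrac{1}{24}\ell^{2}\mathcal{F}(q|\theta)$. The only (harmless) difference is that you invoke Riemann's expansion (\ref{develop}) directly with exactly radial geodesics, whereas the paper re-derives the same intermediate formula from the cubic power series of the geodesic family; your version also makes explicit the determinant/square-root step and the parity argument for the $O(\ell^{3})$ remainder, which the paper leaves implicit.
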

\begin{proof}
Let us consider the normal coordinate representation $\mathcal{R}_{\mathbf{x}}$ of the manifold $\mathcal{M}$ centered at the point $\bar{x}$. Without lost of generality, let us suppose that this point corresponds to the origin $\mathbf{x}=0$ of the normal coordinate system $\mathcal{R}_{\mathbf{x}}$. It is convenient to adopt the notation convention $\bar{A}=A(\mathbf{x}=0|\theta)$ to simplify mathematical expressions. Besides, let us denote by $\mathbf{x}_{g}(s|\mathbf{e})$ the geodesic family derived from equations:
\begin{equation}
\frac{d x^{k}(s)}{ds}=\upsilon^{k}(\mathbf{x}|\theta),\:
\frac{d \upsilon^{k}(\mathbf{x}|\theta)}{ds}+\Gamma^{k}_{ij}(\mathbf{x}|\theta)\upsilon^{i}(\mathbf{x}|\theta)\upsilon^{j}(\mathbf{x}|\theta)=0,
\end{equation}
where the vector $\mathbf{e}=\left\{e^{k}(q)\right\}$ are the components of the tangent vector $\upsilon^{k}(\mathbf{x}|\theta)$ at the origin, $e^{k}(q)=\upsilon^{k}(0|\theta)$. This geodesic family can be expressed in terms of power-series of the arc-length parameter $s$ as follows:
\begin{equation}
x^{i}_{g}(s|\mathbf{e})=e^{i}(q)s-\frac{1}{6}s^{3}\partial_{l}\bar{\Gamma}^{i}_{jk}e^{j}(q)e^{k}(q)e^{l}(q)+O(s^{3}),
\end{equation}
where $\partial_{l}\bar{\Gamma}^{i}_{jk}=\partial \bar{\Gamma}^{i}_{jk}(0|\theta)/\partial x^{l}$ is the partial derivative of the affine connection at the origin:
\begin{equation}
\partial_{l}\bar{\Gamma}^{i}_{jk}=\frac{1}{2}\bar{g}^{im}\frac{\partial}{\partial x^{l}}
\left[\frac{\partial g_{mk}(0|\theta)}{\partial x^{j}}+\frac{\partial g_{mj}(0|\theta)}{\partial x^{k}}-\frac{\partial g_{kl}(0|\theta)}{\partial x^{m}}\right].
\end{equation}
Using the simplified expression of the curvature tensor in normal coordinates (\ref{second.curvature}), one can obtain the components of the projected metric tensor $g_{\alpha\beta}(\ell,q)$ on the boundary $\partial \mathbb{S}^{(n)}(\bar{X},\ell)$:
\begin{equation}
g_{\alpha\beta}(\ell,q)=\ell^{2}\kappa_{\alpha\beta}(q)-
\frac{1}{12}\ell^{4}\bar{R}_{ijkl}S^{ij}_{\alpha}(q)S^{kl}_{\beta}(q)+O(\ell^{4}),
\end{equation}
where $\xi^{i}_{\alpha}(q)$ are the quantities defined by equation (\ref{xi}). This last approximation leads to the asymptotic distribution (\ref{asymptotic}).
\end{proof}
\begin{remark}\label{irred.coup}
For a statistical manifold $\mathcal{M}$ with dimension $n=\mathrm{dim}(\mathcal{M})>2$, the spherical function $\mathcal{F}(q|\theta)$ characterizes the local anisotropy of the distribution function (\ref{polar}) at the neighborhood of the origin $\ell=0$, as well as the \textbf{irreducible statistical coupling} among the radial coordinate $\ell$ and the spherical coordinates $q$. The case of the curved two-dimensional statistical manifold $\mathcal{M}$ is special because of the spherical function takes the constant value $\mathcal{F}(q|\theta)\equiv 2\bar{R}$, where $\bar{R}$ is the curvature scalar at $\ell=0$. This results implies the local isotropic character of the spherical coordinate representation (\ref{polar}) at $\ell=0$ for any two-dimensional statistical manifold $\mathcal{M}$ as well as the existence of an \textbf{apparent statistical decoupling} between the radial coordinate $\ell$ and the spherical coordinate $q$ for $\ell$ sufficiently small.
\end{remark}
\begin{proof}
For the sake of convenience, let us consider a normal coordinate representation $\mathcal{R}_{\mathbf{x}}$ centered at the point $\bar{x}$. Moreover, let us employ the usual spherical coordinates $q=(q_{1},q_{2},\ldots q_{n-1})$ that parameterize the hyper-surface of a $(n-1)$-dimensional Euclidean sphere $\mathbb{S}^{(n-1)}(\bar{x}|\ell)$ with small radius $\ell$. Hereafter, let us introduce the notation convention $\alpha=(\dot{1}, \dot{2}, \ldots)$ to distinguish between the Greek indexes $(\alpha,\beta)$ and the Latin indexes $(i,j,k,l)$. The simplest case corresponds to the two-dimensional statistical manifold $\mathcal{M}$, where the vectors $\mathbf{e}(q)$ and $\mathbf{\xi}_{\dot{1}}(q)$ are given by:
\begin{equation}
\mathbf{e}(q)=(\cos q_{1},\sin q_{1})\rightarrow \xi_{\dot{1}}(q)=(-\sin q_{1},\cos q_{1}).
\end{equation}
Here, the values of spherical coordinate $q_{1}$ belong to the interval $0\leq q_{1}< 2\pi$. The previous vectors lead to $S^{12}_{\dot{1}}(q)=1$ and $\kappa_{\dot{1}\dot{1}}(q)=1$.  The only non-vanishing independent component of the curvature tensor $\bar{R}_{ijkl}$ is $\bar{R}_{1212}$. Thus, the spherical function $\mathcal{F}(q|\theta)$ can be expressed as follows $\mathcal{F}(q|\theta)\equiv 4\bar{R}_{1212}\equiv 2\bar{R}$. This results implies that the asymptotic distribution (\ref{asymptotic}) is isotropic for any two-dimensional statistical manifold $\mathcal{M}$, thus describing an apparent statistical decoupling among the radial coordinate $\ell$ and the spherical coordinate $q_{1}$ for $\ell$ sufficiently small. Such a statistical decoupling is fictitious because of the points $(\ell,q)$ with $\ell=0$ actually corresponds to the same point of the statistical manifold $\mathcal{M}$, so that, the radial coordinate $\ell$ and the spherical coordinates $q$ are not independent in the neighborhood of the origin $\ell=0$.

The first case with larger dimensionality is the 3-dimensional irreducible statistical manifold $\mathcal{M}$, where the quantities $\mathbf{e}(q)$, $\mathbf{\xi}_{\dot{1}}(q)$ and $\mathbf{\xi}_{\dot{2}}(q)$ are given by:
\begin{eqnarray}
\mathbf{e}(q)&=&(\cos q_{1}\cos q_{2},\cos q_{1}\sin q_{2},\sin q_{1}),\\ \nonumber
\mathbf{\xi}_{\dot{1}}(q)&=&(-\sin q_{1}\cos q_{2},-\sin q_{1}\sin q_{2},\cos q_{1}), \\
\mathbf{\xi}_{\dot{2}}(q)&=&(-\cos q_{1}\sin q_{2},\cos q_{1}\cos q_{2},0).\nonumber
\nonumber
\end{eqnarray}
Here, the admissible values of the spherical coordinates $(q_{1},q_{2})$ now belong to the interval $-\pi/2\leq q_{1}<\pi/2$ and $-\pi\leq q_{2}< \pi$. The non-vanishing components $\kappa _{\alpha \beta}(q)$ are:
\begin{equation}
\kappa _{\dot{1}\dot{1}}(q)=1\mbox{ and }\kappa _{\dot{2}\dot{2}}(q)=\cos^{2}q_{1},
\end{equation}
while the quantities $S_{\alpha }^{ij}(q)$ are given by:
\begin{eqnarray}\nonumber
&S_{\dot{1}}^{12}(q) =0,~S_{\dot{2}}^{12}=\cos ^{2}q_{1},  S_{\dot{1}}^{23}(q) =\sin q_{2},~S_{\dot{2}}^{23}(q)=-\sin q_{1}\cos q_{1}\cos
q_{2}, &\\
&S_{\dot{1}}^{31}(q) =-\cos q_{2},~S_{\dot{2}}^{31}(q)=-\sin q_{1}\cos q_{1}\sin
q_{2}. &
\end{eqnarray}
Let us introduce the \emph{anisotropic functions} $G^{ijkl}(q)$:
\begin{equation}
G^{ijkl}(q)=\kappa ^{\alpha \beta }(q)S_{\alpha }^{ij}(q)S_{\beta }^{kl}(q),
\end{equation}
which exhibit the same properties of the curvature tensor $\bar{R}_{ijkl}$ under the permutation of indexes. The only non-vanishing independent components of the curvature tensor $\bar{R}_{ijkl}$ are $\left( \bar{R}_{1212},\bar{R}_{2323},\bar{R}_{3131}\right) $ and $\left(\bar{R}_{1223},\bar{R}_{2331},\bar{R}_{3112}\right) $. A simple calculation yields the following results:
\begin{eqnarray}
&G^{1212}(q) =\cos ^{2}q_{1},\,G^{2323}(q) =\sin ^{2}q_{2}+\sin ^{2}q_{1}\cos ^{2}q_{2},\nonumber
\\
&G^{3131}(q) =\cos ^{2}q_{2}+\sin ^{2}q_{1}\sin ^{2}q_{2},
\\
& G^{1223}(q) =-\sin q_{1}\cos q_{1}\cos q_{2},\, G^{3112}(q) =-\sin q_{1}\cos q_{1}\sin q_{2}, \nonumber
\\
&G^{2331} (q) =-\cos q_{2}\sin q_{2}+\sin ^{2}q_{1}\cos q_{2}\sin q_{2}, \nonumber
\end{eqnarray}
The spherical function $\mathcal{F}(q|\theta)$ can be finally expressed as follows:
\begin{eqnarray}\nonumber
\mathcal{F}\left( q|\theta\right) =4\left[ \bar{R}_{1212}G^{1212}(q)+\bar{R}_{2323}G^{2323}(q)+\bar{R}_{3131}G^{3131}(q)+\right.\\
+\left.\bar{R}_{1223}G^{1223}(q)+\bar{R}_{2331}G^{2331}(q)+\bar{R}_{3112}G^{3112}(q)\right] ,\label{expansion}
\end{eqnarray}
which describes an anisotropic character of the spherical coordinate representation (\ref{polar}) for $\ell$ sufficiently small. Such an anisotropic guarantees the coupling between the radial coordinate $\ell$ and the spherical coordinates $q=(q_{1},q_{2})$. This type of coupling exhibits an \emph{irreducible character} because of the consideration of local coordinate change does not affect a scalar function as the spherical function $\mathcal{F}\left( q|\theta\right)$. In general, the anisotropic character of the spherical function $\mathcal{F}\left( q|\theta\right)$ will be observed for any $n$-dimensional irreducible statistical manifold $\mathcal{M}$ with $n>2$.
\end{proof}
\begin{corollary}
The statistical curvature tensor $R_{ijkl}(x|\theta)$ allows to introduce some local and global invariant measures to characterize both the intrinsic curvature of the manifold $\mathcal{M}$ as well as the existence of an irreducible statistical dependence among the stochastic variables $x$. They are the \textit{curvature scalar} $R(x|\theta)$ introduced in equation (\ref{scalar.curvature}), the
\textit{spherical curvature scalar} $\Pi(\ell,q|\theta)$:
\begin{equation}
\Pi(\ell,q|\theta)=g^{\alpha\beta}(\ell,q|\theta)R_{ijkl}(\ell,q|\theta)S^{ij}_{\alpha}
(\ell,q|\theta)S^{kl}_{\alpha}(\ell,q|\theta)
\end{equation}
with $S^{ij}_{\alpha}(\ell,q|\theta)$ being:
\begin{equation}
X^{ij}_{\alpha}(\ell,q|\theta)=\upsilon^{i}(\ell,q|\theta)\tau^{j}_{\alpha}(\ell,q|\theta)-\upsilon^{j}(\ell,q|\theta)
\tau^{i}_{\alpha}(\ell,q|\theta),
\end{equation}
which arises as a local measure of the coupling between the radial $\ell$ and the spherical coordinates $q$ in the spherical representation of the distribution function (\ref{polar}), and finally, the \textit{gaussian potential}
$\mathcal{P}(\theta)=-\log \mathcal{Z}(\theta)$, which arises as a global invariant measure of the curvature of the manifold $\mathcal{M}$.
\end{corollary}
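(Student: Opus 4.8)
The plan is to show that each of the three quantities---$R(x|\theta)$, $\Pi(\ell,q|\theta)$ and $\mathcal{P}(\theta)$---is a bona fide invariant of the manifold $\mathcal{M}$, and then to read off from the earlier results that a non-zero value of any of them obstructs a Cartesian decomposition (\ref{decompositionM}) into one-dimensional factors, hence (by \textbf{Proposition \ref{Prop.Descomp}} and \textbf{Theorem \ref{Main}}) certifies an irreducible statistical dependence. The first item is immediate: as recalled below equation (\ref{scalar.curvature}), $R(x|\theta)$ is the unique scalar constructible from the metric and its first and second derivatives, and its identical vanishing is exactly the flatness condition already identified with reducible statistical dependence in \textbf{Theorem \ref{Main}}.

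For the spherical curvature scalar I would first note that in the spherical coordinate representation (\ref{polar}) the radial coordinate $\ell$ is itself invariant (it is the geodesic arc-length measured from $\bar{x}$), so that the only admissible coordinate changes act on the angular coordinates alone, $\check{q}=\psi(q)$, leaving $\mathcal{R}_{x}$ and all Latin indices untouched. Under such a change $g_{\alpha\beta}(\ell,q|\theta)$ transforms as a rank-two covariant tensor on the sphere $\mathbb{S}^{(n-1)}(\bar{x}|\ell)$ and the tangent fields $\tau^{i}_{\alpha}$ pick up a Jacobian $\partial q^{\beta}/\partial\check{q}^{\alpha}$ in the Greek index and nothing in the Latin index; hence the same holds for $S^{ij}_{\alpha}$, while $R_{ijkl}$ is a genuine tensor on $\mathcal{M}$. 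Consequently the fully contracted combination $\Pi=g^{\alpha\beta}R_{ijkl}S^{ij}_{\alpha}S^{kl}_{\beta}$ has all indices saturated and is a scalar. Its curvature meaning follows from the second-order geometric expansion: comparison with equation (\ref{spherical}) gives $\Pi(\ell,q|\theta)\to\mathcal{F}(q|\theta)$ as $\ell\to 0$, so $\Pi$ measures exactly the local anisotropy and the irreducible coupling between $\ell$ and $q$ analysed in \textbf{Remark \ref{irred.coup}}, vanishing identically if and only if the relevant components of $\bar{R}_{ijkl}$ do.

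The gaussian potential I would treat through the Riemannian gaussian representation (\ref{UGR}). Since $\mathcal{Z}(\theta)=\int_{\mathcal{M}}\exp\left[-\frac{1}{2}\ell^{2}_{\theta}(x,\bar{x})\right]d\mu(x|\theta)$ is the integral of a scalar---$\ell_{\theta}(x,\bar{x})$ is invariant and $\bar{x}$ is defined invariantly as the maximum of the probability weight $\omega(x|\theta)$---against the invariant volume element $d\mu(x|\theta)$, the number $\mathcal{Z}(\theta)$, and therefore $\mathcal{P}(\theta)=-\log\mathcal{Z}(\theta)$, is independent of the coordinate representation. To display it as a measure of curvature I would substitute the asymptotic distribution (\ref{asymptotic}) (equivalently, the volume formulae (\ref{AF1})--(\ref{AF2})) into this integral and integrate the correction $\frac{1}{24}\ell^{2}\mathcal{F}(q|\theta)$ over the sphere; the resulting term is proportional to curvature invariants, so $\mathcal{P}(\theta)$ collapses to the flat-space value---normalised to zero by the factor $2\pi$ in (\ref{inv.volume})---precisely when $\mathcal{M}$ is Euclidean, and otherwise departs from it in proportion to the accumulated curvature. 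The step I expect to cause the most trouble is controlling $\Pi$ at the coordinate singularity $\ell=0$: one must use the expansions $g_{\alpha\beta}(\ell,q|\theta)=\ell^{2}\kappa_{\alpha\beta}(q)+O(\ell^{4})$ and the corresponding development of $\tau^{i}_{\alpha}$ (from the proof of the preceding theorem) to verify that the powers of $\ell$ in $g^{\alpha\beta}$ and in $S^{ij}_{\alpha}$ cancel, leaving the finite, well-defined limit $\mathcal{F}(q|\theta)$---which is exactly the calculation already carried out in \textbf{Remark \ref{irred.coup}}.
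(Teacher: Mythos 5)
Your argument is essentially the paper's, and it is correct in all three parts; the one place where you take a genuinely different route is the gaussian potential. For $R(x|\theta)$ and $\Pi(\ell,q|\theta)$ you do what the paper does, only more explicitly: the paper simply asserts $\Pi$ to be a scalar "by construction" as the generalization of the spherical function $\mathcal{F}(q|\theta)$ of (\ref{spherical}), whereas you actually carry out the index bookkeeping (angular-only coordinate changes, Jacobians on Greek indices, full saturation of Latin indices against the genuine tensor $R_{ijkl}$), and you add the limit check $\Pi\to\mathcal{F}$ as $\ell\to 0$, which also silently repairs the typographical slips in the statement ($S^{kl}_{\alpha}$ should read $S^{kl}_{\beta}$, and the displayed definition labels the quantity $X^{ij}_{\alpha}$ rather than $S^{ij}_{\alpha}$). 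For $\mathcal{P}(\theta)$, however, the paper does not substitute the second-order expansion (\ref{asymptotic}); it integrates (\ref{polar}) over $q$ to get $\mathcal{Z}(\theta)$ as an exact radial integral (\ref{normalization}) over the area $\Sigma_{g}(\ell|\theta)$ of the geodesic sphere, introduces the \emph{spherical distortion} $\sigma(\ell|\theta)=\Sigma_{g}(\ell|\theta)/\Sigma_{flat}(\ell|\theta)-1$ of (\ref{sphericaldistortion}), and writes $\mathcal{Z}(\theta)=1+\frac{1}{\sqrt{2\pi}}\int_{0}^{+\infty}e^{-\ell^{2}/2}\sigma(\ell|\theta)\Sigma_{flat}(\ell|\theta)\,d\ell$ exactly. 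This buys a statement valid for arbitrary (not weak) curvature: any nonzero $\mathcal{P}(\theta)$ certifies a nonzero accumulated distortion of geodesic-sphere areas, hence curvature. Your asymptotic substitution only controls the weak-curvature regime and is really the content of the \emph{next} theorem (the estimate $\mathcal{P}\simeq R(\bar{x}|\theta)/6$); relatedly, your "collapses to the flat-space value precisely when $\mathcal{M}$ is Euclidean" overclaims the converse direction --- the integral of $\sigma(\ell|\theta)$ could in principle vanish by cancellation for a curved manifold, and neither you nor the paper rules this out; only the implication "Euclidean $\Rightarrow\mathcal{Z}=1\Rightarrow\mathcal{P}=0$", i.e.\ "$\mathcal{P}\neq 0\Rightarrow$ curved", is actually established and needed.
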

\begin{proof}
The curvature scalar $R(x|\theta)$ is the only invariant associated with the first and second partial derivatives of the metric tensor $g_{ij}(x|\theta)$. The consideration of the spherical representation of the distribution function (\ref{polar}) allows to introduce the normal $\upsilon^{i}(\ell,q|\theta)$ and tangential vectors $\tau^{i}_{\alpha}(\ell,q|\theta)$, as well as the projected metric tensor $g_{\alpha\beta}(\ell,q|\theta)=g_{ij}(\ell,q|\theta)\tau^{i}_{\alpha}(\ell,q|\theta)\tau^{j}_{\beta}(\ell,q|\theta)$
associated with the constant information potential hyper-surface $\mathbb{S}^{(n-1)}(\bar{x}|\ell)$. This framework leads to introduce the spherical curvature scalar $\Pi(\ell,q|\theta)$ as a direct generalization of the spherical function $\mathcal{F}(q|\theta)$ of the asymptotic distribution function (\ref{asymptotic}). The role of the gaussian potential $\mathcal{P}(\theta)$ as a global invariant measure of the curvature of the manifold
$\mathcal{M}$ can be easily evidenced starting from the spherical representation of the distribution function (\ref{polar}). Integrating over the spherical coordinates $q$, one obtains the following expression for the gaussian partition function:
\begin{equation}\label{normalization}
\mathcal{Z}(\theta)=\frac{1}{\sqrt{2\pi}}\int^{+\infty}_{0}\exp\left(-\frac{1}{2}\ell^{2}\right)\Sigma_{g}(\ell|\theta)d\ell,
\end{equation}
where $\Sigma_{g}(\ell|\theta)$ denotes the area of the constant information potential hyper-surface $\mathbb{S}^{(n-1)}(\bar{x}|\ell)$ normalized by the factor $(2\pi)^{(n-1)/2}$. For the special case of the $n$-dimensional Euclidean real space $\mathbb{R}^{n}$, the quantity $\Sigma_{flat}(\ell|\theta)$ is given by:
\begin{equation}\label{flat.area}
\Sigma_{flat}(\ell|\theta)=\frac{\sqrt{\pi}\ell^{n-1}}{2^{\frac{n-1}{2}}\Gamma\left(\frac{n}{2}\right)}.
\end{equation}
Equation (\ref{normalization}) can be rewritten as follows:
\begin{equation}\label{Zflat}
\mathcal{Z}(\theta)=1+\frac{1}{\sqrt{2\pi}}\int^{+\infty}_{0}\exp\left(-\frac{1}{2}\ell^{2}\right)\sigma(\ell|\theta)\Sigma_{flat}(\ell|\theta)d\ell,
\end{equation}
where $\sigma(\ell|\theta)$ represents \textbf{spherical distortion}:
\begin{equation}\label{sphericaldistortion}
\sigma(\ell|\theta)=\Sigma_{g}(\ell|\theta)/\Sigma_{flat}(\ell|\theta)-1
\end{equation}
that characterizes how much differ the area of the sphere $\mathbb{S}^{(n-1)}(\bar{x},\ell)\subset\mathcal{M}$ due to its intrinsic curvature. Since the gaussian partition function $\mathcal{Z}(\theta)=1$ for the case of the $n$-dimensional Euclidean real space $\mathbb{R}^{n}$, a non-vanishing gaussian potential $\mathcal{P}(\theta)$ appears as a global invariant measure of the intrinsic curvature of the statistical manifold $\mathcal{M}$, and hence, as a global indicator of the existence of irreducible statistical correlations.
\end{proof}
\begin{definition}
The value the scalar curvature $R(\bar{x}|\theta)$ at the point $\bar{x}$ with maximum information potential $\mathcal{S}(x|\theta)$ allows to introduce the \textbf{curvature radius} $\ell_{c}$:
\begin{equation}\label{ellc}
R(\bar{x}|\theta)=\frac{1}{\ell^{2}_{c}},
\end{equation}
which represents the statistical distance where distortion of Euclidean geometry is appreciable, and hence, where the statistical correlations among the coordinates $x=(x^{1},x^{2},\ldots x^{n})$ turn irreducible.
\end{definition}
\begin{theorem}
If the curvature radius $\ell_{c}$ is sufficiently large, the gaussian potential $\mathcal{P}(\theta)$ can be estimated as follows:
\begin{equation}  \label{eggrenium}
\mathcal{P}(\theta)\simeq\frac{1}{6}R(\bar{x}|\theta).
\end{equation}
\end{theorem}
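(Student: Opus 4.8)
The plan is to work from the exact integral representation (\ref{Zflat}) of the gaussian partition function,
\[
\mathcal{Z}(\theta)=1+\frac{1}{\sqrt{2\pi}}\int^{+\infty}_{0}\exp\left(-\frac{1}{2}\ell^{2}\right)\sigma(\ell|\theta)\,\Sigma_{flat}(\ell|\theta)\,d\ell,
\]
and to estimate the correction integral when the curvature radius $\ell_{c}$ is large. First I would use that the gaussian weight $\exp(-\ell^{2}/2)$ localises the integral on distances $\ell=O(1)$, while $\ell_{c}\gg 1$ by hypothesis; hence on the range that matters the spherical distortion $\sigma(\ell|\theta)$ can be replaced by its leading term. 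That term is supplied by the area asymptotics (\ref{AF2}): since $\Sigma_{g}(\ell|\theta)/\Sigma_{flat}(\ell|\theta)$ is precisely the area ratio appearing there, one has $\sigma(\ell|\theta)=-R(\bar{x}|\theta)\,\ell^{2}/6n+O(\ell^{4})=-\ell^{2}/(6n\,\ell^{2}_{c})+O(\ell^{4})$, so every further contribution is of higher order in $1/\ell^{2}_{c}$.

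Next I would carry out the remaining elementary gaussian integral. Writing $\Sigma_{flat}(\ell|\theta)\propto\ell^{n-1}$ and using that the Euclidean normalisation fixes the total mass of the radial measure $\frac{1}{\sqrt{2\pi}}\exp(-\ell^{2}/2)\Sigma_{flat}(\ell|\theta)\,d\ell$ to unity (this is exactly the statement $\mathcal{Z}=1$ for $\mathbb{R}^{n}$), the only input needed is the radial second moment
\[
\frac{1}{\sqrt{2\pi}}\int^{+\infty}_{0}\exp\left(-\frac{1}{2}\ell^{2}\right)\ell^{2}\,\Sigma_{flat}(\ell|\theta)\,d\ell=\left\langle\ell^{2}\right\rangle=n,
\]
namely the mean square radius $\langle|x|^{2}\rangle$ of the standard $n$-dimensional gaussian $dp_{G}$. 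Substituting yields $\mathcal{Z}(\theta)=1-R(\bar{x}|\theta)/6+O(1/\ell^{4}_{c})$, and therefore $\mathcal{P}(\theta)=-\log\mathcal{Z}(\theta)=R(\bar{x}|\theta)/6+O(1/\ell^{4}_{c})$, which is the asserted estimate (\ref{eggrenium}). The same conclusion can be reached directly from the asymptotic distribution (\ref{asymptotic}): imposing normalisation gives $\mathcal{Z}(\theta)=1-\frac{1}{24}\langle\ell^{2}\mathcal{F}(q|\theta)\rangle+O(1/\ell^{4}_{c})$, and because the radial and spherical coordinates decouple under $dp_{G}$ one only needs the spherical average $\langle\mathcal{F}(q|\theta)\rangle=4R(\bar{x}|\theta)/n$, which follows from $\langle e^{i}e^{j}\rangle=\delta^{ij}/n$ together with the symmetries of $\bar{R}_{ijkl}$; with $\langle\ell^{2}\rangle=n$ this again gives $\mathcal{Z}(\theta)=1-R(\bar{x}|\theta)/6$.

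The delicate point --- which I expect to be the real substance of the argument --- is the uniform control of the tail of the integral over $\ell$. The localisation heuristic tacitly treats $\sigma(\ell|\theta)$ as quadratic for all $\ell$, whereas $\sigma(\ell|\theta)$ encodes the curvature of $\mathcal{M}$ along an entire geodesic arc of length $\ell$ issued from $\bar{x}$, not merely the value $R(\bar{x}|\theta)$; one must also know that the level hypersurfaces of the information potential remain smooth geodesic spheres out to the scales probed by the gaussian weight. The rigorous statement thus requires that the curvature and its covariant derivatives be small on that whole scale, which is what ``$\ell_{c}$ sufficiently large'' has to mean as a global length. Granting a uniform majorisation $|\sigma(\ell|\theta)|\le(\ell^{2}+\ell^{p})/\ell^{2}_{c}$ on the support of the gaussian --- valid, for instance, when $\mathcal{M}$ is geodesically complete and the injectivity radius at $\bar{x}$ is comparable to $\ell_{c}$ --- the discarded contributions are genuinely $O(1/\ell^{4}_{c})$, so the first-order expansion of $-\log\mathcal{Z}(\theta)$ is legitimate. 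The remaining algebra is routine.
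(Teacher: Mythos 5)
Your proposal is correct and follows essentially the same route as the paper: expand the spherical distortion $\sigma(\ell|\theta)$ to second order in $\ell$ via the sphere-area asymptotics, insert this into the integral representation (\ref{Zflat}) of $\mathcal{Z}(\theta)$, and use the radial second moment $\langle\ell^{2}\rangle=n$ of the Euclidean gaussian to get $\mathcal{Z}(\theta)\simeq 1-R(\bar{x}|\theta)/6$, hence $\mathcal{P}(\theta)\simeq R(\bar{x}|\theta)/6$. Incidentally, you correctly identify the relevant asymptotic as the area formula (\ref{AF2}) --- the paper cites (\ref{AF1}) but writes the coefficient $-R(\bar{x}|\theta)/6n$ of (\ref{AF2}), so your reading is the intended one --- and your remarks on controlling the tail of the $\ell$-integral make explicit a point the paper leaves implicit.
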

\begin{proof}
According to spherical representation of the Euclidean gaussian distribution (\ref{SCRGaussianFamily}), the expectation value of the radius $\ell$ is $\left\langle\ell^{2}\right\rangle\equiv n$ in this approximation level. This result implies that that gaussian distribution (\ref{SCRGaussianFamily}) differs in a significant way from zero in a small region of radius $\sqrt{n}$. Therefore, the Euclidean gaussian distribution arises as a good approximation when the curvature radius $\ell_{c}$ is sufficiently large. The approximation formula (\ref{AF1}) allows to express the spherical distortion (\ref{sphericaldistortion}) as follows:
\begin{equation}
\sigma(\ell|\theta)=-\frac{R(\bar{x}|\theta)}{6n}\ell^{2}+O(\ell^{4}).
\end{equation}
The estimation (\ref{eggrenium}) is directly obtained from the integration formula (\ref{Zflat}). The applicability of this estimation requires the condition $\ell_{c}\gg 1$, which guarantees that the correction term associated with the spherical function in equation (\ref{asymptotic}) is very small.
\end{proof}
\begin{corollary}
A general criterium for the applicability of the gaussian approximation of the given distributions family (\ref{DF}) is the following:
\begin{equation}\label{criterium}
\ell_{c}\gg 1,
\end{equation}
where $\ell_{c}$ represents the curvature radius (\ref{ellc}).
\end{corollary}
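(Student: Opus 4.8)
The plan is to assemble the asymptotic ingredients already established and read off the regime in which the Euclidean gaussian $dp_{G}$ reproduces the full distribution. First I would start from the spherical coordinate representation (\ref{polar}) together with the asymptotic expansion (\ref{asymptotic}), which exhibits $dp(\ell,q|\theta)$ as the Euclidean gaussian $dp_{G}(\ell,q|\theta)$ multiplied by the correction factor $1-\frac{1}{24}\ell^{2}\mathcal{F}(q|\theta)+O(\ell^{4})$. Hence the relative error incurred by replacing $dp$ with $dp_{G}$ is controlled, pointwise in $q$, by $\ell^{2}|\mathcal{F}(q|\theta)|$. The next step is to bound $|\mathcal{F}(q|\theta)|$ uniformly in $q$ by a quantity of the order of the curvature magnitude at $\bar{x}$; since all scalars built from $\bar{R}_{ijkl}$ are comparable, up to dimension-dependent constants, to the scalar curvature $R(\bar{x}|\theta)$, definition (\ref{ellc}) then gives $|\mathcal{F}(q|\theta)|=O(1/\ell_{c}^{2})$.

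Then I would invoke the support estimate for the Euclidean gaussian: from (\ref{SCRGaussianFamily}) one has $\langle\ell^{2}\rangle=n$, so $dp_{G}$ is concentrated on the shell $\ell\sim\sqrt{n}$, exactly as used in the proof of (\ref{eggrenium}). Consequently the correction term in (\ref{asymptotic}) is typically of size $\ell^{2}/\ell_{c}^{2}\sim n/\ell_{c}^{2}$ over the region carrying the probability, and the gaussian approximation is accurate precisely when $n/\ell_{c}^{2}\ll 1$, i.e. $\ell_{c}\gg\sqrt{n}$, which for a fixed statistical manifold $\mathcal{M}$ of dimension $n$ is the criterium $\ell_{c}\gg 1$ of (\ref{criterium}).

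Finally I would record the consistency of this conclusion at the global level: the preceding theorem gives $\mathcal{P}(\theta)\simeq\frac{1}{6}R(\bar{x}|\theta)=1/(6\ell_{c}^{2})$ in the large-$\ell_{c}$ regime, so that the gaussian partition function $\mathcal{Z}(\theta)=\exp[-\mathcal{P}(\theta)]\to 1$ under the same condition $\ell_{c}\gg 1$; this shows that the normalization of the Riemannian gaussian representation (\ref{UGR}) collapses to that of the ordinary Euclidean gaussian in the stated limit, confirming that (\ref{criterium}) is the natural applicability condition.

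The step I expect to be the main obstacle is the uniformity and remainder control in the first paragraph: one must check that the bound on $\mathcal{F}(q|\theta)$ holds uniformly over the compact range of the spherical coordinates $q$, and --- more delicately --- that the $O(\ell^{4})$ remainder in (\ref{asymptotic}) is genuinely dominated by the displayed $\ell^{2}\mathcal{F}$ term throughout the probabilistically relevant range $\ell$ of order $\sqrt{n}$, so that the expansion is asymptotic in the appropriate sense. The dimension-dependent factor $n$ also has to be absorbed into the informal notation $\ell_{c}\gg 1$, which is legitimate only because $n$ is held fixed; phrasing the statement precisely as a limit along a family of distributions with $\ell_{c}\to\infty$ is the one place where a little care beyond routine estimation is needed.
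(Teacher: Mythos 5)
Your proposal is correct and follows essentially the same route as the paper: the paper gives no separate argument for this corollary but reads it off from the proof of the preceding theorem exactly as you do, namely gaussian concentration on the shell $\ell\sim\sqrt{n}$ combined with smallness of the $\tfrac{1}{24}\ell^{2}\mathcal{F}(q|\theta)$ correction in (\ref{asymptotic}) when $\ell_{c}$ is large (and your remark that the honest condition is $\ell_{c}\gg\sqrt{n}$, with $n$ held fixed, is consistent with the paper's later criterion (\ref{crit1}), which retains the factor $n$). The one caveat is your assertion that every scalar built from $\bar{R}_{ijkl}$ is comparable to $R(\bar{x}|\theta)$: this is false for $n\geq 4$ (a Ricci-flat but curved manifold has $R=0$ yet $\mathcal{F}\not\equiv 0$), but it is immaterial to the criterion as the paper uses it, since only the angular average of $\mathcal{F}$ --- which is proportional to $R(\bar{x}|\theta)$ via the volume formula (\ref{AF1}) --- ever enters the estimate of $\mathcal{Z}(\theta)$.
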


\subsection{A simple illustration example}

The major problem of fluctuation geometry is the derivation of the metric tensor $g_{ij}(x|\theta)$ for a given continuous distributions family $dp(x|\theta)$. An amenable treatment of problem (\ref{cov.equation}) is possible for some particular cases, overall, when some type of symmetry is present. This is the case of distributions family discussed in this subsection:
\begin{equation}
dp\left( r,\varphi |\theta \right) =\frac{1}{\mathcal{A}\left( \theta \right) }\exp %
\left[ -\frac{1}{2}r^{2}\right] \frac{rdrd\varphi }{\sqrt{\theta ^{2}+r^{2}}}.
\label{ex.1}
\end{equation}%
The same one is defined on a two-dimensional statistical manifold $\mathcal{M}$ that is expressed in a polar coordinate representation $\mathcal{R}_{\rho}$ with $\rho=(r,\varphi)$, where $0\leq r<+\infty $ and $0\leq \varphi <2\pi $. The normalization function $\mathcal{A}\left( a\right) $ is given by:
\begin{equation}
\mathcal{A}\left( \theta \right) =e^{\frac{1}{2}\theta ^{2}}\pi \sqrt{2\pi }\mathrm{erfc}\left( \frac{\theta }{\sqrt{2}}\right),
\end{equation}
where $\mathrm{erfc}(x)$ is the \emph{complementary error function}:
\begin{equation}\label{erfc}
\mathrm{erfc}(x)=\frac{2}{\sqrt{\pi}}\int_{x}^{+\infty}e^{-z^{2}}dz.
\end{equation}
This distributions family exhibits an axial symmetry in this coordinate representation. It is easy to check that the same one can be expressed into the Riemannian gaussian representation (\ref{UGR}) considering the distance notion:
\begin{equation}\label{ds.ex1}
ds^{2}=dr^{2}+\frac{\theta ^{2}r^{2}}{\theta ^{2}+r^{2}}d\varphi ^{2},
\end{equation}
which is centered at the point $r=0$. Thus, the separation distance $\ell^{2}(r,\varphi)\equiv r^{2}$ and the information potential $\mathcal{S}\left( r,\varphi |\theta \right) $ is given by:
\begin{equation}
\mathcal{S}\left( r,\varphi |\theta\right) =\mathcal{P}\left( \theta\right) -\frac{1}{2}r^{2},
\end{equation}
where $\mathcal{P}\left( \theta \right) $ is the gaussian potential obtained from the gaussian partition function $\mathcal{Z}\left( \theta \right)$:
\begin{equation}
\mathcal{Z}\left( \theta \right) =\frac{1}{2\pi }\theta\mathcal{A}\left( \theta \right)=\sqrt{\pi}e^{\frac{1}{2}\theta ^{2}}\frac{\theta}{\sqrt{2} }\mathrm{erfc}\left( \frac{\theta }{\sqrt{2}}\right) .
\end{equation}
The probability weight $\omega(r,\varphi|\theta)$ and the curvature scalar $R(r,\varphi|\theta)$ associated with the distributions family (\ref{ex.1}) are given by:
\begin{equation}
\omega(r,\varphi|\theta)=\frac{1 }{\mathcal{Z}\left( \theta
\right) }\exp \left[ -\frac{1}{2}r^{2}\right]\mbox{ and }R(r,\varphi|\theta)=\frac{6\theta^{2}}{(\theta^{2}+r^{2})^{2}}.
\end{equation}
Apparently, the distributions family (\ref{ex.1}) can be decomposed into two independent distributions:
\begin{equation}
dp^{\left( 1\right) }\left( r|\theta \right) =\frac{1 }{\mathcal{Z}\left( \theta
\right) }\exp \left[ -\frac{1}{2}r^{2}\right] \frac{rdr}{\sqrt{\theta
^{2}+r^{2}}}\mbox{ and }dp^{\left( 2\right) }\left( \varphi \right) =\frac{1}{2\pi }d\varphi.
\end{equation}
However, such a ``statistical independence'' between the variables $r$ and $\varphi$ is \textit{fictitious decoupling} because of the points $\left( r,\varphi \right) $ with $r=0$ actually correspond to the same point in the statistical manifold $\mathcal{M}$ without mattering about the value of the angle variable $\varphi $. Such an apparent decomposition is a consequence of the non-bijective character of coordinate representation of the manifold $\mathcal{M}$ in terms of polar coordinates $\rho=(r,\varphi)$, which disappears if one considers any coordinate representation of the statistical manifold $\mathcal{M}$. A simple case is the coordinate representation $\mathcal{R}_{\mathbf{x}}$, where $\mathbf{x}=(x,y)$ denotes the cartesian coordinates $x=r\cos\varphi$ and $y=r\sin\varphi$. Thus, the distance notion (\ref{ds.ex1}) can be rewritten as follows:
\begin{equation}
ds^{2}=\frac{x^{2}+\theta ^{2}}{\theta ^{2}+x^{2}+y^{2}}dx^{2}+\frac{xy}{%
\theta ^{2}+x^{2}+y^{2}}2dxdy+\frac{y^{2}+\theta ^{2}}{\theta ^{2}+x^{2}+y^{2}%
}dy^{2},
\end{equation}
while the distributions family (\ref{ex.1}) adopts the following form:
\begin{equation}
dp(x,y|\theta)=\frac{1 }{\mathcal{Z}\left( \theta
\right) }\exp \left[ -\frac{1}{2}(x^{2}+y^{2})\right]\frac{\theta dx dy}{2\pi\sqrt{x^{2}+y^{2}+\theta^{2}}}.
\end{equation}
The cartesian coordinates $\mathbf{x}=(x,y)$ can be regarded as \emph{normal coordinates} at the origin point $(0,0)$, since $g_{ij}(0,0|\theta)=\delta_{ij}$ and $\partial g_{ij}(0,0|\theta)/\partial x^{k}=0$, where $x^{1}=x$ and $x^{2}=y$. Although the small neighborhood of the point $(0,0)$ looks-like a small Euclidean subset, the coordinates $x$ and $y$ exhibits an irreducible coupling due to the presence of the dividend $\sqrt{x^{2}+y^{2}+\theta^{2}}$. The statistical manifold $\mathcal{M}$ exhibits the same differential structure of the two-dimensional real space $\mathbb{R}^{2}$, but its Riemannian structure is different because of $\mathcal{M}$ is a curved manifold. Consequently, distributions family (\ref{ex.1}) exhibits an irreducible statistical dependence.

\begin{figure}
\begin{center}
  \includegraphics[width=5.0in]{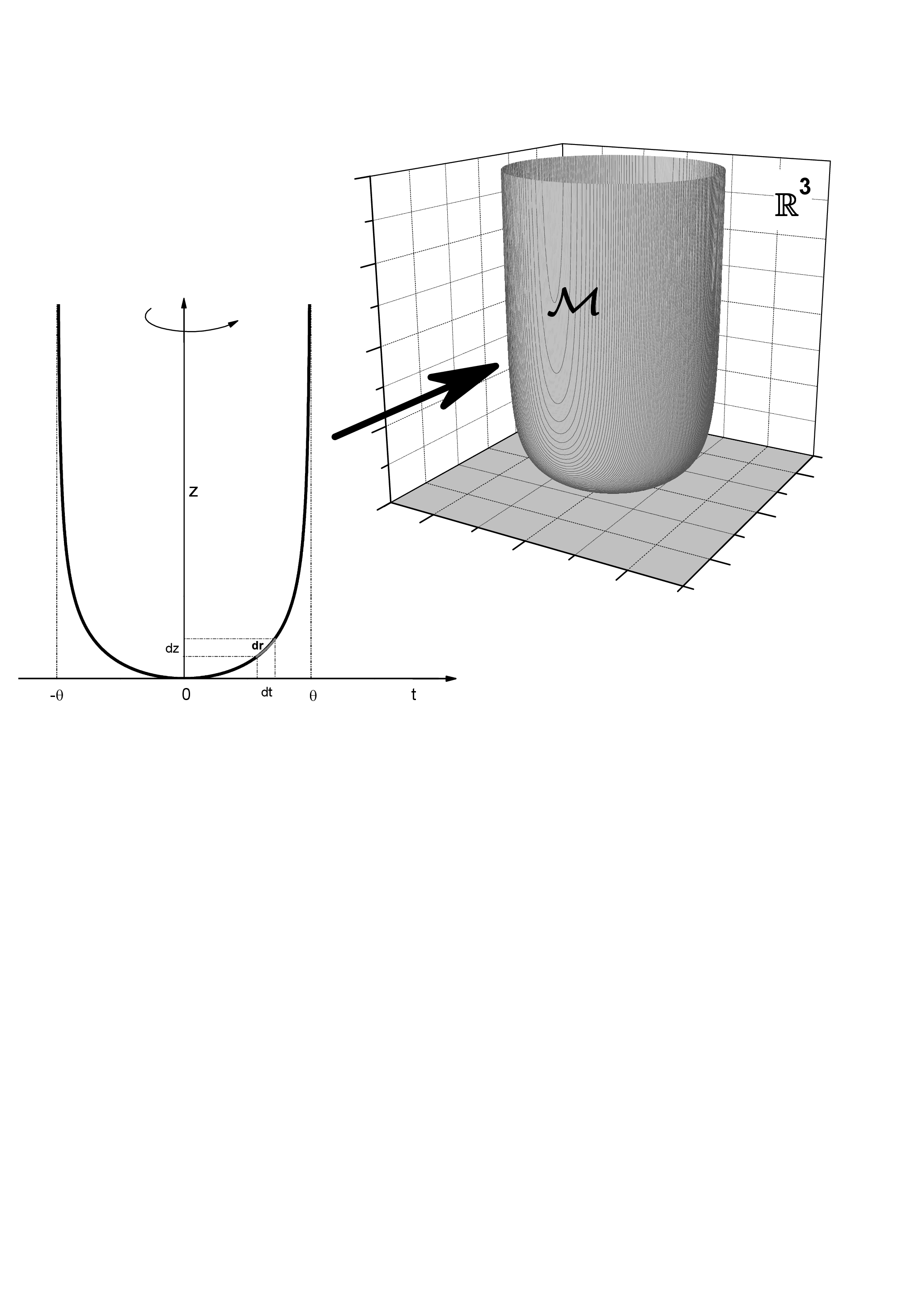}
\end{center}
\caption{The geometry of the statistical manifold $\mathcal{M}$ associated with the distributions family (\ref{ex.1}) is fully equivalent to curved geometry defined on the revolution surface obtained from the dependence $z=z(t)$, which is embedded in the $3$-dimensional real space $\mathbb{R}^{3}$. As expected, this manifold cannot be decomposed into independent manifolds.}\label{surface.eps}
\end{figure}

\begin{figure}
\begin{center}
  \includegraphics[width=4.5in]{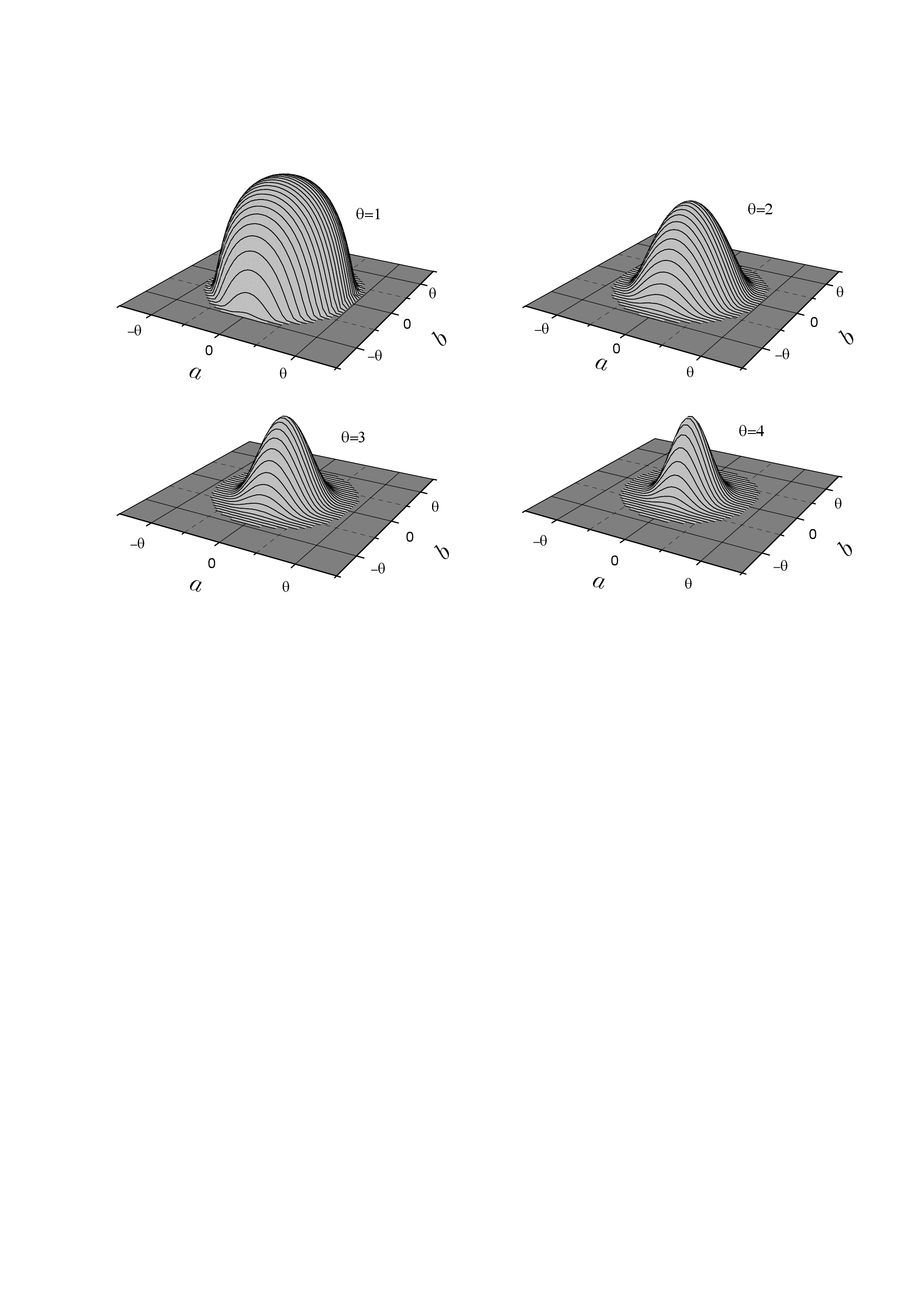}
\end{center}
\caption{Behavior of the probability weight $\omega(a,b|\theta)$ for some values of the control parameter $\theta$. Here, the variables $(a,b)$ are the cartesian coordinates $a=t \cos\varphi$ and $b=t\sin\varphi$. The probability weight $\omega(a,b|\theta)$  behaves as a usual gaussian distribution function when the control parameter $\theta$ is sufficiently large.}\label{weight.eps}
\end{figure}

\begin{figure}
\begin{center}
  \includegraphics[width=3.5in]{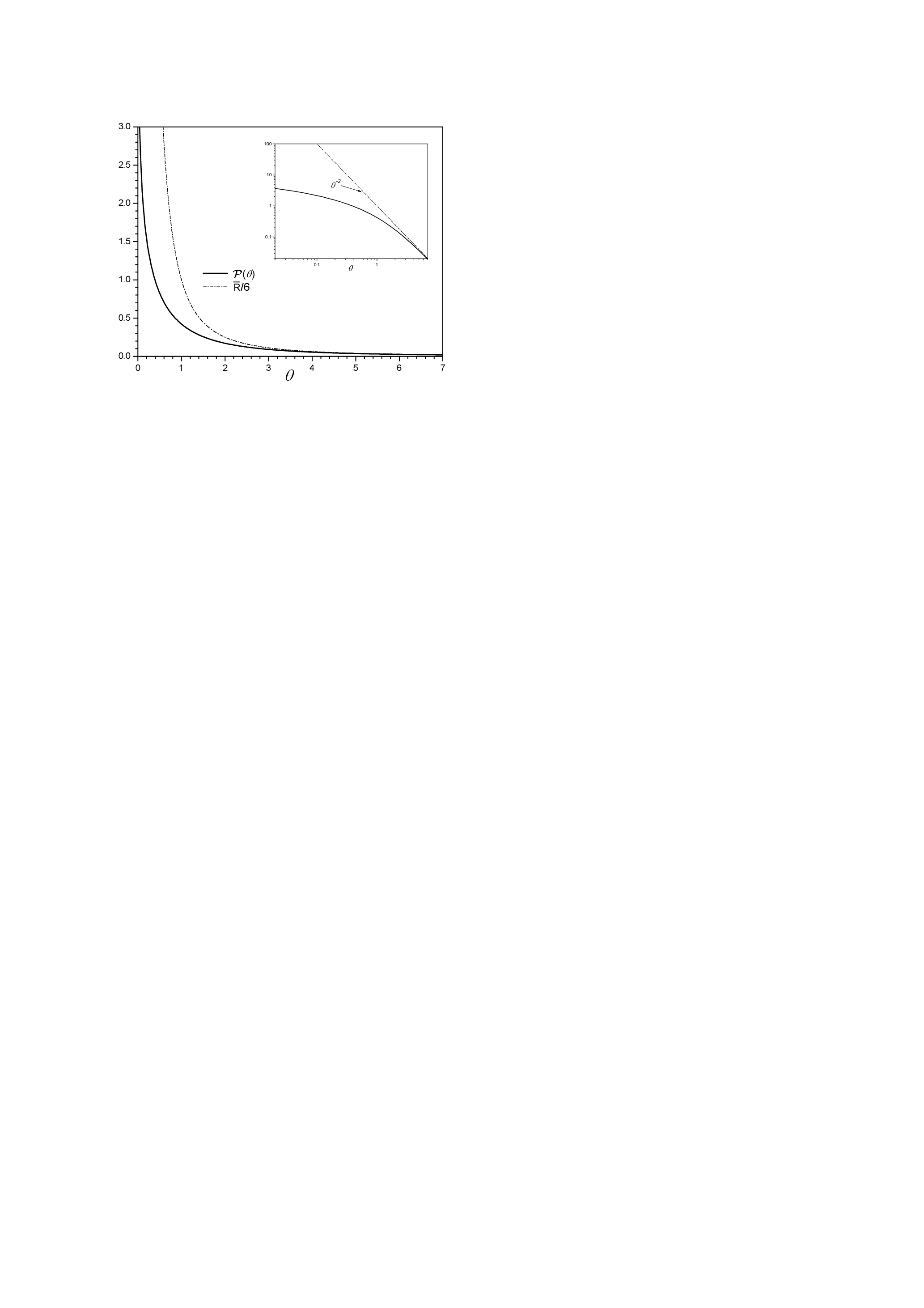}
\end{center}
\caption{Comparison between the gaussian potential $\mathcal{P}(\theta)$ and the sixth part of the central value of the curvature scalar $\bar{R}=R(r=0,\varphi|\theta)=6/\theta^{2}$. As expected, there exist a convergence of these functions for $\theta$ sufficiently large. Inset panel: The same dependencies using a log-log scale to illustrate the asymptotic dependence $1/\theta^{2}$ of the gaussian potential $\mathcal{P}(\theta)$ for large values of control parameter $\theta$.}\label{convergence.eps}
\end{figure}

A visual representation for the statistical manifold $\mathcal{M}$ can be obtained considering the coordinate change $\phi:\mathcal{R}_{\rho}\rightarrow \mathcal{R}_{\tau}$ with $\tau=(t,\varphi)$, which only involves a change in the radial coordinates $r=\theta^{2}t /\sqrt{\theta^{2}-t^{2}}$. The distance notion (\ref{ds.ex1}) is rewritten as:
\begin{equation}
ds^{2}=\frac{\theta^{6}}{(\theta^{2}-t^{2})^{3}}dt^{2}+t^{2}d\varphi^{2},
\end{equation}
while the distributions family:
\begin{equation}
dp(t,\varphi|\theta)=\frac{1 }{\mathcal{Z}\left( \theta
\right) }\exp \left[ -\frac{1}{2}\ell^{2}(t,\varphi)\right]\frac{\theta^{3} t dt d\varphi}{2\pi\sqrt{(\theta^{2}-t^{2})^{3}}},
\end{equation}
where $\ell^{2}(t,\varphi)=\theta^{4}t^{2}/(\theta^{2}-t^{2})$. The points on the one-dimensional sphere $\mathbb{S}^{(1)}$ defined by the curve $t=\theta$ are infinitely separated from the origin point with $t=0$. This region now appears as the boundary of the statistical manifold $\mathcal{M}$ in the coordinate representation $\mathcal{R}_{\tau}$. The radial coordinate $r$ can be regarded as the arc-length of the curve $z(t)$  defined in the plane $(t,z)$ of two-dimensional real space $\mathbb{R}^{2}$. This assumption allows to express the curve $z(t)$ as follows:
\begin{equation}
dz=\sqrt{dr^{2}-dt^{2}}\rightarrow z(t)=\theta f (t/\theta),
\end{equation}
where $f(x)$ is defined as:
\begin{equation}
f(x)=\int^{x}_{0}\sqrt{\frac{1-(1-\zeta^{2})^{3}}{(1-\zeta^{2})^{3}}}d\zeta.
\end{equation}
The rotation of this curve around the axis $z$ generates the \emph{revolution surface} represented in figure \ref{surface.eps}, which is defined in the $3$-dimensional real space $\mathbb{R}^{3}$. Riemannian geometry of the statistical manifold $\mathcal{M}$ is fully equivalent to the curved geometry defined on this revolution surface. For $r$ sufficiently large, the local geometry defined on this surface asymptotically behaves as the Euclidean geometry defined on surface of a cylinder $\mathbb{C}^{(2)}$ with radius $t=\theta$. The cylinder $\mathbb{C}^{(2)}$ is a Euclidean manifold that can be decomposed into the one-dimensional sphere $\mathbb{S}^{(1)}$ and the one-dimensional real space $\mathbb{R}$, $\mathbb{C}^{(2)}=\mathbb{S}^{(1)}\bigotimes \mathbb{R}$. On the other hand, the small neighborhood at the point $t=0$ locally behaves as a small subset of two-dimensional sphere $\mathbb{S}^{(2)}$ with curvature radius $\ell_{c}=1/\sqrt{\bar{R}}=\theta/\sqrt{6}$. Thus, the statistical manifold $\mathcal{M}$ drops to the two-dimensional real space $\mathbb{R}^{2}$ when $\theta\rightarrow \infty$.

The behavior of the probability weight $\omega(a,b|\theta)$ for some values of the control parameter $\theta$ is illustrated in figure \ref{weight.eps}. Here, the variables $(a,b)$ are the cartesian coordinates $a=t \cos\varphi$ and $b=t\sin\varphi$. As expected, the curved character of the manifold $\mathcal{M}$ is significant for small values of the control parameter $\theta$, which manifests in the non-gaussian character of the probability weight $\omega(a,b|\theta)$. Conversely, this function asymptotically behaves as a gaussian distribution for large values of the control parameter $\theta$. The applicability of the estimation formula (\ref{eggrenium}) in this region is clearly evidenced in figure \ref{convergence.eps}, where one observes the convergence of the gaussian potential $\mathcal{P}(\theta)$ and sixth part of the central value of the curvature scalar $\bar{R}/6=1/\theta^{2}$. The curvature radius $\ell_{c}$ for this example is $\ell_{c}=\theta/\sqrt{6}$. Considering the general criterium (\ref{criterium}) for the applicability of estimation formula (\ref{eggrenium}) and the gaussian approximation, one obtains $\ell_{c}\gg 1\rightarrow \theta\gg 2.45$, which is in a good agreement with the convergence observed in figure \ref{convergence.eps}.

\section{Riemannian extension of Einstein's fluctuation theory}\label{Implications}

Inference geometry and fluctuation geometry can be applied to any physical theory with a statistical formulation, such as statistical mechanics and quantum mechanics. In particular, they can be employed to analyze the geometric features of continuous distributions (\ref{BG}) and (\ref{QM.dist}). Inference geometry has been employed in statistical mechanics to study phase transitions \cite{Janke}-\cite{Janyszek}, as well as in the context of thermodynamics geometry \cite{Crooks}. Moreover, inference theory and its geometry have been adapted to the mathematical apparatus of quantum mechanics \cite{Brody1}-\cite{Wootters}. Until now, applications of fluctuation geometry are only focussed on classical statistical mechanics, specifically, in the framework of Riemannian extension of Einstein's fluctuation theory \cite{Vel.GEO,Vel.FG}. Needless to say that potential applications of fluctuation geometry to quantum mechanics represent an attractive field for future developments.

Fluctuation geometry naturally arises as the mathematical apparatus of a Riemannian extension of Einstein fluctuation theory \cite{Vel.GEO}. The term \emph{extension} clarifies that this approach is not a simple application of fluctuation geometry on this physical theory. On the contrary, the existence of fluctuation geometry inspires a re-examination of foundations of Einstein fluctuation theory based on the notions of Riemannian geometry. In this section, let us firstly review the physical foundations and the direct consequences of this geometric development. Afterwards, let us proceed to obtain certain fluctuation theorems and asymptotic formulae based on the second-order geometric expansion of fluctuation geometry.

\subsection{Einstein postulate revisited}
Let us redefine the information potential $\mathcal{S}(x|\theta)$ and the invariant volume element $d\mu(x|\theta)$ using Boltzmann constant $k$ as follows:
\begin{equation}\label{kB.rephrasing}
\mathcal{S}(x|\theta)=k\log\omega(x|\theta)\mbox{ and }d\mu(x|\theta)=\sqrt{\left|\frac{g_{ij}(x|\theta)}{2\pi k}\right|}dx.
\end{equation}
Thus, the invariant form (\ref{DF2}) of the distributions family (\ref{DF}) can be rewritten as follows:
\begin{equation}\label{cov.EP}
dp(x|\theta)=\exp\left[\mathcal{S}(x|\theta)/k\right]d\mu(x|\theta).
\end{equation}
This expression represents a \emph{covariant extension of Einstein's postulate} of classical fluctuation theory \cite{Reichl}, where the information potential $\mathcal{S}(x|\theta)$ has been identified with the \emph{thermodynamic entropy of closed system} (up to the precision of an additive constant). Hereinafter, the coordinates $x=(x^{1}, x^{2},\ldots,x^{n})$ are the relevant macroscopic observables of the closed system, e.g.: the internal energy $U$, the volume $V$, the total angular momentum $\mathbf{M}$, the magnetization $\mathcal{M}$, etc. Moreover, $\theta$ represents the set of control parameters of the given situation of thermodynamic equilibrium. The metric tensor $g_{ij}(x|\theta)$ of fluctuation geometry:
\begin{equation}\label{FG.RFT}
g_{ij}(x|\theta)=-D_{i}D_{j}\mathcal{S}(x|\theta)=-\partial_{i}\partial_{i}\mathcal{S}(x|\theta)+\Gamma^{k}_{ij}(x|\theta)\partial_{k}\mathcal{S}(x|\theta)
\end{equation}
establishes a constraint between the entropy $\mathcal{S}(x|\theta)$ and the metric tensor $g_{ij}(x|\theta)$ of the abstract manifold $\mathcal{M}$ of macroscopic observables $x$. Relations (\ref{cov.EP}) and (\ref{FG.RFT}) were early proposed in Ref.\cite{Vel.GEO}. Let us now refer to the physical foundations that justify their introduction.

According to the analogy between classical statistical mechanics and quantum mechanics \cite{Vel.CSM}, the thermodynamic entropy $\mathcal{S}(x|\theta)$ appears as a counterpart of classical action $S(\mathbf{q},t)$. For any physical theory with a geometric formulation, the classical action $S(\mathbf{q},t)$ is invariant function under certain symmetric transformations, e.g.: the general symmetries of space-time. By analogy, the thermodynamic entropy $\mathcal{S}(x|\theta)$ should exhibit similar symmetric properties. As already assumed in this approach, the state of a closed system is associated with a point $x$ in the abstract manifold $\mathcal{M}$ of macroscopic observables. Although one has to chose a coordinate representation $\mathcal{R}_{x}$ to describe the manifold $\mathcal{M}$, the physical properties of the closed system should not depend on this choice. Noteworthy that this property represents a sort of \emph{relativity principle} for classical statistical mechanics, which is identified with the \emph{requirement of general covariance} of fundamental laws of physics. In thermodynamics, the entropy $\mathcal{S}(x|\theta)$ is a \emph{state function}\footnote{State function: a property of a system that depends only on the current state of the system, not on the way in which the system acquired that state.}, and hence, the same one should behave as a \emph{scalar function}:
\begin{equation}\label{scalar}
\check{\mathcal{S}}(\check{x}|\theta)=\mathcal{S}(x|\theta)
\end{equation}
under any coordinate change $\phi:\mathcal{R}_{x}\rightarrow\mathcal{R}_{\check{x}}$. According to the original mathematical form of Einstein's postulate \cite{Reichl}:
\begin{equation}\label{EP}
dp(x|\theta)=A\exp\left[S(x|\theta)/k\right]dx,
\end{equation}
the entropy $S(x|\theta)$ must obey the following transformation rule:
\begin{equation}
\check{S}(\check{x}|\theta)=S(x|\theta)-k\log\left|\partial \check{x}/\partial x\right|,
\end{equation}
with $\left|\partial \check{x}/\partial x\right|$ being the Jacobian of the coordinate change. While expression (\ref{EP}) is incompatible with the scalar character of the entropy (\ref{scalar}), this requirement is satisfied by the covariant extension (\ref{cov.EP}). However, this generalization implies that fluctuating behavior will also depend on the metric tensor $g_{ij}(x|\theta)$. Hypothesis (\ref{FG.RFT}) establishes a constraint between the metric tensor $g_{ij}(x|\theta)$ and the entropy $\mathcal{S}(x|\theta)$. Thus, the knowledge of the entropy $\mathcal{S}(x|\theta)$ fully determines the fluctuating behavior of the closed system. The introduction of this second hypothesis is not arbitrary \cite{Vel.GEO}. The metric tensor definition (\ref{FG.RFT}) guarantees the matching of the present formulation with \emph{Ruppeiner geometry of thermodynamics} \cite{Ruppeiner1,Ruppeiner2}. Expression (\ref{FG.RFT}) represents a convenient generalization for the \emph{thermodynamic metric tensor}:
\begin{equation}\label{thermo.tensor}
g^{R}_{ij}(\bar{x})=-\frac{\partial^{2}S(\bar{x}|\theta)}{\partial x^{i}\partial x^{j}},
\end{equation}
which is introduced in the framework of gaussian approximation of Einstein's fluctuation theory. The ordinary second partial derivatives $\partial_{i}\partial_{j}a(x)\equiv\partial^{2}a(x)/\partial x^{i}\partial x^{j}$ of a scalar function as the entropy do not represent tensorial quantities of any kind for an arbitrary point $x\in\mathcal{M}$. A remarkable exception takes place at the point $\bar{x}$ where the entropy reaches its maximum value, where these quantities behave as the components of a second-rank covariant tensor. This exception was considered by Ruppeiner \cite{Ruppeiner1} to introduce the metric tensor (\ref{thermo.tensor}). Remarkably, a more convenient metric tensor $g_{ij}(x|\theta)$ can be introduced for any point $x\in\mathcal{M}$ replacing the ordinary partial derivatives $\partial_{i}$ by the covariant differentiation $D_{i}$. Unfortunately, there is a cost to pay for this generalization: definition (\ref{FG.RFT}) actually represents a \emph{set of first-order covariant differential equations} to obtain the metric tensor $g_{ij}(x|\theta)$ from the entropy $\mathcal{S}(x|\theta)$. This problem is explicitly \emph{nonlinear} and difficult to solve in most of practical situations.

\subsection{Direct implications}

Hypotheses (\ref{cov.EP}) and (\ref{FG.RFT}) lead to a \emph{geometric reinterpretation} of macroscopic behavior of the closed system, where fluctuation geometry appears as the mathematical apparatus. For example, Riemannian gaussian representation (\ref{UGR}) of distribution (\ref{cov.EP}):
\begin{equation}\label{UGRTh}
dp(x|\theta)=\frac{1}{\mathcal{Z}(\theta)}\exp\left[-\ell^{2}_{\theta}(x,\bar{x})/2k\right]d\mu(x|\theta)
\end{equation}
constitutes \emph{an exact improvement} of gaussian approximation of Einstein's fluctuation theory \cite{Reichl}:
\begin{equation}\label{gaussian.approx}
dp(x|\theta)\simeq\exp\left[-g^{R}_{ij}(\bar{x})\Delta x^{i}\Delta x^{j}/2k\right]\sqrt{\left|g^{R}_{ij}(\bar{x})/2\pi k\right|}dx,
\end{equation}
where $\Delta x^{i}=x^{i}-\bar{x}^{i}$ and $g^{R}_{ij}(\bar{x})$ is the metric tensor of Ruppeiner geometry (\ref{thermo.tensor}). Accordingly, the distance notion $\ell^{2}_{\theta}(x,\bar{x})$ associated with the metric tensor $g_{ij}(x|\theta)$ quantifies the \emph{occurrence probability} of an spontaneous deviation of the system from the state of thermodynamic equilibrium $\bar{x}$, that is, the state with maximum entropy $\mathcal{S}(x|\theta)$. By itself, equation (\ref{UGRTh}) clarifies that \emph{the study of thermo-statistical properties of a given closed system is reduced to the analysis of geometric features of the abstract manifold} $\mathcal{M}$.

The covariant components $\eta_{i}(x|\theta)$ of the vector field defined from the entropy:
\begin{equation}
\eta_{i}(x|\theta)=D_{i}\mathcal{S}(x|\theta)\equiv\partial\mathcal{S}(x|\theta)/\partial x^{i}
\end{equation}
are hereinafter referred to as the \emph{generalized restituting forces}. In non-equilibrium thermodynamics, such forces appear in the phenomenological equations \cite{Reichl}:
\begin{equation}
\frac{d}{dt}x^{i}(t)=L^{ij}\eta_{i}\left[x(t)|\theta\right]
\end{equation}
describing the relaxation dynamics of a closed system towards the state of equilibrium $\bar{x}$, with $L^{ij}$ being the \emph{matrix of transport coefficients}. According to identity (\ref{Sdecomposition}), the generalized restituting forces $\eta_{i}(x|\theta)$ are related to the entropy $\mathcal{S}(x|\theta)$ as follows:
\begin{equation}
\mathcal{P}(\theta)=\mathcal{S}(x|\theta)+\frac{1}{2}\eta^{2}(x|\theta),
\end{equation}
where $\mathcal{P}(\theta)$ is gaussian potential expressed in units of Boltzmann's constant $k$:
\begin{equation}
\mathcal{P}(\theta)=-k\log \mathcal{Z}(\theta).
\end{equation}
Considering the vanishing of the generalized restituting forces at the equilibrium state $\bar{x}$:
\begin{equation}
\eta_{i}(\bar{x}|\theta)=0,
\end{equation}
one realizes that the gaussian potential is simply the maximum value of entropy:
\begin{equation}
\mathcal{S}(\bar{x}|\theta)\equiv\mathcal{P}(\theta).
\end{equation}
According to the identity (\ref{ident.HJeq}), the generalized restituting forces $\eta_{i}(x|\theta)$ are also related to the separation distance $\ell_{\theta}(x,\bar{x})$ as follows:
\begin{equation}\label{ident.genforsep}
\eta^{2}(x|\theta)=\ell^{2}_{\theta}(x,\bar{x}).
\end{equation}
Considering the following expression:
\begin{equation}\label{deltaS}
\delta\mathcal{S}(x|\theta)=\mathcal{S}(x|\theta)-\mathcal{S}(\bar{x}|\theta)=-\eta^{2}(x|\theta)/2\equiv-\ell^{2}_{\theta}(x,\bar{x})/2,
\end{equation}
the quantities $\eta_{i}(x|\theta)$ and $\ell_{\theta}(x,\bar{x})$ characterize the deviation of the
entropy of the closed system $\mathcal{S}(x|\theta)$ from its maximum value $\mathcal{S}(\bar{x}|\theta)$.

\textbf{Theorem 3} obtained in Ref.\cite{Vel.FG} guarantees the existence and uniqueness of equilibrium state $\bar{x}$ of the closed system. This fact is a direct consequence of the vanishing of the probability weight $\omega(x|\theta)$ on the boundary of the manifold $\partial\mathcal{M}$ and the concavity of the thermodynamic entropy $\mathcal{S}(x|\theta)$ associated with definition (\ref{FG.RFT}). The vanishing of the probability weight $\omega(x|\theta)$ is associated with \textbf{Axiom 4} of fluctuation geometry \cite{Vel.FG}, which establishes the vanishing of the probability density $\rho(x|\theta)$ at the boundary points. Noteworthy that this condition is a common feature of distribution functions in classical statistical mechanics\footnote{A typical example is the equilibrium distribution function:
 \[dp(E_{A},V_{A}|E_{T},V_{T})=C\Omega_{A}(E_{A},V_{A})\Omega_{B}(E_{T}-E_{A},V_{T}-V_{A})dE_{A}dV_{A}, \]
which corresponds to two separable short-range interacting systems $A$ and $B$ with additive total energy $E_{T}=E_{A}+E_{B}$ and volume $V_{T}=V_{A}+V_{B}$. Here, $\Omega_{A}$ and $\Omega_{B}$ are the densities of states of each system, while $C$ is the normalization constant. Noteworthy that this distribution vanishes at the boundary of the intervals $\min(E_{A})\leq E_{A}\leq E_{T}-\min(E_{B})$ and $\min(V_{A})\leq V_{A}\leq V_{T}-\min(V_{B})$ because of the density of states of classical systems vanishes as $\Omega(E,V)\propto (E-E_{min})^{\alpha}(V-V_{min})^{\gamma}$ with positive exponents $\alpha$ and $\gamma$ when the energy $E$ and volume $V$ approach their minimum values.}. According to equation (\ref{UGRTh}), the vanishing of the probability weight $\omega(x|\theta)$ at the boundary $\partial\mathcal{M}$ also implies that any boundary point $x_{b}\in\partial\mathcal{M}$ is infinitely far from the equilibrium state $\bar{x}$, $\ell_{\theta}(x_{b},\bar{x})=+\infty$.

\subsection{Invariant fluctuation theorems}

Conventionally \cite{Reichl}, results of Einstein's fluctuation theory involve expectation values such as the macroscopic observables $\left\langle x^{i}\right\rangle$, their self-correlation functions $\left\langle \delta x^{i}\delta x^{j}\right\rangle$, etc. However, these expectation values crucially depend on the coordinate representation $\mathcal{R}_{x}$ employed to describe the abstract manifold $\mathcal{M}$. In the present Riemannian approach, one is interested on the calculation of the expectation values of \emph{scalar functions} $a(x|\theta)$:
\begin{equation}
\left\langle a(x|\theta) \right\rangle=\int_{\mathcal{M}}a(x|\theta)dp(x|\theta).
\end{equation}
Fluctuation relations that involve this type of expectation values can be referred to as \emph{invariant fluctuation theorems}.

An important case of invariant fluctuation theorem is the following identity:
\begin{equation}\label{fluct.rel}
\left\langle kD_{i}w^{i}(x|\theta)+\eta_{i}(x|\theta)w^{i}(x|\theta)\right\rangle=0.
\end{equation}
Here, $w^{i}(x|\theta)$ denotes the contravariant components of a differentiable vector field $\mathbf{w}$ with a well-defined expectation value $\left\langle\mathbf{\eta}\cdot\mathbf{w}\right\rangle=\left\langle\eta_{i}(x|\theta)w^{i}(x|\theta)\right\rangle$. To proceed the demonstration of this identity, let us introduce the contravariant components $v^{i}(x|\theta)$ of the auxiliary vector field $\mathbf{v}$:
\begin{equation}\label{aux.vector}
\upsilon^{i}(x|\theta)=\frac{1}{\mathcal{Z}(\theta)}\exp\left[-\eta^{2}(x|\theta)/2k\right]w^{i}(x|\theta).
\end{equation}
Noteworthy that the factor:
\begin{equation}
\frac{1}{\mathcal{Z}(\theta)}\exp\left[-\eta^{2}(x|\theta)/2k\right]\equiv\frac{1}{\mathcal{Z}(\theta)}\exp\left[-\ell^{2}_{\theta}(x|\bar{x})/2k\right]
\end{equation}
is simply the probability weight $\omega(x|\theta)$ of the distribution function (\ref{UGRTh}). It is presence here guarantees the exponential vanishing of the vector field $\mathbf{v}$ on the boundary $\partial\mathcal{M}$. By definition, the divergence of the vector field $\mathbf{v}$ is expressed throughout the covariant differentiation $D_{i}$ as:
\begin{equation}
\mathbf{\mathrm{div}}(\mathbf{v})\equiv D_{i}\upsilon^{i}(x|\theta).
\end{equation}
Considering definition (\ref{aux.vector}), this last expression can also be rewritten as follows:
\begin{eqnarray}\label{divvv}
\mathbf{\mathrm{div}} (\mathbf{v}) = \frac{1}{\mathcal{Z}(\theta)}\exp\left[-\eta^{2}(x|\theta)/2k\right]\left[D_{i}w^{i}(x|\theta)
+\frac{1}{k}\eta_{i}(x|\theta)w^{i}(x|\theta)\right].
\end{eqnarray}
Here, it was considered the relation:
\begin{equation}
D_{i}\eta^{2}(x|\theta)=-2\eta_{i}(x|\theta),
\end{equation}
which follows from the expression $\eta^{2}(x|\theta)=g^{ij}(x|\theta)\eta_{i}(x|\theta)\eta_{j}(x|\theta)$ and the identities:
\begin{equation}
D_{i}\eta_{j}(x|\theta)=D_{i}D_{j}\mathcal{S}(x|\theta)=-g_{ij}(x|\theta)\mbox{ and }D_{i}g^{jk}(x|\theta)=0.
\end{equation}
Result (\ref{fluct.rel}) is obtained from equation (\ref{divvv}) by performing the volume integration over the manifold $\mathcal{M}$. Considering the divergence theorem:
\begin{equation}\label{div.theorem}
\int_{\mathcal{A}} \mathbf{\mathrm{div}} (\mathbf{v})d\mu=\oint_{\partial\mathcal{A}}\mathbf{v}\cdot d\mathbf{\Sigma},
\end{equation}
one verifies the vanishing of the volume integral over the divergence $\mathbf{\mathrm{div}} (\mathbf{v})$. Precisely, the surface integral in equation (\ref{div.theorem}) vanishes when the subset $\mathcal{A}\subset\mathcal{M}$ is extended to the manifold $\mathcal{M}$. This is a consequence of the exponential vanishing of the auxiliary vector field $\mathbf{v}$ on the boundary $\partial\mathcal{M}$.

Invariant fluctuation theorem (\ref{fluct.rel}) allows us to obtain other invariant fluctuation relations. Let us consider the vector field associated with the generalized restituting forces, $w^{i}(x|\theta)=\eta^{i}(x|\theta)=g^{ij}(x|\theta)\eta_{j}(x|\theta)$. One obtains by direct differentiation the following relation:
\begin{equation}
D_{i}\eta^{i}(x|\theta)=-n,
\end{equation}
with $n$ being the dimension of the manifold $\mathcal{M}$. Combining this result with identity (\ref{fluct.rel}), one obtains the expectation value of the square of restituting generalized forces:
\begin{equation}\label{fluct.rel1}
\left\langle\eta^{2}(x|\theta)\right\rangle=nk.
\end{equation}
Considering the identities (\ref{ident.genforsep}) and (\ref{deltaS}), one obtains the expectation values:
\begin{equation}\label{fluct.rel2}
\left\langle\ell^{2}_{\theta}(x|\bar{x})\right\rangle=nk\mbox{ and }\left\langle\delta\mathcal{S}(x|\theta)\right\rangle= -nk/2.
\end{equation}
The invariant fluctuation relation (\ref{fluct.rel1}) admits the following generalization:
\begin{equation}\label{other.fluct}
\left\langle\left[\eta^{2}(x|\theta)\right]^{s}\right\rangle=(2k)^{s}\frac{\Gamma\left(s+\frac{n}{2}\right)}{\Gamma\left(\frac{n}{2}\right)},
\end{equation}
where $s$ is a positive integer and $\Gamma(x)$ is the gamma function. Considering the vector field $w^{i}(x|\theta)=\left[\eta^{2}(x|\theta)\right]^{s-1}\eta^{i}(x|\theta)$ into the identity (\ref{fluct.rel}), one obtains the following recurrence equation:
\begin{equation}\label{recurrence}
\left\langle\left[\eta^{2}(x|\theta)\right]^{s}\right\rangle=2k\left(s-1+\frac{n}{2}\right)
\left\langle\left[\eta^{2}(x|\theta)\right]^{s-1}\right\rangle.
\end{equation}
Identity (\ref{other.fluct}) is obtained as solution of equation (\ref{recurrence}) considering the particular case (\ref{fluct.rel1}) with $s=1$ and the known property of the gamma function $\Gamma(x+1)=x\Gamma(x)$.

\subsection{Asymptotic formulae}

Gaussian distribution (\ref{gaussian.approx}) constitutes a good approximation for the fluctuating behavior of thermodynamic systems with a very large number of constituents \cite{Reichl}. However, gaussian approximation fails during the occurrence of phase transitions and critical phenomena. Moreover, this distribution is unable to describe the fluctuating behavior of \emph{non-extensive systems}, such as the mesoscopic systems and the systems with long-range interactions. The macroscopic properties of these systems are highly driven by correlations that involve all system constituents. The curvature tensor of fluctuation geometry allows a better description of these situations.

Let us start this analysis considering the case of closed systems. The normalization constant $A$ that appears in Einstein's postulate (\ref{EP}) is omitted in its covariant generalization (\ref{cov.EP}). This convection guarantees the vanishing of the equilibrium value of entropy $\mathcal{S}(\bar{x}|\theta)$ if the manifold $\mathcal{M}$ is Euclidean. According to \textbf{Theorem 3} obtained in the previous section, the entropy $\mathcal{S}(\bar{x}|\theta)$ evaluated at the equilibrium state $\bar{x}$ is estimated as follows:
\begin{equation}\label{SR}
\mathcal{S}(\bar{x}|\theta)\simeq k^{2}R(\bar{x}|\theta)/6
\end{equation}
if the curvature scalar $R(\bar{x}|\theta)$ is sufficiently small. The estimation formula (\ref{SR}) considers the contribution of the second-order geometric expansion (\ref{asymptotic}) of the exact distribution function (\ref{UGRTh}). The curvature scalar $R(\bar{x}|\theta)$ allows to introduce a criterium for the applicability of the gaussian approximation (\ref{gaussian.approx}). From a geometrical viewpoint, a relevant statistical notion here is the \emph{curvature radius} $\ell_{c}=1/\sqrt{R(\bar{x}|\theta)}$. The curvature radius defines a $(n-1)$-sphere $\mathbb{S}^{(n-1)}(\bar{x}|\ell_{c})$ centered at equilibrium state $\bar{x}$ where gaussian approximation (\ref{gaussian.approx}) is applicable:
\begin{equation}
\ell^{2}_{\theta}(x,\bar{x})<\ell^{2}_{c}.
\end{equation}
Accordingly, gaussian approximation (\ref{gaussian.approx}) fully describes the system fluctuating behavior if the square of the curvature radius $\ell^{2}_{c}$ is larger than the the expectation value of the square separation distance $\left\langle\ell^{2}_{\theta}(x,\bar{x})\right\rangle$. This condition can be expressed in terms of the curvature scalar as follows:
\begin{equation}\label{crit1}
nkR(\bar{x}|\theta)< 1.
\end{equation}
Alternatively, the licitness (or failure) of gaussian approximation (\ref{gaussian.approx}) can be characterized in terms of the \emph{correlation length} $\xi$ (do not confuse this quantity with a complete set of random quantities $\xi$). For example, let us consider a system of volume $V$ near a critical point. Denoting by $d$ the spatial dimensionality of the system, gaussian approximation is applicable if the \emph{correlation volume} $v_{c}=\xi^{d}$ is smaller than the volume of the system $V$:
\begin{equation}\label{crit2}
\xi^{d}\ll V.
\end{equation}
Although the curvature scalar $R(\bar{x}|\theta)$ and the correlation length $\xi$ are different concepts, they could be associated in some way\footnote{In the framework of statistical thermodynamics, the curvature scalar $R$ of inference geometry is related to the correlation length $\xi$ by the following asymptotic expression $R\sim \xi^{d}$ \cite{Janyszek}. This type of relationship is referred to as a \emph{hyperscaling relation} in the theory of critical phenomena. It is natural to expect that an analogous result should exist for Riemannian extension of Einstein's fluctuation theory.}.

Let us now consider the case of open systems. Most of applications of statistical mechanics refer to systems that are found under the thermodynamic influence of the natural environment. Conventionally, such equilibrium situations are described within Boltzmann-Gibbs distributions (\ref{BG}). These statistical ensembles can be derived from Einstein's postulate (\ref{EP}) or its generalization (\ref{cov.EP}) as a particular asymptotic case, specifically, when the internal thermodynamic state of the environment is unaffected by the influence of the system. Although the results derived for these equilibrium situations have not a general applicability, they can be useful for practical purposes. Considering the invariant volume element (\ref{kB.rephrasing}), the statistical ensemble (\ref{BG}) can be rephrased as follows:
\begin{equation}
dp(x|\theta)=\frac{1}{Z(\theta)}\exp\left\{\left[-\theta_{i}x^{i}+s(x|\theta)\right]/k\right\}d\mu(x|\theta),
\end{equation}
where the coordinates $x=(U,O)$ are the internal energy $U$ and the generalized displacements $O=(V,\mathbf{M},\mathcal{M},\ldots)$, while the control parameters $\theta=(1/T, w/T)$ are the inverse temperature and the ratio among the generalized forces $w=(p, -\omega, -\mathbf{H},\ldots)$ and the temperature $T$. Hereinafter, the scalar function $s(x|\theta)$ is referred to as the \emph{entropy of the open system}. This function is directly associated with the density of states $\Omega(x)$ via the metric tensor $g_{ij}(x|\theta)$:
\begin{equation}
\exp\left[s(x|\theta)/k\right]\sqrt{\left|\frac{g_{ij}(x|\theta)}{2\pi k}\right|}\equiv \Omega(x).
\end{equation}
Noteworthy that the entropy $s(x|\theta)$ is not an intrinsic property of the open system. Certainly, this entropy also depends on the metric tensor $g_{ij}(x|\theta)$, which accounts for the underlying environmental influence. Formally speaking, the entropy $\mathcal{S}(x|\theta)$ of the closed system (open system + environment) can be expressed as follows:
\begin{equation}
\mathcal{S}(x|\theta)\equiv P(\theta)-\theta_{i}x^{i}+s(x|\theta),
\end{equation}
with $P(\theta)$ being the \emph{Planck thermodynamic potential}:
\begin{equation}
P(\theta)=-k\log Z(\theta).
\end{equation}
Direct application of the estimation formula (\ref{SR}) yields the following result:
\begin{equation}\label{PP.Leg}
P(\theta)\simeq\theta_{i}\bar{x}^{i}-s(\bar{x}|\theta)+k^{2}R(\bar{x}|\theta)/6.
\end{equation}
This last formula exhibits a very simple interpretation. Gaussian or zeroth-order approximation:
\begin{equation}\label{Legendre}
P(\theta)\simeq \bar{P}(\theta)=\theta_{i}\bar{x}^{i}-s(\bar{x}|\theta).
\end{equation}
is just the known \emph{Legendre transformation} that estimates the Planck thermodynamic potential $P(\theta)$ from the entropy of the open system $s(x|\theta)$. The curvature scalar $R(\bar{x}|\theta)$ introduces a correction of second-order of this transformation.

\section{Final remarks}\label{final}

Fluctuation geometry is a mathematical approach that establishes a direct correspondence among the statistical properties of a family of continuous distributions (\ref{DF}) and the notions of Riemannian geometry. In particular, the distance notion (\ref{fluct.dist}) of fluctuation geometry provides an invariant measure of the occurrence probability. Moreover, the curvature tensor of the manifold $\mathcal{M}$ accounts for the existence of irreducible statistical correlations. In accordance with asymptotic formula (\ref{asymptotic}), this geometric notion also quantifies the deviation of a given distribution function from the properties of gaussian distributions. The present geometric approach enable us to obtain information about the statistical models without special reference to any coordinate representation of the manifold $\mathcal{M}$, that is, to perform a coordinate-free treatment.

The possibility to perform a coordinate-free treatment is closely related to the requirement of general covariance of physical theories such as general relativity. Since the statistical correlations can be related to \emph{effective physical interactions}, the curvature tensor of fluctuation geometry can represent a fundamental tool in any physical theory with a statistical formulation. In particular, this notion plays a relevant role in the framework of Riemannian extension of Einstein's fluctuation theory \cite{Vel.GEO,Vel.FG}. This development leads to a geometric reinterpretation of thermo-statistical properties of a closed system. The curvature tensor of fluctuation geometry has been employed to introduce a criterion (\ref{crit1}) for the licitness of gaussian approximation. For the case of open systems, the same analysis allows to obtain the asymptotic formula (\ref{PP.Leg}), where curvature scalar introduces a second-order correction in Legendre transformation (\ref{Legendre}) between thermodynamic potentials. Some other results obtained in this work are the invariant fluctuation theorems, in particular, the general identity (\ref{fluct.rel}) and their associated fluctuation relations (\ref{fluct.rel1})-(\ref{other.fluct}).

Before to end this section, let us summarize some open problems of fluctuation geometry with a special mathematical and physical interest:
\begin{enumerate}
  \item A relevant question is to clarify how deep is the analogy with fluctuation geometry and inference geometry \cite{Vel.FG}. In particular, it is worth analyzing the possible relevance of a Riemannian gaussian representation:
  \begin{equation}
  dQ(\vartheta|\theta)=\frac{1}{z(\theta)}\exp\left[-\ell^{2}(\vartheta,\theta)/2\right]d\mu(\vartheta)
  \end{equation}
  for the distribution function $dQ(\vartheta|\theta)$ of the efficient unbiased estimators. Here, $\ell(\vartheta,\theta)$ denotes the separation distance (the arc-length of the geodesics that connects the points $\vartheta$ and $\theta\in\mathcal{P}$) associated with the distance notion of inference geometry (\ref{inf.dist}). Moreover, the quantity $d\mu(\varphi)=\sqrt{|\mathfrak{g}_{\alpha\beta}(\vartheta)/2\pi|}d\vartheta$ denotes the invariant volume element of the Riemannian manifold $\mathcal{P}$, while $z(\theta)$ is a normalization constant.

  \item Some concepts of fluctuation geometry can be useful in problems of statistical estimation, e.g.: the notion of diffeomorphic representations of a given abstract distributions family (see in \textbf{Example \ref{diffeormorphic}}). A simple argument is that some coordinate representations of a distribution function exhibit a more convenient mathematical form than other coordinate representations. This feature can be employed to build statistical estimators $\hat{\theta}$ for the control parameters $\theta$.

  \item Fluctuation geometry can be applied to continuous distribution functions of quantum mechanics, as the example of equation (\ref{QM.dist}). I think that the role of curvature as a measure of irreducible statistical correlations could be useful to characterize some quantum behaviors such as entanglement and non-locality \cite{APeres}.

  \item Some consequences of Riemannian extension of Einstein's fluctuation theory could be tested in some concrete models, e.g.: the asymptotic formula (\ref{PP.Leg}). A special interest deserves those systems that undergo the occurrence of phase transitions and critical phenomena, where gaussian approximation of Einstein fluctuation theory is expected to fail \cite{Reichl}.

\end{enumerate}

\section*{Acknowledgements}
Velazquez thanks the financial support of CONICyT/Programa Bicentenario de Ciencia y Tecnolog\'{\i}a PSD
\textbf{65} (Chilean agency).

\section*{References}

\end{document}